\newcommand{\myr}{\color{black}}
\definecolor{mitr}{RGB}{117,0,20}
\definecolor{mitSG}{RGB}{139,149,158}
\newtheorem{thm}{Theorem}
\newtheorem{cor}{Corollary}
\newtheorem{lem}{Lemma}
\newtheorem{rem}{Remark}
\newtheorem{prop}{\it Property}
\newtheorem{prob}{Problem}
\newtheorem{ass}{Assumption}
\newtheorem{obj}{Objective}
\newcommand{\col}{\mbox{col}}
\begin{document}
\title{A nonparametric learning framework for nonlinear robust output regulation}

\author{Shimin~Wang, Martin~Guay, Zhiyong Chen, Richard D. Braatz

\thanks{{\myr This research was supported by the U.S. Food and Drug Administration under the FDA BAA-22-00123 program, Award Number 75F40122C00200.}
The work is also supported by NSERC. 

Shimin Wang and Richard D. Braatz are with Massachusetts
Institute of Technology, USA.

Martin Guay is with Queen's University, Kingston, Ontario, K7L 3N6, Canada. 

Zhiyong Chen is with The University of Newcastle, Callaghan, NSW 2308, Australia.

E-mail: bellewsm@mit.edu, braatz@mit.edu, guaym@queensu.ca, zhiyong.chen@newcastle.edu.au.
 }}

\maketitle

\begin{abstract}\myr
A nonparametric learning solution framework is proposed for the global nonlinear robust output regulation problem. 
%
%
%
%
We first extend the assumption that the steady-state generator is linear in the exogenous signal to the more relaxed assumption that it is polynomial in the exogenous signal.
Additionally, a nonparametric learning framework is proposed to eliminate the construction of an explicit regressor, as required in the adaptive method, which can potentially simplify the implementation and reduce the computational complexity of existing methods.
With the help of the proposed framework, the robust nonlinear output regulation problem can be converted into a robust non-adaptive stabilization problem for the augmented system with integral input-to-state stable (iISS) inverse dynamics. 
Moreover, a dynamic gain approach can adaptively raise the gain to a sufficiently large constant to achieve stabilization without requiring any a priori knowledge of the uncertainties appearing in the dynamics of the exosystem and the system.
Furthermore, we apply the nonparametric learning framework to globally reconstruct and estimate multiple sinusoidal signals with unknown frequencies without the need for adaptive parametric techniques.
An explicit nonlinear mapping can directly provide the estimated parameters, which will exponentially converge to the unknown frequencies.
Finally, a feedforward control design is proposed to solve the linear output regulation problem using the nonparametric learning framework. 
%
%
Two simulation examples are provided to illustrate the effectiveness of the theoretical results.
\end{abstract}

\begin{IEEEkeywords}Nonlinear control; iISS stability; output regulation; parameter estimation; Non-adaptive control; Nonparametric learning\end{IEEEkeywords}

\section{Introduction}\label{Sec-Intro}%
The internal model principle is essential for the design of control systems that achieve asymptotically tracking and disturbance rejection \cite{francis1977linear} to solve the classically named output regulation problem \cite{isidori1990output,huang1990nonlinear}.
In particular, the linear internal model principle successfully turns the linear output regulation problem into a pole placement problem \cite{huang2004general}. 
In the nonlinear case, the nonlinear superposition of the parameter uncertainties and nonlinear dynamics behaviour arising from the system, reference inputs, and disturbances generate a nonlinear steady-state behaviour of the closed-loop system. The presence of nonlinearities prevents the direct application of the linear internal model and feedforward controller design approach \cite{huang1994robust}.
Specifically, the results in \cite{huang2001remarks} identified a link between a polynomial and a partial differential equation conditions on an input feedforward function, resulting in a sufficient condition called trigonometric polynomial for the solvability of the nonlinear robust output regulation problem. 
{\myr The construction of a proper internal model approach that solves the output regulation problem for various system dynamics over diverse conditions remains an active research area in both the automatic control literature \cite{byrnes2003limit,wu2021output,bin2022internal,huang2021internal,wang2022robust,lu2019adaptive} and neuroscience \cite{broucke2022adaptive}}.

The existing research addressing output regulation problem can be classified following the internal models type \cite{bin2022internal} which include nonlinear internal models \cite{byrnes2003limit,huang2004general}, the canonical linear internal model \cite{nikiforov1998adaptive,serrani2001semi,liu2009parameter,broucke2024disturbance,lu2019adaptive} and a generic internal model \cite{marconi2008uniform,xu2019generic}.
To illustrate, two different nonlinear internal models were given in \cite{byrnes2003limit} and \cite{huang2004general} to handle the nonlinear output regulation problem with linear output mapping immersed in a known nonlinear steady-state generator.
As a matter of fact, an important class of linear and nonlinear output regulation problems is for systems with an uncertain exosystem with unknown parameters or dynamics \cite{serrani2001semi,liu2009parameter,marino2011adaptive,isidori2012robust}.
To handle the unknown parameters arising from the uncertain exosystem, the canonical linear internal model \cite{nikiforov1998adaptive,serrani2001semi,chen2002global,liu2009parameter} and the generic internal model \cite{marconi2008uniform,xu2019generic} are the two main approaches considered in the literature. 
The canonical linear internal model was first proposed in \cite{nikiforov1998adaptive} to overcome important shortcomings of the classical internal model, such as the need for exact models of the controlled system plant and the disturbance generator, by using a parameterized method and adaptive control techniques.  
The technique has been applied for the solution of linear output regulation, \cite{marino2003output}, and nonlinear output regulation problems, \cite{serrani2001semi,liu2009parameter}, for systems with a linear uncertain exosystem. 
Interestingly, it is shown in \cite{liu2009parameter} that the estimated parameter vector will converge to the actual parameter vector if the canonical internal model has a dimension that is no more than twice the number of sinusoidal signals in the steady-state input of the system.
%
%
%
Additionally, the construction of parameterized nonlinear internal models using a canonical linear internal model approach has been proposed in \cite{xu2017constructive} to help address nonlinear output regulation problems based on Kreisselmeier’s adaptive observers \cite{kreisselmeier1977adaptive}.
%
%

A generic internal model design was initially introduced in \cite{marconi2008uniform} to relax the various assumptions imposed on internal model candidates as mentioned above \cite{nikiforov1998adaptive,byrnes2003limit,huang2004general}. 
The generic internal model has been employed to solve a practical semi-global nonlinear output regulation problem in \cite{marconi2008uniform}. This important contribution sparked 
research interest in the use of non-adaptive techniques for the solution of nonlinear output regulation problems \cite{marconi2007output,isidori2012robust,xu2019generic}.
The non-adaptive control approach naturally avoids the burst phenomenon \cite{anderson1985adaptive,hsu1987bursting} that is encountered in adaptive iterative learning methods, such as in the MIT rule \cite{mareels1987revisiting}. 
{\myr In addition, the non-adaptive method in \cite{marconi2007output,isidori2012robust,xu2019generic}  avoids the need for the construction of a Lyapunov function or input-to-state stability (ISS) Lyapunov function for the closed-loop system with positive semidefinite derivatives.}
%
In fact, a counterexample was used in \cite{chen2023lasalle} to show that the boundedness property cannot be guaranteed when the derivative of an ISS Lyapunov function is negative semidefinite even with a small input term. 
%
%
%
%
Remarkably, the generic internal model in \cite{marconi2007output,marconi2008uniform,xu2019generic} could directly estimate the unknown parameters in the exosystem without using the parameterized method and adaptive control technique, significantly improving the canonical linear internal model \cite{bin2022internal}.
Accordingly, designing a generic internal model for solving output regulation problems has received considerable attention in the literature \cite{marconi2007output,marconi2008uniform,xu2019generic}. 
The generic internal model relies on the explicit construction of a nonlinear continuous mapping function, which is known to exist \cite{KE2003}. 
Methods for constructing a locally Lipschitz approximation of such a function were discussed in \cite{marconi2008uniform}.
In particular, a nonlinear high-gain observer approach was constructed in \cite{isidori2012robust} to remove the ‘‘explicit’’ need for adaptive techniques, as proposed in \cite{marino2011adaptive}, by assuming that the steady-state generator is linear in the exogenous signal and that the real Hankel matrix is invertible.
Moreover, approximations of the nonlinear output mapping function were introduced in \cite{bin2020approximate,bernard2020adaptive} to provide an adaptive online tuning of the regulator by employing system identification algorithms selecting the optimized parameters according to a specific least-squares policy.
%
Subsequently, nonlinear output mapping functions were explicitly constructed in \cite{xu2019generic}  under the assumption that the steady-state generator is linear in the exogenous signal to exploit the known property that the steady-state generator of nonlinear output regulation is of finite $k^\text{th}$ order in the exogenous signal \cite{huang1992approximation,huang1994robust}.

The internal model can be viewed as an observer of the steady-state generator providing an online estimated steady-state/input. The observability condition of the steady-state generator has been discussed in \cite{chen2010observability}.
The main differences between feedback and feedforward control frameworks are related to the ability to measure the output of the steady-state generator. When a direct measurement is available an internal model-based design approach would prevail. If only indirect measurements are possible, observer-based designs are required.
As a result, finding an explicit nonlinear output mapping for general internal models also promotes studying the parameter/frequency and state estimation problems.
Some efforts have been made to construct an explicit nonlinear output mapping to estimate the frequency of a pure sinusoidal signal non-adaptively locally using a nonlinear Luenberger observer approach  \cite{praly2006new}.
As highlighted in \cite{afri2016state}, the nonlinear Luenberger observer design provides significant advantages when compared to traditional adaptive observer designs since they yield lower dimensional systems with dynamics that admit a strict Lyapunov function candidates. 
Online parameter/frequency estimation of multiple sinusoidal signals has long been a defining topic in automatic control, communication, image processing and power engineering.
A large majority of observer-based designs for online parameter/frequency estimation utilize traditional adaptive control methods \cite{marino2002global,xia2002global,hou2011parameter,pin2019identification}. 

Motivated by the aforementioned pioneering research, we propose a nonparametric learning solution framework for a generic internal model design of nonlinear robust output regulation. 
We show that the global nonlinear robust output regulation problem for a class of nonlinear systems with output feedback subject to a nonlinear exosystem can be tackled by constructing a generic linear internal model as in \cite{marconi2008uniform,isidori2012robust}, provided that a sufficiently continuous nonlinear mapping exists. 
An explicit continuous nonlinear mapping was constructed recently in \cite{xu2019generic} under the assumption that the steady-state generator is linear in the exogenous signal.
We further relax this assumption to address a polynomial steady-state generator in the exogenous signal, demonstrating that the persistency of excitation condition remains necessary for the generic internal model design following results in \cite{liu2009parameter}.
{\myr A nonparametric learning framework is proposed to solve a linear time-varying equation to ensure that a nonlinear continuous mapping always exists.}
With the help of the proposed nonparametric learning framework, the nonlinear robust output regulation problem can be converted into a robust non-adaptive stabilization problem for the augmented system with integral Input-to-State Stable (iISS) inverse dynamics. 
{\myr One key distinction from the adaptive methods in \cite{nikiforov1998adaptive,serrani2001semi,chen2002global,liu2009parameter} is that the proposed nonparametric framework does not rely on a known explicit regressor, typically generated from the structure of the composite system, which includes the controlled system and internal model. 
Instead, the proposed nonparametric framework directly learns the unknown parameter vector of the steady-state generator from the generic internal model signal. 
The nonparametric framework eliminates the need for explicit regressor construction, potentially simplifying the implementation and reducing the computational complexity. 
Additionally, the framework allows for more flexible adaptation to various system dynamics without predefined model assumptions \cite{wang2024nonparametric}. }
Moreover, a dynamic gain approach can adjust the gain adaptively to achieve stabilization without requiring knowledge of the uncertainties coupled with the exosystem and the control system.
Different from traditional adaptive control methods as in \cite{marino2002global,xia2002global,hou2011parameter,hsu1999globally,pin2019identification,liu2021exponential}, we apply the nonparametric learning framework to globally estimate the unknown frequencies, amplitudes and phases of the $n$ sinusoidal components with bias independent of any adaptive techniques. 
In contrast, the estimation approach proposed in \cite{praly2006new} only considers the frequency estimation of a single sinusoidal signal, which requires the actual frequency to lie within a known (possibly large) compact set.
{\myr We also show that the explicit nonlinear continuous mapping can directly provide the estimated parameters after a specific time, which yields the exponential convergence to the unknown parameters/frequencies.}
Finally, a feedforward control design is proposed to solve the output regulation using the proposed nonparametric learning framework.

{The rest of this paper is organized as follows. Section \ref{Sec-BM} formulates the problem and objectives. In Section \ref{Sec-Main}, we establish a learning-based framework for solving nonlinear robust output regulation over a polynomial steady-state generator. 
The nonparametric learning framework to globally estimate the unknown frequencies, amplitudes and phases of the $n$ sinusoidal
components with bias independent of using adaptive techniques {\myr is} introduced in Section \ref{feedforwadsection}. {\myr This is followed by the application of the proposed nonparametric learning framework for the design of a feedforward controller that solves the linear output regulation problem.
Two numerical examples are presented in  Section \ref{NumPraEx}. A controlled Lorenz system application is presented first. This is followed by an application to a quarter-car active automotive suspension system subject to unknown road profiles are considered. Section \ref{Sec-Conclusion} concludes the paper.}}

\textbf{Notation:} 
{\myr A function $f:[t_0, \infty)\mapsto \mathds{R}^{m\times n}$ is bounded if there exists a positive constant $c$ such that $\|f(t)\|\leq c$ for all $t\geq t_0$.} $\|\cdot\|$ is the Euclidean norm. $\emph{Id} : \mathds{R}\rightarrow \mathds{R}$ is an identity function. {\myr For $X_i\in \mathds{R}^{n_i\times m}$, let $\col(X_1,\dots,X_N)=[X_{1}^{\top},\dots ,X^{\top}_{N}]^\top$.} For any matrix $ P \in \mathds{R}^{n \times m} $ and $ P = [p_1, \dots p_m]$ where $p_i \in \mathds{R}^m $, $\hbox{vec}(P) = \mbox{col }(p_1, \dots, p_m) $. For any column vector $X \in \mathds{R}^{nq}$ for positive integers $n$ and $q$, $M_n^q (X) = [X_1,X_2,\dots, X_q]$. The Kronecker product is denoted by $\otimes$.
A function $\alpha: \mathds{R}_{\geq 0}\rightarrow \mathds{R}_{\geq 0}$ is of class $\mathcal{K}$ if it is continuous, positive definite, and strictly increasing. $\mathcal{K}_o$ and $\mathcal{K}_{\infty}$ are the subclasses of bounded and unbounded $\mathcal{K}$ functions, respectively. For functions $f_1(\cdot)$ and $f_2(\cdot)$ with compatible dimensions, their composition $f_{1}\left(f_2(\cdot)\right)$ is denoted by $f_1\circ f_2(\cdot)$. For two continuous and positive definite functions $\kappa_1(\varsigma)$ and $\kappa_2(\varsigma)$, $\kappa_1\in \mathcal{O}(\kappa_2)$ means $\limsup_{\varsigma\rightarrow 0^{+}}\frac{\kappa_1(\varsigma)}{\kappa_2(\varsigma)}<\infty$.

\section{Problem Formulation and Objectives}\label{Sec-BM}%
\subsection{System models and problem formulation}

We consider a nonlinear system in strict-feedback normal form with a unity relative degree
\begin{align}\label{Main-sys1}
\dot{z} &= f(z,y,\mu), \notag\\
\dot{y} &= g(z,y,\mu) + b(z,y,\mu)u, \\
e &= y-h(v,w),\notag
\end{align}
where $\col(z,y)\in\mathds{R}^{n_z+1}$ are the state variables of the system, $y\in\mathds{R}$ is the output to be regulated, $e\in\mathds{R}$ is tracking error, $u\in\mathds{R}$ is the control input variable, and $\mu=\col(v,w,\sigma)$ contains the uncertainties $w\in\mathds{R}^{n_{w}}$, the state of the exosystem $v\in\mathds{R}^{n_{v}}$,
and the unknown parameters $\sigma \in\mathds{R}^{n_{\sigma}}$ arising from the exosystem. The smooth exosystem dynamics are given by
\begin{align}\label{Exosys1}
\dot{v} &= s(v,\sigma),
\end{align}
which generates the disturbance and reference signals. The uncertainties $w$ are assumed to belong to a known compact invariant set $\mathds{W}$. Similarly, the exosystem state variables $v$ also are assumed to stay in some known compact invariant set $\mathds{V}$ \footnote{The set $\mathds{V}$ is called compact and invariant for $\dot{v}=s(v,\sigma)$ if it is compact and for each $v(0)\in\mathds{V}$ and $\sigma\in \mathds{S}$, the solution $v(t)\in\mathds{V}$ for all $t\geq0$.} for the exosystem \eqref{Exosys1} where $\sigma$ is in some known compact set $\mathds{S}$. Suppose that all the functions in \eqref{Main-sys1} are globally defined and sufficiently smooth satisfying $f(0,0,\mu)=0$, $g(0,0,\mu)$ and $h(0,w)=0$ for all $\mu\in \mathds{V}\times \mathds{W}\times \mathds{S}$. The function $b(z,y,\mu)$ is lower bounded by a positive constant $b^*$ for any $\mu\in \mathds{V}\times \mathds{W} \times \mathds{S}$ and $\col(z,y)\in\mathds{R}^{n_z+1}$. 


The nonlinear robust output regulation problem of this paper is formulated as follows:
\begin{prob}[Nonlinear Robust
Output Regulation] \label{ldlesp} Given systems (\ref{Main-sys1}) and (\ref{Exosys1}), any compact subsets $\mathds{S}\in \mathds{R}^{n_{\sigma}}$, $\mathds{W}\in \mathds{R}^{n_w}$ and $\mathds{V}\in \mathds{R}^{n_v}$ with $\mathds{W}$ and $\mathds{V}$ containing the origin, design a control law 
such that for all initial conditions $v(0)\in \mathds{V}$, $\sigma \in \mathds{S}$ and $w\in \mathds{W}$, and any initial states $\textnormal{\col}(z(0), y(0))\in\mathds{R}^{n_z+1}$, the solution of the closed-loop system exists and is bounded for all $t\geq 0$,
 and $$\lim\limits_{t\rightarrow\infty}e(t)=0.$$
\end{prob}
\subsection{Standard assumptions} 
The following assumptions are standard to handle Problem \ref{ldlesp} for the systems \eqref{Main-sys1} and \eqref{Exosys1}. 
\begin{ass}[Solvability of the regulator equations] \label{H1}There is a smooth function $\bm{z}^{\star}{\myr \equiv}\bm{z}^{\star}(v,\sigma,w)$ for $\textnormal{\col}(v,w,\sigma)\in\mathds{V}\times \mathds{W} \times \mathds{S}$ satisfying 
\begin{align}\label{Reg-1}
\frac{\partial \bm{z}^{\star}(v,\sigma,w)}{\partial v}s(v,\sigma) &= f(\bm{z}^{\star}(v,\sigma,w),0,\mu),
\end{align}
It determines the so-called output zeroing manifold $\bm{u}{\myr \equiv}\bm{u}(v,\sigma,w)$ given by
\begin{align}\label{Reg-2}
\bm{u}(v,\sigma,w) &=  b(\bm{z}^{\star}(v,\sigma,w), h(v, w), v, \sigma, w)^{-1}\Big(\frac{\partial h(v, w)}{\partial v} s(v,\sigma)\notag\\
&-g(\bm{z}^{\star}(v,\sigma,w),0,v,\sigma,w) \Big),
\end{align}
which can be regarded as an output of the exosystem \eqref{Exosys1}.
\end{ass}

\begin{ass}[Minimum-phase condition]\label{H2}  The translated inverse system 
\begin{align}\label{Reg-2}
\dot{\bar{z}} &= f(\bar{z}+\bm{z}^{\star},y,\mu)-f(\bm{z}^{\star},0,\mu) 
\end{align}
is input-to-state stable with state $\bar{z}=z-\bm{z}^{\star}$ and input $y$ in the sense of \cite{Sontag2019}. In particular,  
there exists a continuous function $V_{\bar{z}}(\bar{z})$ satisfying 
$$\underline{\alpha}_{\bar{z}}\!\left(\|\bar{z} \|\right)\leq V_{\bar{z} }\!\left(\bar{z} \right) \leq \overline{\alpha}_{\bar{z}}\!\left(\|\bar{z} \|\right)$$
for some class $\mathcal{K}_{\infty}$ functions $\underline{\alpha}_{\bar{z}}\left(\cdot\right)$ and $\overline{\alpha}_{\bar{z}}\left(\cdot\right)$ such that, for any $v\in \mathds{V}$, along the trajectories of the $\bar{z} $ subsystem,
$$\dot{V}_{\bar{z} }\leq-\alpha_{\bar{z}}\!\left(\|\bar{z} \|\right)+ \gamma\!\left(y \right),$$
where $\alpha_{\bar{z}}\!\left(\cdot\right)$ is some known class $\mathcal{K}_{\infty}$ function satisfying $\limsup\limits_{\varsigma\rightarrow 0^{+}}\left(\alpha_{\bar{z}}^{-1}\!\left(\varsigma^2\right)/\varsigma\right)< + \infty$, and $\gamma\!\left(\cdot\right)$ is some known smooth positive definite function.
\end{ass}
\subsection{Linear generic internal model design and error system dynamics} 
Based on Assumption \ref{H1}, following the pioneering work in \cite{marconi2007output,marconi2008uniform,xu2019generic,bin2020approximate},  we can construct a linear generic internal model in the form 
\begin{align}\label{IM-01}
\dot{\eta} &= M\eta + Nu,~~ \eta\in\mathds{R}^{n_{0}},
\end{align}
where $(M,N)$ is controllable and $M$ is Hurwitz. As shown in \cite{marconi2007output,marconi2008uniform,xu2019generic}, this can be done by using the result of \cite{KE2003}. In fact, let 
\begin{align*}
\bm{\eta}^{\star}(v(t),\sigma,w)=&\int_{-\infty}^{t} e^{M(t-\tau)}N\bm{u}(v(\tau), \sigma, w) d\tau,\\
\bm{u}(v(t), \sigma, w)=&\chi(\bm{\eta}^{\star}(v(t),\sigma,w)), ~~ \bm{\eta}^{\star}\in\mathds{R}^{n_{0}}
\end{align*}
{\myr along each system trajectory associated with the initial condition, parameter and uncertainty vector $\col(v(0),\sigma,w)$, that will satisfy} the  differential equations 
\begin{align}\label{IM-01}
\frac{d \bm{\eta}^{\star}(v(t),\sigma,w)}{dt}  &= M\bm{\eta}^{\star}(v(t),\sigma,w) + N\bm{u}(v(t), \sigma, w), \notag\\
\bm{u}(v(t), \sigma, w) &= \chi(\bm{\eta}^{\star}(v(t),\sigma,w)).
\end{align}
 These trajectories are those of the steady-state generator of control input $\bm{u}(v(t), \sigma, w)$ for a sufficiently large dimension $n_0$, {\myr a continuous mapping $\chi(\cdot)$ and the initial condition $\bm{\eta}^{\star}(v(0),\sigma,w)$.}  For convenience, let $\bm{\eta}^{\star}{\myr \equiv}\bm{\eta}^{\star}(v,\sigma,w)$.
{\myr To solve Problem \ref{ldlesp},} we perform the coordinate/input transformations
\begin{align*}
\bar{z}&=\, z-\bm{z}^{\star}, \\
\bar{\eta} &=\,\eta-\bm{\eta}^{\star} -b(\bar{z}+\bm{z}^{\star},e+h(v, w),\mu)^{-1}Ne. 
\end{align*}
Using the transformation, the augmented system 
\begin{subequations}\label{augmen-1}
\begin{align}
\dot{\bar{z}} =&\,\bar{f}(\bar{z},e,\mu) \label{augmen-1a}\\
 \dot{\bar{\eta}} =&\,M\bar{\eta}+\bar{p}(\bar{z},e,\mu) \label{augmen-1b}\\
 \dot{e} =& \, \bar{g}(\bar{z},\bar{\eta},e,\mu)+ \bar{b}(\bar{z},e,\mu)(u-\chi( \bm{\eta}^{\star})) \label{augmen-1c}
\end{align}
\end{subequations}
is obtained, where $\bar{b}(\bar{z},e,\mu)=b(\bar{z}+\bm{z}^{\star},e+h(v, w),\mu)$,
\begin{align*}
    \bar{f}(\bar{z},e,\mu)=&\,f(\bar{z}+\bm{z}^{\star},e+h(v, w),{\myr \mu})-f(\bm{z}^{\star},{\myr h(v, w)},\mu),\\
    \bar{p}(\bar{z},e,\mu)=&\,\bar{b}(\bar{z},e,\mu)^{-1}\left(MNe-N\bar{g}(\bar{z},e,\mu)\right)\\
    &-\frac{d \bar{b}(\bar{z},e,\mu)^{-1}}{dt}Ne,\\
     \bar{g}(\bar{z},\bar{\eta},e,\mu)=&\,g(\bar{z}+\bm{z}^{\star},e+h(v, w),\mu) -g(\bm{z}^{\star},{\myr h(v, w)},\mu).
\end{align*}
For the system \eqref{augmen-1}, under Assumptions \ref{H1} and \ref{H2}, we have the following properties: 
\begin{prop}\label{property1}%
There exists a smooth input-to-state Lyapunov function $V_0{\myr \equiv}V_0(\bar{Z})$ satisfying
\begin{align}
\underline{\alpha}_0(\|\bar{Z}\|^2)&\leq V_0(\bar{Z})\leq \bar{\alpha}_0(\|\bar{Z}\|^2),\notag\\
\dot{V}_0\big|_{\eqref{augmen-1a}+\eqref{augmen-1b}} &\leq  -\|\bar{Z}\|^2+\bar{\gamma} \left(e\right),\label{V0}
\end{align}
for some comparison functions $\underline{\alpha}_0(\cdot)\in {\myr \mathcal{K}_{\infty}}$, $\bar{\alpha}_0(\cdot)\in {\myr \mathcal{K}_{\infty}}$ and {\myr a smooth positive definite function,  $\bar{\gamma}(\cdot)$,} with $\bar{Z}=\textnormal{\col}(\bar{\eta},\bar{z})$.
\end{prop}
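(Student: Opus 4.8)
The plan is to read \eqref{augmen-1a}--\eqref{augmen-1b} as a cascade in which the $\bar z$-subsystem is the minimum-phase inverse dynamics of Assumption \ref{H2} driven by the error $e$ (together with the bounded exogenous offset $h(v,w)$), and the $\bar\eta$-subsystem is a Hurwitz linear system driven by $\bar p(\bar z,e,\mu)$, and then to assemble a single weighted Lyapunov function. First I would treat the $\bar\eta$-channel: since $M$ is Hurwitz, fix $P=P^{\top}>0$ solving $M^{\top}P+PM=-2I$ and set $V_{\bar\eta}=\bar\eta^{\top}P\bar\eta$. Differentiating along \eqref{augmen-1b} and applying Young's inequality to the cross term $2\bar\eta^{\top}P\bar p$ gives $\dot V_{\bar\eta}\le -\|\bar\eta\|^2+\|P\|^2\|\bar p(\bar z,e,\mu)\|^2$.

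The second step is to bound $\|\bar p\|^2$ in terms of the two states. Since $\bar g(0,0,\mu)=0$ and the terms carrying $e$ in $\bar p$ vanish at $e=0$, one verifies $\bar p(0,0,\mu)=0$; because $\bar p$ is smooth and $\mu$ ranges over the compact set $\mathds{V}\times\mathds{W}\times\mathds{S}$, the standard decomposition of a continuous function vanishing at the origin yields smooth functions with $\|\bar p(\bar z,e,\mu)\|^2\le \sigma_z(\|\bar z\|)+\sigma_e(|e|)$ uniformly in $\mu$. This upgrades the $\bar\eta$-estimate to $\dot V_{\bar\eta}\le -\|\bar\eta\|^2+\|P\|^2\sigma_z(\|\bar z\|)+\|P\|^2\sigma_e(|e|)$.

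Third, I would reshape the $\bar z$-Lyapunov function. Assumption \ref{H2} supplies $V_{\bar z}$ with $\dot V_{\bar z}\le -\alpha_{\bar z}(\|\bar z\|)+\gamma(y)$; writing $y=e+h(v,w)$ with $h(v,w)$ bounded on the compact sets and invoking the change-of-supply-rate construction for ISS systems, I would obtain a smooth reshaped function $\tilde V_{\bar z}$ satisfying $\dot{\tilde V}_{\bar z}\le -\tilde\alpha(\|\bar z\|)+\tilde\gamma(|e|)$, where the dissipation rate $\tilde\alpha$ can be enlarged to dominate $\|P\|^2\sigma_z(\|\bar z\|)+\|\bar z\|^2$. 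The hypothesis $\limsup_{\varsigma\rightarrow 0^{+}}(\alpha_{\bar z}^{-1}(\varsigma^2)/\varsigma)<\infty$ is exactly what guarantees the reshaped rate can be made at least quadratic near the origin, so that a clean $-\|\bar z\|^2$ term (rather than a merely class-$\mathcal{K}$ term) survives.

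Finally I would set $V_0=V_{\bar\eta}+\tilde V_{\bar z}$. The sandwich bounds on $V_{\bar\eta}$ via $\lambda_{\min}(P),\lambda_{\max}(P)$ and those on $\tilde V_{\bar z}$ combine, using $\|\bar Z\|^2=\|\bar\eta\|^2+\|\bar z\|^2$, into $\underline\alpha_0(\|\bar Z\|^2)\le V_0\le \bar\alpha_0(\|\bar Z\|^2)$ for suitable $\mathcal{K}_{\infty}$ functions; and adding the two dissipation estimates yields $\dot V_0\le -\|\bar\eta\|^2-\|\bar z\|^2+\bar\gamma(e)=-\|\bar Z\|^2+\bar\gamma(e)$ with $\bar\gamma(e)=\|P\|^2\sigma_e(|e|)+\tilde\gamma(|e|)$, which is smooth and positive definite, giving \eqref{V0}. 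The main obstacle is the third step: arranging the change of supply rate so that the $\bar z$-dissipation simultaneously absorbs the interconnection term $\|P\|^2\sigma_z(\|\bar z\|)$ and still leaves a quadratic margin; this is precisely where the growth condition in Assumption \ref{H2} and the compactness of the uncertainty sets must be used with care.
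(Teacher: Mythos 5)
Your construction follows the paper's own proof essentially step for step: a quadratic form $V_{\bar\eta}=\bar\eta^{\top}P\bar\eta$ with $PM+M^{\top}P=-2I$ for the Hurwitz $\bar\eta$-channel; a bound on $\|\bar p\|^2$ by smooth functions of $\bar z$ and $e$ vanishing at the origin (the paper gets $\pi_1(\bar z)\|\bar z\|^2+\phi_1(e)e^2$ via \cite[Lemma 7.8]{huang2004nonlinear}); the changing-supply-rate construction $U_{\bar z}=\int_0^{V_{\bar z}}\kappa(s)\,ds$ of \cite{sontag1995changing}, with $\kappa$ nondecreasing chosen so that the reshaped dissipation dominates the interconnection term plus $\|\bar Z\|^2$, the $\limsup$ hypothesis in Assumption \ref{H2} being used exactly as you say to secure a quadratic margin near the origin; and finally $V_0=U_{\bar z}+V_{\bar\eta}$ with sandwich bounds assembled from $\lambda_{\min}(P)$, $\lambda_{\max}(P)$ and the bounds on $U_{\bar z}$. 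So the architecture and all key devices coincide with Appendix~\ref{AppendixA}.

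The one place where you deviate is the treatment of the input to the $\bar z$-subsystem, and as written that step would fail. You propose to substitute $y=e+h(v,w)$ into the supply rate $\gamma(y)$ of Assumption \ref{H2} and to use boundedness of $h$ on the compact sets to extract a supply of the form $\tilde\gamma(|e|)$. But $\gamma$ is positive definite, so at $e=0$ the term $\gamma(e+h(v,w))=\gamma(h(v,w))$ is in general strictly positive; no function $\tilde\gamma$ vanishing at $e=0$ can dominate it, and the residual constant cannot be absorbed by the $\bar z$-dissipation, which itself vanishes near $\bar z=0$. (There is also a drift mismatch: the translated inverse system in Assumption \ref{H2} subtracts $f(\bm{z}^{\star},0,\mu)$, whereas \eqref{augmen-1a} subtracts $f(\bm{z}^{\star},h(v,w),\mu)$, so the H2-Lyapunov function would pick up an additional nonvanishing term along \eqref{augmen-1a}.) The consequence would be a $\bar\gamma$ that is not positive definite, contradicting \eqref{V0}. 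The paper sidesteps this entirely by reading Assumption \ref{H2} as an ISS property of the $\bar z$-subsystem \eqref{augmen-1a} itself, with input $e$ and supply $\gamma(e)$ --- this is how the assumption is restated at the start of Appendix~\ref{AppendixA} --- and under that reading your argument is correct verbatim; the $y=e+h(v,w)$ detour should simply be dropped rather than repaired.
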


\begin{prop}\label{property2}%
There are positive smooth functions $\gamma_{g0}(\cdot)$ and $\gamma_{g1}(\cdot)$ such that
$$|\bar{g}(\bar{Z}, e, \mu)|^2\leq \gamma_{g0}(\bar{Z})\|\bar{Z}\|^2+e^2\gamma_{g1}(e),$$
{\myr for $\mu\in \mathds{V}\times \mathds{W}\times \mathds{S}$.}
\end{prop}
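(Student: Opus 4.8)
The plan is to exploit the fact that $\bar{g}$ is an \emph{increment} of the smooth function $g$ and therefore vanishes at the origin of the $(\bar{z},e)$-space, so that its magnitude is controlled by $\|\bar{z}\|$ and $|e|$. First I would record the two structural facts that drive the estimate. From its definition, $\bar{g}(\bar{z},\bar{\eta},e,\mu)=g(\bar{z}+\bm{z}^{\star},e+h(v,w),\mu)-g(\bm{z}^{\star},h(v,w),\mu)$ does not depend on $\bar{\eta}$, and it equals zero whenever $\bar{z}=0$ and $e=0$, since then the two evaluations of $g$ coincide. Because $\|\bar{z}\|\le\|\bar{Z}\|$ with $\bar{Z}=\col(\bar{\eta},\bar{z})$, it therefore suffices to bound $|\bar{g}|^2$ by a $\bar{z}$-dependent multiple of $\|\bar{z}\|^2$ plus an $e$-dependent multiple of $e^2$, and to inflate $\|\bar{z}\|$ to $\|\bar{Z}\|$ only at the very end.

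Next I would apply the fundamental theorem of calculus along the segment joining $(\bm{z}^{\star},h(v,w))$ to $(\bar{z}+\bm{z}^{\star},e+h(v,w))$ to write $\bar{g}=P(\bar{z},e,\mu)^{\top}\bar{z}+Q(\bar{z},e,\mu)\,e$, where $P(\bar{z},e,\mu)=\int_0^1\frac{\partial g}{\partial z}(\bm{z}^{\star}+s\bar{z},h(v,w)+se,\mu)\,ds$ and $Q(\bar{z},e,\mu)=\int_0^1\frac{\partial g}{\partial y}(\bm{z}^{\star}+s\bar{z},h(v,w)+se,\mu)\,ds$ are continuous. This yields $|\bar{g}|^2\le 2\|P\|^2\|\bar{z}\|^2+2Q^2e^2\le C(\bar{z},e)\,(\|\bar{z}\|^2+e^2)$, where $C(\bar{z},e):=2\max\{\sup_{\mu}\|P\|^2,\ \sup_{\mu}Q^2\}$ is continuous because $P$ and $Q$ depend continuously on $\mu$, which ranges over the compact set $\mathds{V}\times\mathds{W}\times\mathds{S}$; boundedness of $\bm{z}^{\star}(v,\sigma,w)$ and $h(v,w)$ on these compact sets is used here.

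The main obstacle is the remaining coupled dependence of the coefficient $C$ on both $\bar{z}$ and $e$: one cannot simply majorize $C$ by a sum of one-variable functions. I would resolve this with a monotone-envelope/$\max$ argument. Introduce the nondecreasing envelope $\hat{C}(u):=\sup\{C(\bar{z},e):\|\bar{z}\|\le u,\ |e|\le u\}$, which is finite and continuous by continuity of $C$ and compactness of the sublevel sets. Using $C(\bar{z},e)\le\hat{C}(\max\{\|\bar{z}\|,|e|\})$ together with $\|\bar{z}\|^2+e^2\le 2\max\{\|\bar{z}\|,|e|\}^2$, a two-case split according to whether $\|\bar{z}\|\ge|e|$ or $|e|\ge\|\bar{z}\|$ gives $C(\bar{z},e)(\|\bar{z}\|^2+e^2)\le 2\hat{C}(\|\bar{z}\|)\|\bar{z}\|^2+2\hat{C}(|e|)e^2$. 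Only at this point do I enlarge $\|\bar{z}\|^2$ to $\|\bar{Z}\|^2$ in the first term, which is legitimate since $2\hat{C}(\|\bar{z}\|)$ is already a genuine function of $\bar{Z}$ (it ignores the $\bar{\eta}$-block); performing the inflation earlier would be fatal, because the unconstrained $\bar{\eta}$ would then force the coefficient of $\|\bar{Z}\|^2$ to dominate $C$ uniformly in $e$. Finally I would overbound the continuous maps $\bar{Z}\mapsto 2\hat{C}(\|\bar{z}\|)$ and $e\mapsto 2\hat{C}(|e|)$ by smooth, strictly positive functions $\gamma_{g0}(\bar{Z})$ and $\gamma_{g1}(e)$, delivering $|\bar{g}|^2\le\gamma_{g0}(\bar{Z})\|\bar{Z}\|^2+e^2\gamma_{g1}(e)$ for all $\mu\in\mathds{V}\times\mathds{W}\times\mathds{S}$, as claimed.
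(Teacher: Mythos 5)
Your proof is correct, but it takes a more elementary, self-contained route than the paper. The paper's entire argument is two lines: it observes that $\bar{g}$ is smooth and vanishes at $\col(\bar{z},e,\bar{\eta})=\col(0,0,0)$ for every $\mu$ in the compact set $\mathds{V}\times\mathds{W}\times\mathds{S}$, and then invokes the standard bounding lemma \cite[Lemma 11.1]{chen2015stabilization} (the same tool used, in the guise of Lemma 7.8 of \cite{huang2004nonlinear}, to bound $\bar{p}$ earlier in the appendix), which directly delivers smooth $\gamma_{g0},\gamma_{g1}$ with the split bound. What you have done is essentially re-prove the content of that lemma for this particular function: the fundamental-theorem-of-calculus decomposition $\bar{g}=P^{\top}\bar{z}+Qe$, the supremum over the compact $\mu$-set, the monotone envelope $\hat{C}$ with the $\max$ case-split to separate the jointly-dependent coefficient $C(\bar{z},e)$ into single-variable coefficients, and the final smooth domination. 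Your version buys transparency: it makes explicit where compactness of $\mathds{V}\times\mathds{W}\times\mathds{S}$ and boundedness of $\bm{z}^{\star}$ and $h$ enter, it records the (true, and used implicitly by the paper) fact that $\bar{g}$ does not actually depend on $\bar{\eta}$, and your ordering remark --- split the coefficient \emph{before} inflating $\|\bar{z}\|$ to $\|\bar{Z}\|$ --- is precisely the subtlety that the packaged lemma hides; the envelope/case-split step is the nontrivial idea and you execute it correctly. The paper's version buys brevity and consistency with the output-regulation literature, where this lemma is the standard device for converting ``smooth and vanishing at the origin over a compact parameter set'' into separated, smooth gain functions.
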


The proof of Properties~\ref{property1} and \ref{property2} can be found in Appendix~\ref{AppendixA}.

\begin{thm}\label{Theorem-1}%
For the composite system \eqref{Main-sys1} and \eqref{Exosys1} under Assumptions \ref{H1} and \ref{H2}, there is a positive smooth function $\rho(\cdot)$, positive number $k^*$ and the controller,  
\begin{subequations}\label{ESC-1}%
\begin{align}
\dot{\eta} &=M\eta+Nu,\label{ESC-1a}\\
u &= -k {\rho}(e)e + \chi(\eta),\label{ESC-1b}
\end{align}\end{subequations}
{\myr solves Problem \ref{ldlesp}}. Furthermore, the closed-loop system composed of \eqref{augmen-1} and \eqref{ESC-1} has the
property that there exists a continuous
positive definite function $V{\myr \equiv}V(\bar{Z}, e )$
such that, for all $\mu\in \mathds{S}\times \mathds{V}\times \mathds{W}$,
$$\dot{V}\leq -\big\|\bar{Z}\big\|^2-e^2$$
for any $k\geq k^*$. 
\end{thm}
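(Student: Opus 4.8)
The plan is to recast Theorem~\ref{Theorem-1} as global asymptotic stabilization of the origin of the augmented error system \eqref{augmen-1} in closed loop with \eqref{ESC-1}, and to construct a single Lyapunov function by combining a reshaped version of $V_0$ from Property~\ref{property1} with a quadratic term in $e$. First I would substitute \eqref{ESC-1b} into \eqref{augmen-1c}. Since the coordinate change gives $\eta-\bm{\eta}^{\star}=\bar{\eta}+\bar{b}^{-1}Ne$, the $e$-dynamics become
\begin{align*}
\dot{e}=\bar{g}(\bar{z},\bar{\eta},e,\mu)+\bar{b}(\bar{z},e,\mu)\big(\chi(\eta)-\chi(\bm{\eta}^{\star})\big)-k\rho(e)\,\bar{b}(\bar{z},e,\mu)\,e,
\end{align*}
so that a nonlinear damping term $-k\rho(e)\bar{b}e$ is available with $\bar{b}\ge b^{*}>0$, while the residual coupling $\chi(\eta)-\chi(\bm{\eta}^{\star})$ enters only through $\bar{Z}$ and $\bar{b}$ and through $e$.

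The difficulty is that the cross terms produced by $e\bar{g}$ and by $e\bar{b}(\chi(\eta)-\chi(\bm{\eta}^{\star}))$ carry factors that grow with $\bar{Z}$: by Property~\ref{property2} and Young's inequality, $e\bar{g}\le \tfrac12 e^{2}+\tfrac12\gamma_{g0}(\bar{Z})\|\bar{Z}\|^{2}+\tfrac12 e^{2}\gamma_{g1}(e)$, and a growth bound on the increment of $\chi$ together with Young's inequality yields a term of the form $\ell(\bar{Z})\|\bar{Z}\|^{2}$ plus an $e$-dependent remainder. These $\bar{Z}$-weighted contributions cannot be absorbed by the fixed margin $-\|\bar{Z}\|^{2}$ in \eqref{V0}. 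To create an arbitrarily large state-dependent margin I would invoke the changing-supply-rate technique: replace $V_0$ by $\hat{V}_0=\int_{0}^{V_0}\Lambda(s)\,ds$ for a suitable smooth nondecreasing $\Lambda$, obtaining $\dot{\hat{V}}_0\le -q(\|\bar{Z}\|)\|\bar{Z}\|^{2}+\hat{\gamma}(e)$, where $q$ can be made to dominate $1+\tfrac12\gamma_{g0}(\bar{Z})+\ell(\bar{Z})$ and where, crucially, the new supply term $\hat{\gamma}(\cdot)$ is a smooth positive definite function of $e$ \emph{alone}. Because all the comparison functions in Properties~\ref{property1}--\ref{property2} are uniform over the compact set $\mathds{V}\times\mathds{W}\times\mathds{S}$, this estimate holds for all admissible $\mu$.

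Then I would take $V=\hat{V}_0(\bar{Z})+\tfrac12 e^{2}$ and compute $\dot{V}$ along the closed loop. Collecting terms, the $\bar{Z}$-dependent contributions cancel against $-q(\|\bar{Z}\|)\|\bar{Z}\|^{2}$ to leave at most $-\|\bar{Z}\|^{2}$, while all remaining contributions reduce to the form $e^{2}\varphi(e)$ for a single smooth nonnegative $\varphi$ (using that a smooth positive definite $\hat{\gamma}$ satisfies $\hat{\gamma}(e)=e^{2}\phi(e)$ for some smooth $\phi$, and that $\bar{b}\ge b^{*}$ gives $-k\rho(e)\bar{b}e^{2}\le -kb^{*}\rho(e)e^{2}$). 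Choosing $\rho(e)=\big(1+\varphi(e)\big)/b^{*}$ and any $k\ge k^{*}:=1$ yields $-kb^{*}\rho(e)e^{2}+e^{2}\varphi(e)\le -e^{2}$, hence $\dot{V}\le-\|\bar{Z}\|^{2}-e^{2}$. Since $V$ is positive definite and radially unbounded in $(\bar{Z},e)$, the origin is globally asymptotically stable; all signals are bounded and $e\to0$, and reversing the coordinate change (recalling that $\bm{z}^{\star},\bm{\eta}^{\star}$ are bounded and $\bar{b}^{-1}\le 1/b^{*}$) shows the original closed-loop trajectories are bounded with $\lim_{t\to\infty}e(t)=0$, solving Problem~\ref{ldlesp}.

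The main obstacle I anticipate is the precise handling of the coupling $e\bar{b}(\chi(\eta)-\chi(\bm{\eta}^{\star}))$: because only a lower bound $b^{*}$ on $\bar{b}$ is assumed and $\chi$ need not be globally Lipschitz, one must split this term so that its $\bar{Z}$-dependent part takes the form $(\text{function of }\bar{Z})\,\|\bar{Z}\|^{2}$, so that the changing-supply-rate margin $q$ can dominate it, while its $e$-dependent part, possibly carrying factors of $\bar{b}$, is genuinely a function of $e$ (or is itself dominated by the $\bar{Z}$-decay) so that it can be folded into $\varphi(e)$ and annihilated by the high gain. Making this split rigorous is precisely where the polynomial growth structure of the steady-state generator and the flexibility of the changing-supply-rate construction must be exploited; the remainder is routine Young's-inequality bookkeeping.
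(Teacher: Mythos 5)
Your proposal follows essentially the same route as the paper's proof: substitute the control law into the error dynamics, apply the changing-supply-rate technique to the Lyapunov function of Property~\ref{property1} to obtain an arbitrarily large state-dependent margin $-\Theta_z(\bar{Z})\|\bar{Z}\|^2$ with a supply term depending on $e$ alone, form the composite function $V=V_2(\bar{Z})+e^2$, and pick $\rho(\cdot)$ and $k\geq k^*$ (with $k^*$ scaled by $1/b^*$) so the nonlinear damping $-2k\bar{b}\rho(e)e^2$ dominates all $e$-weighted remainders. The one obstacle you flag---splitting $\bar{b}(\bar z,e,\mu)\big(\chi(\eta)-\chi(\bm{\eta}^{\star})\big)$ into a $\bar{Z}$-weighted part and an $e$-weighted part---is resolved in the paper not by any polynomial structure of the steady-state generator (no such assumption appears in Theorem~\ref{Theorem-1}) but by a standard bounding lemma (Lemma 11.1 of \cite{chen2015stabilization}): since $\chi(\cdot)$ is continuously differentiable by \cite{KE2003,marconi2008uniform}, the function $\bar{b}\bar{\chi}$ is continuous, vanishes at $\col(\bar z,e,\bar\eta)=\col(0,0,0)$, and $\mu$ ranges over a compact set, which directly yields $|\bar{b}\bar{\chi}|^2\leq \gamma_{1\chi}(\bar{Z})\|\bar{Z}\|^2+e^2\gamma_{2\chi}(e)$ for smooth positive $\gamma_{1\chi},\gamma_{2\chi}$.
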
%
The proof of Theorem~\ref{Theorem-1} can be found in   Appendix~\ref{AppendixC}.

\begin{cor}\label{Theorem-2}%
For the composite system \eqref{Main-sys1} and \eqref{Exosys1} under Assumptions \ref{H1} and \ref{H2}, there is a positive smooth function $\rho(\cdot)$ such that the controller
\begin{subequations}\label{ESC-2}%
\begin{align}
\dot{\eta} &=M\eta+Nu,\label{ESC-2a}\\
\dot{\hat{k}}&={\rho}(e)e^2,\\
u &= -\hat{k} {\rho}(e)e + \chi(\eta),\label{ESC-2b}
\end{align}\end{subequations}
solves Problem \ref{ldlesp}. 
\end{cor}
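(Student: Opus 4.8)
The plan is to build on Theorem~\ref{Theorem-1} by treating the adaptive gain $\hat{k}$ as a time-varying replacement for the constant gain $k$ and augmenting the Lyapunov function $V\equiv V(\bar{Z},e)$ with a term penalizing the mismatch between $\hat{k}$ and the (unknown) threshold $k^{*}$. First I would revisit the proof of Theorem~\ref{Theorem-1} to isolate the explicit dependence of $\dot{V}$ on the gain. Since the control enters \eqref{augmen-1c} only through $u-\chi(\bm{\eta}^{\star})=-\hat{k}\rho(e)e+(\chi(\eta)-\chi(\bm{\eta}^{\star}))$ and $\bar{b}(\bar{z},e,\mu)\geq b^{*}>0$, the gain contributes a single term of the form $-\hat{k}\bar{b}\rho(e)e^{2}$ to $\dot{V}$. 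Because $\dot{\hat{k}}=\rho(e)e^{2}\geq 0$, the gain is monotonically nondecreasing, so with $\hat{k}(0)\geq 0$ we have $\hat{k}(t)\geq 0$ and hence $-\hat{k}\bar{b}\rho(e)e^{2}\leq -\hat{k}b^{*}\rho(e)e^{2}$. Collecting the remaining gain-free terms exactly as in Theorem~\ref{Theorem-1}, I would obtain the refined estimate
\begin{align*}
\dot{V}\big|_{\hat{k}}\leq -\|\bar{Z}\|^{2}-e^{2}+(k^{*}-\hat{k})\,b^{*}\rho(e)e^{2},
\end{align*}
which reduces to the bound of Theorem~\ref{Theorem-1} whenever $\hat{k}\geq k^{*}$ but, crucially, holds for every value of $\hat{k}\geq 0$.

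Next I would define the augmented Lyapunov candidate $W\equiv W(\bar{Z},e,\hat{k})=V(\bar{Z},e)+\tfrac{b^{*}}{2}(\hat{k}-k^{*})^{2}$. Differentiating along the closed loop \eqref{augmen-1}, \eqref{ESC-2} and substituting the update law $\dot{\hat{k}}=\rho(e)e^{2}$ gives
\begin{align*}
\dot{W}\leq -\|\bar{Z}\|^{2}-e^{2}+(k^{*}-\hat{k})b^{*}\rho(e)e^{2}+b^{*}(\hat{k}-k^{*})\rho(e)e^{2}=-\|\bar{Z}\|^{2}-e^{2},
\end{align*}
where the two gain-mismatch terms cancel identically by the choice of the weight $b^{*}/2$. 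This is the key step: the adaptation rate and the residual gain-mismatch term in $\dot{V}$ are both proportional to $(\hat{k}-k^{*})\rho(e)e^{2}$, so a single scalar weight annihilates them and no a priori knowledge of $k^{*}$ is required.

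From $\dot{W}\leq 0$ I would conclude that $W$, and therefore $V$ and $(\hat{k}-k^{*})^{2}$, are bounded; monotonicity of $\hat{k}$ then forces $\hat{k}$ to converge to a finite limit. Boundedness of $V$ yields boundedness of $\bar{Z}=\col(\bar{\eta},\bar{z})$ and $e$, from which the original states $z=\bar{z}+\bm{z}^{\star}$, $\eta=\bar{\eta}+\bm{\eta}^{\star}+\bar{b}^{-1}Ne$ and $y=e+h(v,w)$ are bounded on the invariant sets $\mathds{V}\times\mathds{W}\times\mathds{S}$, establishing existence and boundedness of the closed-loop solution. Integrating $\dot{W}\leq-\|\bar{Z}\|^{2}-e^{2}$ shows that $\|\bar{Z}\|^{2}+e^{2}\in L^{1}[0,\infty)$, and since all signals and hence $\dot{e}$ and $\tfrac{d}{dt}\|\bar{Z}\|^{2}$ are bounded, Barbalat's lemma gives $e(t)\to 0$, solving Problem~\ref{ldlesp}. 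I expect the main obstacle to be the first step, namely rigorously extracting the linear-in-gain structure of $\dot{V}$ from the proof of Theorem~\ref{Theorem-1} with a constant coefficient that a single weight can match, together with verifying that the monotone but a priori possibly unbounded gain $\hat{k}$ never needs to exceed $k^{*}$ for the cancellation to close the argument.
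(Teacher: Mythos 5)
Your proposal is correct and takes essentially the same route as the paper: the paper's entire proof is a remark stating that the result "easily follows from the Lyapunov function $U_0(\bar{Z},e,\hat{k}-k^*)=V(\bar{Z},e)+(1/2)(\hat{k}-k^*)^2$," which is exactly your augmented candidate $W$ — your weight $b^*/2$ versus the paper's $1/2$ is an immaterial bookkeeping difference in how the constant multiplying the gain-mismatch term $(k^*-\hat{k})\rho(e)e^2$ is extracted from the proof of Theorem~\ref{Theorem-1}. The details you supply (requiring $\hat{k}(0)\geq 0$ so that $\bar{b}\geq b^*$ can be applied before splitting off $k^*$, exact cancellation of the mismatch term against the adaptation term, and closing with integrability of $\|\bar{Z}\|^2+e^2$ plus Barbalat's lemma) are precisely the steps the paper omits for brevity.
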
%
\begin{rem}The proof of Corollary \ref{Theorem-2} easily follows from the Lyapunov function 
$$U_0(\bar{Z},e, \hat{k}-k^*)=V(\bar{Z}, e)+(1/2)(\hat{k}-k^*)^2$$
and is omitted for the sake of brevity.
\end{rem}
\subsection{Objectives} 
{  Based on the discussion above, it follows that the {\it Nonlinear Robust Output Regulation Problem} \ref{ldlesp} can be solved by \eqref{ESC-1} if we can find an explicit continuous mapping $\chi(\cdot)$. However, the result in \cite{KE2003} is limited to a proof of the existence of the nonlinear mapping $\chi(\cdot)$. Methods for the construction of a locally Lipschitz approximation of such a function were discussed in \cite{marconi2008uniform}.
The nonlinear output mapping function was approximated in \cite{bin2020approximate,bernard2020adaptive}  by introducing an adaptive online tuning of the regulator that employs system identification algorithms to the optimized parameters according to a specific least-squares policy.}
{The explicit construction of a nonlinear continuous output mapping function was proposed in \cite{xu2019generic} under the assumption that the steady-state generator is linear in the exogenous signal. The steady-state generator of the nonlinear output regulation system of order $k$ in the exogenous signal was proposed and analyzed in  \cite{huang1992approximation,huang1994robust}.
The first objective of this study is as follows.

\begin{obj} \label{obj1} For system \eqref{Main-sys1} and \eqref{Exosys1} with the nonlinear mapping $\chi(\cdot)$ in \eqref{IM-01}, we seek the explicit construction of a nonlinear continuous output mapping function $\chi(\cdot)$ under the assumption that the steady-state generator is polynomial in the exogenous signal. 
\end{obj}


{\myr One of the distinctive features of the generic internal model \eqref{IM-01} is that the unknown parameter vector $a(\sigma)$ determined by $\sigma$ can be identified by the signal $\bm{\eta}^{\star}$ such that  $a(\sigma)=\check{a}(\bm{\eta}^{\star})$ with $\check{a}(\cdot)$ being a continuous mapping.  As a result, the continuous mapping $\chi(\bm{\eta}^{\star})$ can be  rewritten as $\chi (\bm{\eta}^{\star})=\chi (\check{a}(\bm{\eta}^{\star}), \bm{\eta}^{\star})$}. The nonlinear mapping $\chi \left(\bm{\eta}^{\star}, \check{a}(\bm{\eta}^{\star})\right)$ and $\check{a}(\bm{\eta}^{\star})$ should be well-defined, continuous and such that the control law \eqref{ESC-1} can generate the required control signal online and in real-time. It is noted that $\chi(\cdot)$ and $\check{a}(\cdot)$ are not always well-defined and continuous for $\eta(t)$ over $t\geq 0$ due to some conditions not being satisfied by $\eta(t)$ over $t\geq 0$. Hence, the second objective of this paper is below.

\begin{obj} \label{obj2} For system \eqref{Main-sys1} and \eqref{Exosys1} with controller
\eqref{ESC-1} and nonlinear mapping $\chi(\cdot)$ given in \eqref{IM-01}, we seek a nonparametric learning framework to learn the unknown parameter $a(\sigma)$ that provides a well defined, continuous mapping $\chi(\cdot)$ for $\eta(t)$ over $t\geq 0$ solving Problem \ref{ldlesp}. 
\end{obj}
\section{A Nonparametric Learning Framework for Nonlinear Robust Output Regulation}\label{Sec-Main}%
To reduce the complexity of the problem and achieve our two objectives, for convenience, we simplify the nonlinear exosystem case into a linear exosystem and introduce the standard assumptions commonly seen in nonlinear robust output regulation as in \cite{liu2009parameter}. 
\subsection{An explicit nonlinear mapping for the polynomial steady-state generator}
\begin{ass}\label{ass1-explicit}The dynamics of exosystem system \eqref{Exosys1} is $ s(v,\sigma) =S(\sigma)v$. All the eigenvalues of $S(\sigma)$ are distinct with zero real part for all $\sigma\in \mathds{S}$.
\end{ass}
\begin{ass}\label{ass4-explicit} The functions $\bm{u}(v,\sigma,w)$ are polynomials in $v$ with coefficients depending on $\sigma$ for all $\sigma\in \mathds{S}$.
\end{ass}
\begin{rem}\label{remPE}
From \cite{huang2001remarks,liu2009parameter}, under Assumption \ref{ass4-explicit}, for the function $\bm{u}(v,\sigma,w)$, there is an integer {\color{red}$n^*>0$} such that $\bm{u}(v,\sigma,w)$ can be expressed by
$$\bm{u}(v(t),\sigma,w)=\sum_{j=1}^{n^*}C_{j}(v(0), w,\sigma)e^{\imath {\omega}_jt},$$
for some functions $C_{j}(v(0), w,\sigma)$, where $\imath$ is the imaginary unit and ${\omega}_j$ are distinct real numbers for $0\leq j \leq n^*$.
\end{rem}
\begin{ass}\label{ass5-explicit}  
 For any $v(0)\in \mathds{V}$, $w\in \mathds{W}$ and $\sigma\in \mathds{S}$, $C_{j}(v(0), w,\sigma)\neq 0$.
\end{ass}
As shown in \cite{huang2004nonlinear}, under Assumptions \ref{ass1-explicit} and \ref{ass4-explicit}, there exists a positive integer $n$, such that $\bm{u}(v,\sigma,w)$ satisfies for all $\col(v,w,\sigma)\in \mathds{V}\times\mathds{W}\times \mathds{S}$,
\begin{align} \label{aode-explicit}\frac{d^{n}\bm{u}(v,\sigma,w)}{dt^{n}}+a_{1}&(\sigma)\bm{u}(v,\sigma,w)+a_{2}(\sigma)\frac{d\bm{u}(v,\sigma,w)}{dt}\notag\\
+&\dots+a_{n}(\sigma)\frac{d^{n-1}\bm{u}(v,\sigma,w)}{dt^{n-1}}=0,
\end{align}
where $a_{1}(\sigma),\dots,a_{n-1}(\sigma)$ and $ a_{n}(\sigma)$ {\myr all belong to} $\mathds{R}$. Under Assumptions \ref{ass1-explicit} and \ref{ass4-explicit}, equation \eqref{aode-explicit} is such that the polynomial $$P(\varsigma)=\varsigma^{n}+a_{1}(\sigma)+a_{2}(\sigma)\varsigma+\dots+a_{n}(\sigma)\varsigma^{n-1}$$ has distinct roots with zeros real parts for all $\sigma\in \mathds{S}$. Let $a(\sigma)=\col(a_{1}(\sigma), \dots, a_{n}(\sigma))$, $\bm{\xi}(v,\sigma,w)= \col\!\left(\bm{u}(v,\sigma,w),\frac{d\bm{u}(v,\sigma,w)}{dt},\dots,\frac{d^{n-1}\bm{u}(v,\sigma,w)}{dt^{n-1}}\right)$ and $\bm{\xi}{\myr \equiv}\bm{\xi}(v,\sigma,w)$, and define
\begin{align*}
  \Phi(a(\sigma)) =&\left[
                      \begin{array}{c|c}
                        0_{(n-1)\times 1} & I_{n-1} \\
                        \hline
                        -a_{1}(\sigma) &-a_{2}(\sigma),\dots,-a_{n}(\sigma) \\
                      \end{array}
                    \right], \\
  \Gamma =&\left[
                \begin{array}{cccc}
                  1 & 0 & \cdots{} &0 \\
                \end{array}
              \right]_{1\times n}.
\end{align*}
Then, for the defined matrices $\Phi(a(\sigma))$ and $\Gamma$, $\bm{\xi}(v,\sigma,w)$ satisfies
\begin{subequations}\label{stagerator}\begin{align}
\dot{\bm{\xi}}(v,\sigma,w)=& \Phi (a(\sigma))  \bm{\xi}(v,\sigma,w),\\
   \bm{u}(v,\sigma,w)=&  \Gamma \bm{\xi}(v,\sigma,w).
\end{align}\end{subequations}
System \eqref{stagerator} is called a steady-state generator with output $u$, which can be used to generate the steady-state input signal $\textbf{u}(v,\sigma,w)$. 
A major source of inspiration in deriving
our result comes from the work in \cite{xu2019generic}. 

Define the matrix pair $(M, N)$ in \eqref{IM-01} by
\begin{subequations}\label{MNINter}\begin{align}
M=&\left[
                      \begin{array}{c|c}
                        0_{(2n-1)\times 1} & I_{2n-1} \\
                        \hline
                        -m_{1} &-m_{2},\dots,-m_{2n} \\
                      \end{array}
                    \right],\\
N=&\left[
                \begin{array}{ccccc}
                  0 & 0 & \cdots{} &0 &1 \\
                \end{array}
              \right]_{1\times 2n}^\top,                   
\end{align}
\end{subequations}
where $m_{1}$, $m_{2}$ , $\dots$, $m_{2n}$ are chosen such that $M$ is a Hurwitz matrix.

To synthesize the nonlinear operator $\chi (\cdot)$ in \eqref{IM-01}, we recall and reformulate some lemmas stated in \cite{xu2019generic}. They are listed as follows:
\begin{lem}\label{xu2019generic-explicit}\cite{xu2019generic}
Under Assumptions \ref{H1}, \ref{ass1-explicit} and \ref{ass4-explicit}, the matrix-valued function
$$\Xi(a)\ {\myr \equiv} \ \Phi(a)^{2n }+\sum\nolimits_{j=1}^{2n }m_{j}\Phi(a)^{j-1} \in \mathds{R}^{n \times n}$$ is non-singular and 
\begin{align}\label{XIQA-explicit}
\Xi(a)^{-1} =\textnormal{\col}(Q_{1}(a),\dots,Q_{n}(a))\in \mathds{R}^{n \times n},
\end{align}
with $Q_{j}(a)=\Gamma \Xi (a)^{-1}\Phi (a )^{j-1} \in \mathds{R}^{1\times n }$, $j=1,\dots, n$.
\end{lem}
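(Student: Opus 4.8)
The plan is to recognize $\Xi(a)$ as the characteristic polynomial of $M$ evaluated at the matrix $\Phi(a)$, and then to exploit the companion structure of $\Phi(a)$ together with the fact that any polynomial in $\Phi(a)$ commutes with $\Phi(a)$. First I would observe that, since $M$ in \eqref{MNINter} is in companion form, its characteristic polynomial is $q(\varsigma)=\varsigma^{2n}+\sum_{j=1}^{2n}m_j\varsigma^{j-1}$, so that $\Xi(a)=q(\Phi(a))$ is exactly $q$ evaluated at $\Phi(a)$. To prove nonsingularity I would invoke the spectral mapping theorem: the eigenvalues of $\Xi(a)=q(\Phi(a))$ are precisely the numbers $q(\lambda)$ as $\lambda$ ranges over the spectrum of $\Phi(a)$. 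By Assumptions \ref{ass1-explicit} and \ref{ass4-explicit} (see Remark \ref{remPE} and \eqref{aode-explicit}), the eigenvalues of the companion matrix $\Phi(a)$ are the roots of $P(\varsigma)$, which are distinct and have zero real part, whereas the roots of $q$ are the eigenvalues of the Hurwitz matrix $M$, which all have strictly negative real part. Hence no eigenvalue of $\Phi(a)$ is a root of $q$, so $q(\lambda)\neq 0$ for every $\lambda$ in the spectrum of $\Phi(a)$, and therefore $\det\Xi(a)=\prod_{\lambda}q(\lambda)\neq 0$. This establishes that $\Xi(a)$ is nonsingular.

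For the structure of the inverse I would establish two elementary facts. The first is the companion-matrix identity $\Gamma\Phi(a)^{j-1}=e_j^\top$ for $j=1,\dots,n$, where $e_j^\top$ denotes the $j$-th row of $I_n$; this follows by induction, since $\Gamma=e_1^\top$ and direct inspection of the rows of $\Phi(a)$ gives $e_i^\top\Phi(a)=e_{i+1}^\top$ for $i<n$. The second is that $\Xi(a)^{-1}$ commutes with $\Phi(a)$: indeed $\Xi(a)=q(\Phi(a))$ is a polynomial in $\Phi(a)$ and thus commutes with $\Phi(a)$, and multiplying $\Xi(a)\Phi(a)=\Phi(a)\Xi(a)$ on both sides by $\Xi(a)^{-1}$ yields $\Xi(a)^{-1}\Phi(a)=\Phi(a)\Xi(a)^{-1}$, which extends to every power $\Phi(a)^{j-1}$.

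Combining these two facts gives the claim directly: $Q_j(a)=\Gamma\Xi(a)^{-1}\Phi(a)^{j-1}=\Gamma\Phi(a)^{j-1}\Xi(a)^{-1}=e_j^\top\Xi(a)^{-1}$, which is precisely the $j$-th row of $\Xi(a)^{-1}$. Stacking these rows then yields $\textnormal{\col}(Q_1(a),\dots,Q_n(a))=\Xi(a)^{-1}$, as asserted in \eqref{XIQA-explicit}. I expect the only genuinely delicate point to be the nonsingularity argument, specifically making the disjointness of the two spectra rigorous through the spectral mapping theorem (here it is clean because $\Phi(a)$ has distinct eigenvalues and is therefore diagonalizable); by contrast, the companion-row identity and the commutation step are routine verifications once the block structures in \eqref{MNINter} are written out.
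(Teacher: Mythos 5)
Your proof is correct and complete. One point of context: the paper itself offers no proof of this lemma --- it is recalled directly from \cite{xu2019generic} (the citation is embedded in the lemma statement), so there is no in-paper argument to compare yours against. Your route is the natural self-contained one and is entirely consistent with the manipulations the paper does perform: the identification $\Xi(a)=q(\Phi(a))$, with $q$ the characteristic polynomial of the companion-form Hurwitz matrix $M$ in \eqref{MNINter}; the spectral-separation argument (eigenvalues of $\Phi(a)$ purely imaginary and distinct under Assumptions \ref{ass1-explicit} and \ref{ass4-explicit}, roots of $q$ in the open left half-plane) giving $\det\Xi(a)=\prod_{\lambda}q(\lambda)\neq 0$; and the companion-row identity $\Gamma\Phi(a)^{j-1}=e_j^\top$ combined with commutation of $\Xi(a)^{-1}$ with powers of $\Phi(a)$, which yields the row structure \eqref{XIQA-explicit}. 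These same ingredients (Cayley--Hamilton, the $Q_j$ row identities, commutation of polynomials in $\Phi(a)$) are exactly what the paper uses downstream in deriving the generalized Sylvester equation \eqref{MNGAMMAPhi} and in the proof of Lemma \ref{Lemmamappingf}, so your argument fits the paper's framework seamlessly; your remark that the determinant identity is clean because $\Phi(a)$ is diagonalizable is fine, though the identity $\det q(\Phi)=\prod_{\lambda}q(\lambda)$ holds for any square matrix via Schur triangularization, so even that caveat is not needed.
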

Define the real Hankel matrix (see \cite{afri2016state}): 
\begin{align}
\Theta (\theta)\, & {\myr \equiv} \left[\begin{matrix}\theta_{1} &\theta_{2}&\cdots{}&\theta_{n}\\
\theta_{2}&\theta_{3}&\cdots{}&\theta_{n+1}\\
\vdots&\vdots&\ddots&\vdots\\
\theta_{n} &\theta_{n +1}&\cdots{}&\theta_{2n -1}
\end{matrix}\right]\!\in \mathds{R}^{n \times n },\nonumber
\end{align}
where $\theta =Q \bm{\xi} $, $Q_{j}(a)=\Gamma \Xi(a)^{-1}\Phi(a )^{j-1} \in \mathds{R}^{1\times n }$, $1\leq j\leq 2n $ with
\begin{align}\label{Qdefini}
Q \, {\myr \equiv} \, \textnormal{\col}(Q_{1},\dots,Q_{2n})\in \mathds{R}^{2n\times n}.
\end{align}
{\myr We reformulate Theorem 3.1 in \cite{xu2019generic} to show that matrices $Q$, $M$, $N$, and $\Gamma$ satisfy the matrix equation}:
 {\myr \begin{align}
M Q =&\ \textnormal{\col}\Big(Q_{2},\dots,Q_{2n},-\sum\limits_{j=1}^{2n }m_{j}Q_{j}\Big)\notag\\
=&\ \textnormal{\col}\!\left(Q_{2},\dots,Q_{2n},  \Gamma \Xi(a) ^{-1}\left[-\sum\nolimits_{j=1}^{2n }m_{j}\Phi (a )^{j-1}\right]\right).\notag
\end{align}
Then, from Lemma \ref{xu2019generic-explicit}, we have
\begin{align}
M Q=&\ \textnormal{\col}\Big(Q_{2},\dots,Q_{2n},  \Gamma \Xi (a)^{-1}\big[\Phi (a )^{2n }-\Xi(a) \big]\Big)\notag\\
=&\ \textnormal{\col}\Big(Q_{2},\dots,Q_{2n},  \Gamma \Xi(a) ^{-1}\Phi (a )^{2n }-\Xi^{-1}(a)\Xi(a)\Gamma \Big)\notag\\
=&\ \textnormal{\col}\Big(Q_{2},\dots,Q_{2n},  Q_{2n}\Phi (a )-\Gamma \Big)\notag\\
=&\ \textnormal{\col}\Big(Q_{1}\Phi (a),\dots,Q_{2n-1}\Phi (a), Q_{2n}\Phi (a)-\Gamma \Big)\notag\\
=&\ Q \Phi(a)-N\Gamma, \label{MNGAMMAPhi}
\end{align}
which is called the generalized Sylvester equation, with the explicit solution being discussed in \cite{zhou2005explicit}.} 

To remove the assumption that the steady-state generator is linear in the exogenous signal in \cite{isidori2012robust,xu2019generic}, we now introduce a Lemma that establishes a connection between the persistent excitation condition and the generic internal model.


\begin{lem}\label{theta-invertable-explicit}
Under Assumptions \ref{H1}, \ref{ass1-explicit}, \ref{ass4-explicit}, and \ref{ass5-explicit}, the matrix-valued functions $\left[\begin{matrix}\bm{\xi} &\Phi (a )  \bm{\xi}&\cdots{} & \Phi ^{n -1}(a )  \bm{\xi}\end{matrix}\right]$ and $\Theta (\theta )$ are non-singular for all $t\geq 0$.
\end{lem}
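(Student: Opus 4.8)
The plan is to exploit the explicit spectral structure of $\bm{\xi}$ to factor both matrices through a single Vandermonde matrix, thereby reducing non-singularity to a few elementary non-vanishing conditions. First I would invoke Remark~\ref{remPE} together with the fact that $P(\varsigma)$ has exactly $n$ distinct purely imaginary roots $\imath\omega_1,\dots,\imath\omega_n$, so that $\bm{u}(v(t),\sigma,w)=\sum_{j=1}^{n}C_{j}e^{\lambda_j t}$ with $\lambda_j=\imath\omega_j$ distinct and, by Assumption~\ref{ass5-explicit}, $C_j\neq 0$ for every $j$. Since $\bm{\xi}=\col(\bm{u},\dot{\bm{u}},\dots,\bm{u}^{(n-1)})$ and the companion matrix $\Phi(a)$ admits the eigenvector $p_j=\col(1,\lambda_j,\dots,\lambda_j^{n-1})$ for each eigenvalue $\lambda_j$, I would write $\bm{\xi}(t)=\sum_{j=1}^{n}C_j e^{\lambda_j t}p_j$ and, using $\Phi(a)^k p_j=\lambda_j^k p_j$, obtain $\Phi(a)^k\bm{\xi}=\sum_{j}C_j\lambda_j^k e^{\lambda_j t}p_j$.

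For the first matrix, these identities give the factorization $[\bm{\xi},\Phi(a)\bm{\xi},\dots,\Phi(a)^{n-1}\bm{\xi}]=V^{\top}D(t)V$, where $V$ is the Vandermonde matrix $V_{jk}=\lambda_j^{k-1}$ built from $\lambda_1,\dots,\lambda_n$ and $D(t)=\diag(C_1 e^{\lambda_1 t},\dots,C_n e^{\lambda_n t})$. Hence its determinant equals $\det(V)^2\prod_{j=1}^{n}C_j e^{\lambda_j t}$, which is nonzero for every $t\geq 0$ because the $\lambda_j$ are distinct (so $\det V\neq 0$), each $C_j\neq 0$, and the complex exponentials never vanish.

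The Hankel part follows the same template once the scalar coefficients are identified. Writing $\theta_m=Q_m\bm{\xi}=\Gamma\Xi(a)^{-1}\Phi(a)^{m-1}\bm{\xi}=\sum_{l=1}^{n}C_l\beta_l\lambda_l^{m-1}e^{\lambda_l t}$ with $\beta_l\defeq\Gamma\Xi(a)^{-1}p_l$, the $(i,k)$ entry of $\Theta(\theta)$ is $\theta_{i+k-1}=\sum_l C_l\beta_l\lambda_l^{i+k-2}e^{\lambda_l t}$, so $\Theta(\theta)=V^{\top}\tilde{D}(t)V$ with $\tilde{D}(t)=\diag(C_l\beta_l e^{\lambda_l t})$ and therefore $\det\Theta(\theta)=\det(V)^2\prod_{l=1}^{n}C_l\beta_l e^{\lambda_l t}$.

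The crux, and the only nontrivial step, is showing $\beta_l\neq 0$. I would observe that $\Xi(a)=\Phi(a)^{2n}+\sum_{j=1}^{2n}m_j\Phi(a)^{j-1}=q_M(\Phi(a))$, where $q_M(\varsigma)=\varsigma^{2n}+\sum_{j=1}^{2n}m_j\varsigma^{j-1}$ is precisely the characteristic polynomial of the companion matrix $M$ in \eqref{MNINter}. Acting on the eigenvector gives $\Xi(a)^{-1}p_l=q_M(\lambda_l)^{-1}p_l$, and since $M$ is Hurwitz while each $\lambda_l=\imath\omega_l$ lies on the imaginary axis, $\lambda_l$ is not an eigenvalue of $M$, so $q_M(\lambda_l)\neq 0$. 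Because $\Gamma p_l=1$, this yields $\beta_l=q_M(\lambda_l)^{-1}\neq 0$, which combined with $C_j\neq 0$ makes $\det\Theta(\theta)$ nonzero for all $t\geq 0$ and establishes non-singularity of both matrix-valued functions.
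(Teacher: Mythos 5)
Your proof is correct, but it takes a genuinely different route from the paper's. The paper proves the lemma by writing $\Theta(\theta)=\Xi(a)^{-1}\left[\begin{matrix}\bm{\xi} &\Phi(a)\bm{\xi}&\cdots & \Phi^{n-1}(a)\bm{\xi}\end{matrix}\right]$ and then outsourcing the hard part: it cites Theorem 4.1 of the Liu--Huang adaptive-regulation paper to conclude that $\bm{\xi}$ is persistently exciting under Assumptions \ref{ass4-explicit} and \ref{ass5-explicit}, and the excitation-rank results of Padoan et al.\ to conclude that the Krylov-type matrix $\left[\begin{matrix}\bm{\xi} &\Phi\bm{\xi}&\cdots & \Phi^{n-1}\bm{\xi}\end{matrix}\right]$ has full rank at every $t$; nonsingularity of $\Theta(\theta)$ then follows from Lemma \ref{xu2019generic-explicit}. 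You instead work everything out spectrally: the eigenvector expansion $\bm{\xi}=\sum_j C_j e^{\lambda_j t}p_j$, the two Vandermonde factorizations $V^{\top}D(t)V$ and $V^{\top}\tilde{D}(t)V$, and the explicit determinants. Your decisive observation --- that $\Xi(a)=q_M(\Phi(a))$ with $q_M$ the characteristic polynomial of the Hurwitz companion matrix $M$, so $\beta_l=q_M(\imath\omega_l)^{-1}\neq 0$ because $M$ has no purely imaginary eigenvalues --- is exactly the mechanism hidden inside Lemma \ref{xu2019generic-explicit}, and your argument in effect re-derives that lemma rather than invoking it. What each approach buys: yours is self-contained, elementary, and quantitative (it produces closed-form determinants, hence could give explicit lower bounds on $|\det\Theta(\theta)|$, which the paper later needs qualitatively via the constants $\Theta_m,\Theta_M$), while the paper's is shorter and deliberately routes through the persistency-of-excitation literature, since exhibiting that link (PE as the operative condition behind the generic internal model) is one of its stated contributions. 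One point to state carefully if you polish this up: your expansion uses that all $n$ roots of $P(\varsigma)$ appear in $\bm{u}$ with nonzero coefficients, i.e., that the $n^*$ of Remark \ref{remPE} equals the order $n$ of \eqref{aode-explicit}; this identification (minimality of the ODE) is the intended reading of Assumption \ref{ass5-explicit} and is equally implicit in the paper's citation of the PE theorem, but it deserves an explicit sentence, since with $n^*<n$ both matrices would in fact be singular.
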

\begin{proof}{\myr It is noted from \eqref{Qdefini} that 
\begin{align*}
\theta &=\col(\theta_1,\dots,\theta_{2n})\\
&=\textnormal{\col}(Q_{1}\bm{\xi},Q_{2}\bm{\xi}, \dots,Q_{2n}\bm{\xi}) 
\end{align*}
with $Q_{j} =\Gamma \Xi ^{-1}\Phi ^{j-1}$, $1\leq j\leq 2n $.
Then, the  real Hankel matrix $\Theta (\theta )$ admits the following equations:}
\begin{align}\label{theta-seperate-explicit}
\Theta (\theta )=&\left[\begin{matrix}Q_{1}\bm{\xi}  &Q_{2}\bm{\xi} &\cdots{}&Q_{n}\bm{\xi} \\
Q_{2}\bm{\xi} &Q_{3}\bm{\xi} &\cdots{}&Q_{n +1}\bm{\xi} \\
\vdots&\vdots&\ddots&\vdots\\
Q_{n}\bm{\xi}  &Q_{n +1}\bm{\xi} &\cdots{}&Q_{2n -1}\bm{\xi} 
\end{matrix}\right]\nonumber\\
=&\left[\begin{matrix}Q_{1}\bm{\xi}  &Q_{1}\Phi \bm{\xi} &\cdots{}&Q_{1}\Phi^{n-1} \bm{\xi} \\
Q_{2}\bm{\xi} &Q_{2}\Phi \bm{\xi} &\cdots{}&Q_{2}\Phi ^{n-1}\bm{\xi} \\
\vdots&\vdots&\ddots&\vdots\\
Q_{n}\bm{\xi}  &Q_{n}\Phi \bm{\xi} &\cdots{}&Q_{n}\Phi ^{n-1}\bm{\xi}
\end{matrix}\right]\nonumber\\
=&\ \underbrace{\col(Q_{1},\dots,Q_{n})}_{\Xi(a)^{-1}}\!\left[\begin{matrix}\bm{\xi}  &\Phi \bm{\xi} &\cdots{} & \Phi^{n-1}  \bm{\xi} \end{matrix}\right].
\end{align}
Under Assumptions \ref{H1}, \ref{ass1-explicit}, \ref{ass4-explicit}, and \ref{ass5-explicit}, the vector $\bm{\xi}$ is persistently exciting following Theorem 4.1 in \cite{liu2009parameter}. Then, from Theorem 1 in \cite{padoan2016geometric} or Lemma 2 in \cite{padoan2017geometric}, the 
 steady-state generator \eqref{stagerator} satisfies the excitation rank condition at every $\bm{\xi}$, which further implies that $\textnormal{rank}\left(\left[\begin{matrix}\bm{\xi}  &\Phi \bm{\xi} &\dots & \Phi ^{n -1}  \bm{\xi} \end{matrix}\right]\right)=n $ and that the matrix $\left[\begin{matrix}\bm{\xi}  &\Phi \bm{\xi} &\dots & \Phi ^{n -1}  \bm{\xi} \end{matrix}\right]$ is invertible. Therefore, $\Theta (\theta )$ is nonsingular from \eqref{theta-seperate-explicit} and Lemma \ref{xu2019generic-explicit}.
\end{proof}

\begin{rem}
 Lemma 3.2 in \cite{xu2019generic} achieved similar results, assuming that the steady-state generator is linear in the exogenous signal. In contrast, Lemma \ref{theta-invertable-explicit} only requires the steady-state generator to be polynomial in the exogenous signal, which is an improvement over \cite{xu2019generic}.
 In particular, \cite{isidori2012robust} also defined a real Hankel matrix by assuming that the steady-state generator is linear in the exogenous signal and the real Hankel matrix is invertible. 
  Lemma \ref{theta-invertable-explicit} links the results in Theorem 4.1 in \cite{liu2009parameter} with the generic internal approach, demonstrating that the persistency of excitation remains a necessary condition for the generic internal model approach.  
 {\myr It is well known that the steady-state generator of nonlinear output regulation is k$^{\text th}$-order in the exogenous signal \cite{huang1992approximation,huang1994robust}. Detailed relations which yield a steady-state generator that is linear or polynomial in the exogenous signal can be found in \cite{huang1995asymptotic,huang2001remarks}. }
 \end{rem}

Using Lemma \ref{theta-invertable-explicit}, we can remove the assumption that the steady-state generator is linear in the exogenous signal in Theorem 3.1 in \cite{xu2019generic} and directly show that the explicit continuous nonlinear mapping in \cite{xu2019generic} is also applicable under the assumption that the steady-state generator is polynomial in the exogenous signal. The proof is modified from Thm.\ 3.1 in \cite{xu2019generic}.

\begin{lem}\label{Lemmamappingf}
Under Assumptions \ref{H1}, \ref{ass1-explicit}, \ref{ass4-explicit} and \ref{ass5-explicit}, there exists an explicit continuous nonlinear mapping to achieve Objective \ref{obj1} with
$$\chi \left(\bm{\eta}^{\star}, \check{a} (\bm{\eta}^{\star} )\right)=\Gamma \Xi\!\left(\check{a} (\bm{\eta}^{\star} )\right)\textnormal{\col}(\bm{\eta}^{\star}_{1},\dots,\bm{\eta}^{\star}_{n})$$ that fulfills \eqref{IM-01}, where $\check{a} (\bm{\eta}^{\star})$ is solved from the time-varying equation
\begin{align}\label{a-explicit0}\Theta (\bm{\eta}^{\star})\check{a} (\bm{\eta}^{\star})+\textnormal{\col}(\bm{\eta}^{\star}_{n +1},\dots,\bm{\eta}^{\star}_{2n})=0.
\end{align}
\end{lem}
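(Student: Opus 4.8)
The plan is to exhibit $\check{a}(\bm{\eta}^{\star})$ and $\chi$ explicitly by working on the invariant manifold $\bm{\eta}^{\star}=Q\bm{\xi}$ and then invoking the nonsingularity of the Hankel matrix supplied by Lemma \ref{theta-invertable-explicit}. First I would establish that, along the steady-state trajectory, $\bm{\eta}^{\star}=Q\bm{\xi}=\theta$. Differentiating $Q\bm{\xi}$ and using the steady-state generator \eqref{stagerator} together with the generalized Sylvester equation \eqref{MNGAMMAPhi}, one gets $\frac{d}{dt}(Q\bm{\xi})=Q\Phi(a)\bm{\xi}=(MQ+N\Gamma)\bm{\xi}=M(Q\bm{\xi})+N\bm{u}$, so $Q\bm{\xi}$ satisfies the same internal-model dynamics \eqref{IM-01} as $\bm{\eta}^{\star}$ driven by $\bm{u}$. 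Since $M$ is Hurwitz and $\bm{\xi}$ is bounded, the bounded solution is unique, whence $\bm{\eta}^{\star}=Q\bm{\xi}$, i.e.\ $\bm{\eta}^{\star}_{j}=\theta_{j}=Q_{j}\bm{\xi}$ for $1\le j\le 2n$. In particular $\Theta(\bm{\eta}^{\star})=\Theta(\theta)$ inherits nonsingularity for all $t\ge 0$.

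Next I would verify that the true coefficient vector $a(\sigma)$ solves \eqref{a-explicit0}. Forming the $k$-th row of $\Theta(\bm{\eta}^{\star})a(\sigma)+\col(\bm{\eta}^{\star}_{n+1},\dots,\bm{\eta}^{\star}_{2n})$ and substituting $\theta_{j}=\Gamma\Xi(a)^{-1}\Phi(a)^{j-1}\bm{\xi}$ factors it as $\Gamma\Xi(a)^{-1}\Phi(a)^{k-1}\big(\Phi(a)^{n}+\sum_{i=1}^{n}a_{i}(\sigma)\Phi(a)^{i-1}\big)\bm{\xi}$. Because $\Phi(a)$ is the companion matrix of $P(\varsigma)$, the Cayley--Hamilton theorem gives $\Phi(a)^{n}+\sum_{i=1}^{n}a_{i}(\sigma)\Phi(a)^{i-1}=P(\Phi(a))=0$, so every row vanishes and $a(\sigma)$ is a solution of \eqref{a-explicit0}. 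Since $\Theta(\bm{\eta}^{\star})$ is nonsingular, the solution is unique, so $\check{a}(\bm{\eta}^{\star})=-\Theta(\bm{\eta}^{\star})^{-1}\col(\bm{\eta}^{\star}_{n+1},\dots,\bm{\eta}^{\star}_{2n})=a(\sigma)$; continuity of $\check{a}(\cdot)$ then follows from continuity of matrix inversion on the set where $\Theta$ stays nonsingular.

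Finally I would recover $\bm{u}$ from the first $n$ components. By Lemma \ref{xu2019generic-explicit}, $\Xi(a)^{-1}=\col(Q_{1}(a),\dots,Q_{n}(a))$, so $\col(\bm{\eta}^{\star}_{1},\dots,\bm{\eta}^{\star}_{n})=\col(Q_{1}\bm{\xi},\dots,Q_{n}\bm{\xi})=\Xi(a)^{-1}\bm{\xi}$, giving $\bm{\xi}=\Xi(a)\col(\bm{\eta}^{\star}_{1},\dots,\bm{\eta}^{\star}_{n})$. Applying $\Gamma$ and replacing $a$ by $\check{a}(\bm{\eta}^{\star})=a(\sigma)$ yields $\bm{u}=\Gamma\bm{\xi}=\Gamma\Xi(\check{a}(\bm{\eta}^{\star}))\col(\bm{\eta}^{\star}_{1},\dots,\bm{\eta}^{\star}_{n})=\chi(\bm{\eta}^{\star},\check{a}(\bm{\eta}^{\star}))$, so $\chi$ fulfills \eqref{IM-01} and is continuous as a composition of continuous maps. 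The main obstacle is the well-definedness and continuity of $\check{a}(\cdot)$ for all $t\ge 0$: this hinges entirely on $\Theta(\bm{\eta}^{\star})$ being nonsingular along the whole trajectory, which is exactly what Lemma \ref{theta-invertable-explicit} delivers through the persistency-of-excitation argument; the remaining steps are then routine companion-matrix algebra.
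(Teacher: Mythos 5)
Your proof is correct and follows essentially the same route as the paper's: Cayley--Hamilton applied to the companion matrix $\Phi(a)$ yields the linear system \eqref{a-theta}, Lemma \ref{theta-invertable-explicit} supplies the nonsingularity of $\Theta$, and Lemma \ref{xu2019generic-explicit} recovers $\bm{\xi}=\Xi(a)\,\textnormal{\col}(\bm{\eta}^{\star}_{1},\dots,\bm{\eta}^{\star}_{n})$, exactly as in the paper. The only cosmetic difference is that you identify $\bm{\eta}^{\star}=Q\bm{\xi}$ through a uniqueness-of-bounded-solutions argument for $\dot{\eta}=M\eta+N\bm{u}$ with $M$ Hurwitz, whereas the paper evaluates the defining integral $\int_{-\infty}^{t}e^{M(t-\tau)}N\bm{u}\,d\tau$ explicitly using the Sylvester equation \eqref{MNGAMMAPhi} -- the same fact established in different clothing.
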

\begin{proof}{\myr 
Under Assumptions \ref{ass1-explicit} and \ref{ass4-explicit}, the matrix $\Phi(a(\sigma))$ admits the following characteristic polynomial $$\varsigma^{n}+a_{1}(\sigma)+a_{2}(\sigma)\varsigma+\dots+a_{n}(\sigma)\varsigma^{n-1}=0. $$
This, together with the Cayley-Hamilton Theorem, implies
\begin{align}\label{phi-a-charac}
-a_{1}(\sigma)I_{n}-\dots-a_{n}(\sigma)\Phi (a(\sigma))^{n -1}=\Phi (a(\sigma))^{n }.
\end{align}
It is noted from \eqref{Qdefini} that 
\begin{align*}
\theta &=\col(\theta_1,\dots,\theta_{2n})\\
&=\textnormal{\col}(Q_{1}\bm{\xi},\dots,Q_{2n}\bm{\xi}) 
\end{align*}
with $Q_{j} =\Gamma \Xi ^{-1}\Phi ^{j-1} $, $1\leq j\leq 2n $. 
Then, separately pre-multiplying the matrices $\Gamma \Xi ^{-1}$, $\Gamma \Xi ^{-1}\Phi$, \dots, $\Gamma \Xi ^{-1}\Phi^{n-1}$ and  post-multiplying the vector $\bm{\xi}$ on both sides of the matrix equation \eqref{phi-a-charac} results in 
\begin{align*}
\Gamma \Xi ^{-1}\Phi ^{n }\bm{\xi} &=-a_{1}\Gamma \Xi ^{-1}\bm{\xi}-\dots-a_{n}\Gamma \Xi ^{-1} \Phi ^{n -1}\bm{\xi}, \\
\Gamma \Xi ^{-1}\Phi ^{n+1 }\bm{\xi} &=-a_{1}\Gamma \Xi ^{-1}\Phi\bm{\xi}-\dots-a_{n}\Gamma \Xi ^{-1} \Phi ^{n}\bm{\xi}, \\
\vdots\;\;\;\;&\;\;\vdots\;\;\;\;\;\;\;\;\;\;\;\;\;\;\;\;\;\vdots \\
\Gamma \Xi ^{-1}\Phi ^{2n-1 }\bm{\xi} &=-a_{1}\Gamma \Xi ^{-1}\Phi^{n-1}\bm{\xi}-\dots-a_{n}\Gamma \Xi ^{-1} \Phi ^{2n-2}\bm{\xi}. 
\end{align*}
The last system of algebraic equations is equivalent to  }
\begin{align*}
\theta_{n +1}&=-a_{1}\theta_{1}-\dots-a_{n}\theta_{n},\\
\theta_{n +2}&=-a_{1}\theta_{2}-\dots-a_{n}\theta_{n +1},\\
\vdots\;\;\;\;\vdots&\;\;\;\;\;\;\;\;\;\;\;\;\;\;\;\;\;\vdots \\
\theta_{2n}&=-a_{1}\theta_{n}-\dots-a_{n}\theta_{2n -1},
\end{align*}
{\myr which can be put into the compact matrix form}
\begin{align}\label{a-theta}
{\myr-\underbrace{\left[\begin{matrix}\theta_{1} &\theta_{2}&\cdots{}&\theta_{n}\\
\theta_{2}&\theta_{3}&\cdots{}&\theta_{n+1}\\
\vdots&\vdots&\ddots&\vdots\\
\theta_{n} &\theta_{n +1}&\cdots{}&\theta_{2n -1}
\end{matrix}\right]}_{
\Theta (\theta )}\underbrace{\left[\begin{matrix}a_1\\ a_2\\\vdots\\ a_n\end{matrix}\right]}_a =\left[\begin{matrix}\theta_{n +1}\\\theta_{n +2}\\\vdots\\\theta_{2n}\end{matrix}\right].}
\end{align}
Under Assumptions \ref{H1}, \ref{ass1-explicit}, \ref{ass4-explicit}, and \ref{ass5-explicit},
from Lemma \ref{theta-invertable-explicit}, {\myr the constant vector $a$ can be calculated from \eqref{a-theta} with}
\begin{align}\label{a-explicit}
a =-\Theta(\theta )^{-1}\col(\theta_{n +1},\dots,\theta_{2n})\, {\myr \equiv}\, \check{a} (\theta).
\end{align}
Since $\col(\theta_1,\dots,\theta_{2n})= \textnormal{\col}(Q_{1}\bm{\xi},\dots,Q_{2n}\bm{\xi})$, we have
\begin{align*}
\col(\theta_{1},\dots,\theta_{n})=&\ {\myr \col(Q_{1}\bm{\xi},\dots,Q_{n}\bm{\xi})}\\
=&\ \col(Q_{1},\dots,Q_{n})\bm{\xi}.
\end{align*}
{Moreover, $\Xi(a)$ is non-singular and $\col(Q_{1},\dots,Q_{n})= \Xi(a)^{-1}$. As a result, we obtain}
\begin{align}\label{xi-solution-ex}
\bm{\xi} &=\underbrace{\left[\col(Q_{1},\dots,Q_{n})\right]^{-1}}_{\Xi (a)}\!\col(\theta_{1},\dots,\theta_{n}).
\end{align}
Hence, from the function $\check{a} (\theta)$ in \eqref{a-explicit}, we have 
\begin{align}\label{solution-chai-explicit}
\bm{u} (v,\sigma,w)&=   \Gamma   \bm{\xi} \left(v,\sigma,w\right)\nonumber\\
&=\Gamma \Xi \left(\check{a} (\theta )\right)\col(\theta_{1},\dots,\theta_{n})\,{\myr \equiv}\,\chi(\theta, \check{a} (\theta ) ),
\end{align}
which is globally defined and smooth.
Using the generalized Sylvester equation \eqref{MNGAMMAPhi}, the steady-state generator \eqref{IM-01} can be transformed into the steady-state generator \eqref{stagerator} which admits the solution
\begin{align*}
&\bm{\eta}^{\star}(v(t),\sigma,w) 
=\int_{-\infty}^{t} e^{M(t-\tau)}N \Gamma   \bm{\xi} (v(\tau),\sigma,w)d\tau\\
=&\int_{-\infty}^{t} e^{M(t-\tau)}(Q \Phi(a)-MQ)  \bm{\xi} (v(\tau),\sigma,w)d\tau\\
=&\int_{-\infty}^{t} \left[e^{M(t-\tau)}Q\dot{ \bm{\xi} }(v(\tau),\sigma,w)- e^{M(t-\tau)}MQ\bm{\xi} (v(\tau),\sigma,w)\right]\!d\tau\\
=&\, e^{Mt}\int_{-\infty}^{t} \frac{d (e^{-M\tau}Q{ \bm{\xi} }(v(\tau),\sigma,w))}{d\tau}d\tau\\
=&\, e^{Mt}  e^{-M\tau}Q{ \bm{\xi} }(v(\tau),\sigma,w)\Big|_{-\infty}^{t}\\
=&\, e^{Mt}  e^{-Mt}Q{ \bm{\xi} }(v(t),\sigma,w)-\underbrace{\lim_{\tau \rightarrow -\infty}e^{Mt}  e^{-M\tau}Q{ \bm{\xi} }(v(\tau),\sigma,w)}_{0}\\
=&\, Q{ \bm{\xi} }(v(t),\sigma,w).
\end{align*}
This implies that $\theta=\bm{\eta}^{\star}$, completing the proof. 
\end{proof}


\begin{thm}\label{cor-Lemma-explicit}%
For the composite system \eqref{Main-sys1} and \eqref{Exosys1} under Assumptions \ref{H1}, \ref{H2}, \ref{ass1-explicit}, \ref{ass4-explicit}, and \ref{ass5-explicit}, there are a positive smooth function $\rho (\cdot)$ and {\myr a large enough positive number} $k$ such that the regulator
\begin{subequations}\label{explicit-mas}
\begin{align}
\dot{\eta} &=M\eta+Nu\label{explicit-mas1}\\
u &=-k\rho(e)e + \chi(\eta, \check{a} (\eta )) \label{explicit-mas3}
\end{align}\end{subequations}
solves Problem \ref{ldlesp} with 
\begin{align*}
\chi (\eta, \check{a} (\eta ))&=\Gamma \Xi (\check{a} (\eta))\textnormal{\col}(\eta_{1},\dots,\eta_{n}),\\
\check{a} (\eta)&= -\Theta (\eta)^{-1}\textnormal{\col}(\eta_{n +1},\dots,\eta_{2n}),
\end{align*}
provided that 
$\Theta ^{-1}(\eta(t))$ always exists and is well-defined for any $t\geq 0$. 
\end{thm}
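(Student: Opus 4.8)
The plan is to treat this theorem as the meeting point of two earlier results: the stabilization statement of Theorem~\ref{Theorem-1}, which already solves Problem~\ref{ldlesp} once \emph{any} continuous mapping $\chi(\cdot)$ satisfying the internal-model relation \eqref{IM-01} is available, and the explicit construction of Lemma~\ref{Lemmamappingf}, which produces such a mapping under the polynomial steady-state generator assumption. The regulator \eqref{explicit-mas} has exactly the structure of \eqref{ESC-1}, the only new feature being that the abstract $\chi(\eta)$ is replaced by the explicit $\chi(\eta,\check{a}(\eta))$. Hence the task reduces to checking that this explicit mapping inherits the two properties exploited in the proof of Theorem~\ref{Theorem-1}: well-definedness and continuity along the closed-loop trajectory, and reproduction of the feedforward input $\bm{u}$ on the steady-state manifold.

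First I would record the steady-state property. By Lemma~\ref{Lemmamappingf}, the internal-model state evaluated along the exosystem trajectory satisfies $\bm{\eta}^{\star}=\theta=Q\bm{\xi}$, and the explicit mapping returns
$$\chi\!\left(\bm{\eta}^{\star},\check{a}(\bm{\eta}^{\star})\right)=\Gamma\Xi\!\left(\check{a}(\bm{\eta}^{\star})\right)\col(\bm{\eta}^{\star}_{1},\dots,\bm{\eta}^{\star}_{n})=\Gamma\bm{\xi}=\bm{u}(v,\sigma,w).$$
Thus on the manifold $\{\,\bar{\eta}=0,\,e=0\,\}$ the explicit controller delivers precisely the required feedforward input, so the term $u-\chi(\bm{\eta}^{\star})$ in \eqref{augmen-1c} becomes $-k\rho(e)e+\big[\chi(\eta,\check{a}(\eta))-\chi(\bm{\eta}^{\star},\check{a}(\bm{\eta}^{\star}))\big]$, which is the form the Lyapunov analysis of Theorem~\ref{Theorem-1} is built to handle.

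Next, under the stated provision that $\Theta(\eta(t))^{-1}$ exists and is well-defined for all $t\ge 0$, both $\check{a}(\eta)=-\Theta(\eta)^{-1}\col(\eta_{n+1},\dots,\eta_{2n})$ and $\chi(\eta,\check{a}(\eta))$ are continuous (indeed smooth) functions of $\eta$ along the trajectory; Lemma~\ref{theta-invertable-explicit} already guarantees this nonsingularity on the steady-state manifold $\eta=\bm{\eta}^{\star}$. Continuity then lets me write the bracketed difference as a continuous function of $\bar{\eta}$ and $e$ (recall $\eta=\bar{\eta}+\bm{\eta}^{\star}+\bar{b}^{-1}Ne$) that vanishes at the origin, so it is dominated by bounds in $\|\bar{Z}\|$ and $|e|$ of the same type used in the estimate \eqref{V0} and Property~\ref{property2}. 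With these ingredients the hypotheses of Theorem~\ref{Theorem-1} are met verbatim, and I would invoke it to obtain $\rho(\cdot)$, the threshold $k^{*}$, and the Lyapunov function $V(\bar{Z},e)$ certifying $\dot V\le -\|\bar{Z}\|^{2}-e^{2}$ for every $k\ge k^{*}$, which yields boundedness of all signals and $\lim_{t\to\infty}e(t)=0$.

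The main obstacle is the provision itself: nothing in Assumptions~\ref{H1}, \ref{H2}, \ref{ass1-explicit}, \ref{ass4-explicit}, \ref{ass5-explicit} forces $\Theta(\eta(t))$ to remain nonsingular during the transient, when $\eta$ departs from $\bm{\eta}^{\star}$, so $\check{a}$ and $\chi$ could become undefined or unbounded in finite time. The theorem sidesteps this by hypothesis, making it essentially a corollary of Theorem~\ref{Theorem-1} and Lemma~\ref{Lemmamappingf}; removing the hypothesis is exactly what motivates the nonparametric learning framework of Objective~\ref{obj2}, which is designed to enforce global well-definedness of the mapping. I would therefore present this proof as the ``existence-of-mapping'' half of the argument and flag the well-definedness gap explicitly, deferring its resolution to the learning construction.
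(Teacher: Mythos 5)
Your proposal is correct and follows essentially the same route as the paper: the paper's entire proof is the single remark that the result ``can be obtained directly from Theorem~\ref{Theorem-1},'' i.e.\ exactly your reduction of plugging the explicit mapping from Lemma~\ref{Lemmamappingf} into the abstract stabilization result, with the invertibility of $\Theta(\eta(t))$ taken as a standing hypothesis. Your additional verification of the steady-state identity $\chi(\bm{\eta}^{\star},\check{a}(\bm{\eta}^{\star}))=\bm{u}$ and your explicit flagging of the well-definedness gap match what the paper itself states immediately after the theorem as the motivation for the nonparametric learning framework of Objective~\ref{obj2}.
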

The results of Theorem \ref{cor-Lemma-explicit} can be obtained directly from Theorem \ref{Theorem-1}.

The existence of the nonlinear mapping $\chi \left(\eta, \check{a} (\eta )\right)$ strictly relies on the solution of a time-varying equation $$\Theta(\eta)\check{a} (\eta )+\textnormal{\col}(\eta_{n +1},\dots,\eta_{2n})=0.$$ It is noted that $\Theta(\eta(t))$ is not always invertible over $t\geq 0$ as the inverse of $ \Theta(\eta)$ may not exist or may not even be well-defined.
To achieve Objective \ref{obj2}, we introduce a nonparametric learning framework to solve the time-varying linear equation arising from the signal of the steady-state generator. The proposed framework differs from the classical parametric adaptive control approaches investigated in  \cite{liu2009parameter,serrani2001semi,adetola2014adaptive,tomei2023adaptive,yucelen2012low,broucke2024disturbance}, {\myr which relies on the explicit regressor construction typically generated from the structure of the composite system (consisting of the controlled system and internal model). The proposed framework does not require this explicit construction (see Fig.~\ref{framework}).}

\subsection{A nonparametric learning framework for the nonlinear robust output regulation problem}
\begin{figure}[htp]
    \centering
\includegraphics[width=0.48\textwidth,trim=40 0 0 0]{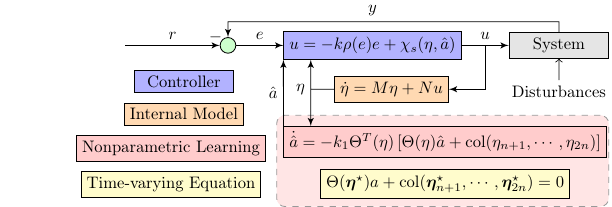}  
\caption{A nonparametric learning solution to the time-varying linear equation.}\label{framework}
\end{figure}

From  Assumption \ref{ass1-explicit}, it follows that $\bm{\eta} ^{\star}$ and $a$ belong to some compact set $\mathds{D}$. {\myr 
Given that the signal 
$\theta $ is inaccessible, and the constant vector 
$a$ cannot be directly calculated through \eqref{a-theta}, we must rely solely on the internal model 
$\eta$ for the composite system \eqref{Main-sys1}. Consequently, we propose the following nonparametric learning regulator. This approach circumvents the need for direct access to 
$\theta$ and allows for the estimation of 
$a$ using the available internal model signals, thus enhancing the robustness and adaptability of the control system in scenarios where direct measurement of 
$\theta$ is not feasible.
As a result, for the composite system \eqref{Main-sys1}, we propose the  nonparametric learning regulator}
\begin{subequations}\label{explicit-mas}
\begin{align}
\dot{\eta} &=M\eta+Nu\label{explicit-mas1}\\
\dot{\hat{a}}  &=- k_1 \Theta(\eta )\!^\top\!\left[\Theta (\eta )\hat{a} +\textnormal{\col}(\eta_{n +1},\dots,\eta_{2n})\right] \\
u &= -k \rho(e)e + \chi_s(\eta,\hat{a}) \label{explicit-mas3}
\end{align}\end{subequations}
where $\hat{a}$ is the estimation of unknown parameter vector $a$, $\rho (\cdot)\geq 1$ is a positive smooth function, and the smooth function {\myr $\chi_s(\eta,\hat{a})$ is given by
\begin{align*}
\chi_s(\eta, \hat{a})=\chi(\eta, \hat{a})\Psi(\delta+1-\|\col(\eta, \hat{a})\|^2),
\end{align*}
having a compact support with
$$\chi(\eta,\hat{a})\,{\myr \equiv}\,\Gamma \Xi (\hat{a} )\textnormal{\col}(\eta_{1},\dots,\eta_{n}),$$
 and $\Psi(\varsigma)=\frac{\psi(\varsigma)}{\psi(\varsigma)+\psi(1-\varsigma)}$, 
$\delta=\max_{(\eta,\hat{a})\in \mathds{D}}\|\col(\eta, \hat{a})\|^2$, and $$\psi(\varsigma)=\left\{\begin{array}{@{}cc}e^{-1/\varsigma} & \textnormal{for}\;\varsigma >0,\\
0 & \textnormal{for}\;\varsigma \leq0.
\end{array}\right.$$}
\begin{rem} 
One of the major differences with the adaptive methods in \cite{nikiforov1998adaptive,serrani2001semi,chen2002global,liu2009parameter} is that the nonparametric framework proposed in Fig.\ \ref{framework} does not rely on the regressor typically generated from the structure of the composite system, which consists of the controlled system and internal model. Instead, the nonparametric framework directly learns the unknown parameter vector 
$a$ from the internal model signal $\eta$ as shown in Fig.~\ref{framework}.
This approach eliminates the need for explicit regressor construction, potentially simplifying implementation and reducing computational complexity. Additionally, it allows for more flexible adaptation to various system dynamics without predefined model assumptions.
\end{rem}
 
We now perform the coordinate/input transformations
\begin{align*}
\bar{\eta} =&\,\eta-\bm{\eta}^{\star}-\bar{b}(\bar{z},e,\mu)^{-1}Ne,\;\\
\bar{u}_s= &\,u-\chi_s(\eta,\hat{a})\;\;\textnormal{and}\;\;\bar{a}=\hat{a}-a,
\end{align*}
which leads to the augmented system: 
\begin{subequations}\label{bara-deriv}\begin{align}
\dot{\bar{z}} =&\,\bar{f}(\bar{z},e,\mu), \label{bara-1a}\\
 \dot{\bar{\eta}} =&\,M\bar{\eta}+\bar{p}(\bar{z},e,\mu),\label{bara-1b}\\
\dot{e} =&  \,\bar{g}(\bar{z},\bar{\eta},e,\mu)  + \bar{b}(\bar{z},e,\mu)\bar{\chi}_s(\bar{\eta}_e,\bar{a})+ b(\bar{z},e,\mu)\bar{u}_s,\label{generic-1c}\\
\dot{\bar{a}}  =&-k_1 \Theta (\bm{\eta}^{\star} )\!^\top\!\Theta (\bm{\eta}^{\star} )\bar{a} - k_1 \bar{O}(\bar{\eta}_e,\bar{a}),
\end{align} \end{subequations}
where $\bar{f}(\bar{z},e,\mu)$, $\bar{p}(\bar{z},e,\mu)$, and $\bar{g}(\bar{Z},e,\mu)$ are defined in \eqref{augmen-1}, and
\begin{align*}
\bar{\chi}_s(\bar{\eta}_e,\bar{a},\mu)=&\, \chi_s\left(\bm{\eta}^{\star}+\bar{\eta}_e,\bar{a}+a\right)-\chi( \bm{\eta}^{\star},a)\\
\bar{\eta}_e=&\,\bar{\eta}+\bar{b}(\bar{z},e,\mu)^{-1}Ne\\  
\bar{O}(\bar{\eta}_e,\bar{a})=&\,\Theta (\bm{\eta}^{\star} )\!^\top\!\Theta (\bar{\eta}_e  )\bar{a}+\Theta (\bar{\eta}_e  )\!^\top\!\Theta (\bm{\eta}^{\star} )\bar{a}\\
&+\Theta(\bar{\eta}_e ) \! ^\top \! \Theta (\bar{\eta}_e  )a+\Theta (\bar{\eta}_e  )\!^\top\!\Theta (\bar{\eta}_e )\bar{a}\\
&+\Theta (\bm{\eta}^{\star} )\!^\top\!\Theta (\bar{\eta}_e  ){a} +\Theta (\bm{\eta}^{\star} )\!^\top\!\textnormal{\col}(\bar{\eta}_{e,n +1},\dots,\bar{\eta}_{e,2n})\\
&+\Theta (\bar{\eta}_e  )\!^\top\!\textnormal{\col}(\bar{\eta}_{e,n +1},\dots,\bar{\eta}_{e,2n}).
\end{align*}
Then, we can introduce the following properties of the control law \eqref{explicit-mas1} under Assumptions \ref{H1}, \ref{ass1-explicit}, \ref{ass4-explicit}, and \ref{ass5-explicit} {\myr with the following additional assumption on the function $b(z,y,\mu)$ that it is upper bounded by a positive constant $\bar{b}^*$ for any $\mu\in \mathds{V}\times \mathds{W} \times \mathds{S}$ and $\col(z,y)\in\mathds{R}^{n_z+1}$.}
\begin{lem}\label{lemmabodev} %
For the system \eqref{bara-deriv} under Assumptions \ref{H1}, \ref{H2}, \ref{ass1-explicit}, \ref{ass4-explicit}, and \ref{ass5-explicit}, we have the following properties:
\begin{prop}\label{property3}%
There are smooth integral Input-to-State Stable Lyapunov functions $V_{\bar{a}}{\myr \equiv}V_{\bar{a}}\big(\bar{a}\big)$ satisfying
\begin{align}
\underline{\alpha}_{\bar{a}}(\|\bar{a}\|^2)&\leq V_{\bar{a}}(\bar{a})\leq \bar{\alpha}_{\bar{a}}(\|\bar{a}\|^2),\notag\\
\dot{V}_{\bar{a}}\big|_{\eqref{explicit-mas1}} &\leq  -\alpha_{\bar{a}}(V_{\bar{a}}) +c_{ae}\|\bar{Z}\|^2+c_{ae}e^2,\label{V2-tildea}
\end{align}
for positive constant $c_{ae}$, and comparison functions $\underline{\alpha}_{\bar{a}}(\cdot)\in {\myr \mathcal{K}_{\infty}}$, $\bar{\alpha}_{\bar{a}}(\cdot)\in {\myr \mathcal{K}_{\infty}}$, {\myr $\alpha_{\bar{a}}(\cdot)\in \mathcal{K}_{o}$}.
\end{prop}
\begin{prop} \label{property4} There are positive constants $\phi_0$, $\phi_1$, and $\phi_2$ such that
\begin{align*}
{\myr|\bar{b}(\bar{z},e,\mu)|^2|\bar{\chi}_s(\bar{\eta}_e,\bar{a},\mu)|^2}\leq\ &\phi_0e^2+\phi_1\|\bar{Z}\|^2+\phi_{2}\alpha_{\bar{a}}(V_{\bar{a}})
\end{align*}
{\myr for $\mu\in \mathds{V}\times \mathds{W}\times \mathds{S}$.}
\end{prop}
\end{lem}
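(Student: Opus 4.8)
The plan is to verify Properties \ref{property3} and \ref{property4} separately, both resting on two structural facts. First, the Hankel matrix $\Theta(\cdot)$ is \emph{linear} in its argument, so that $\Theta(\bm{\eta}^{\star}+\bar{\eta}_e)=\Theta(\bm{\eta}^{\star})+\Theta(\bar{\eta}_e)$ and $\|\Theta(x)\|\le c_\Theta\|x\|$. Second, $\Theta(\bm{\eta}^{\star}(t))$ is \emph{uniformly} positive definite: Lemma \ref{theta-invertable-explicit} gives pointwise nonsingularity, and since $\bm{\eta}^{\star}$ evolves on the compact set $\mathds{D}$, continuity of the smallest singular value on $\mathds{D}$ yields a constant $\lambda_{\min}>0$ with $\Theta(\bm{\eta}^{\star})^{\top}\Theta(\bm{\eta}^{\star})\ge\lambda_{\min}I$ for all $t\ge0$. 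I would also use $\|\bm{\eta}^{\star}\|,\|a\|\le\sqrt{\delta}$ throughout.

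For Property \ref{property3} I would take the \emph{saturating} candidate $V_{\bar{a}}(\bar{a})=\ln(1+\|\bar{a}\|^2)$, whose bounds $\underline{\alpha}_{\bar{a}}=\bar{\alpha}_{\bar{a}}=\ln(1+\cdot)\in\mathcal{K}_{\infty}$ are immediate, so that $\dot{V}_{\bar{a}}=\tfrac{2\bar{a}^{\top}\dot{\bar{a}}}{1+\|\bar{a}\|^2}$ along the $\bar{a}$-dynamics in \eqref{bara-deriv}. The principal term $-k_1\bar{a}^{\top}\Theta(\bm{\eta}^{\star})^{\top}\Theta(\bm{\eta}^{\star})\bar{a}$ produces $-\tfrac{2k_1\|\Theta(\bm{\eta}^{\star})\bar{a}\|^2}{1+\|\bar{a}\|^2}\le -2k_1\lambda_{\min}\tfrac{\|\bar{a}\|^2}{1+\|\bar{a}\|^2}=-2k_1\lambda_{\min}(1-e^{-V_{\bar{a}}})$, which is a genuine $\mathcal{K}_{o}$ dissipation. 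The residual $-k_1\bar{a}^{\top}\bar{O}$ I would split by linearity of $\Theta$: the two cubic cross-terms are bounded by $c\|\bar{\eta}_e\|\tfrac{\|\bar{a}\|^2}{1+\|\bar{a}\|^2}$, the $\Theta(\bar{\eta}_e)^{\top}\Theta(\bar{\eta}_e)\bar{a}$ term becomes $-k_1\|\Theta(\bar{\eta}_e)\bar{a}\|^2\le0$ after contraction with $\bar{a}$, and the rest are bounded by $c(\|\bar{\eta}_e\|+\|\bar{\eta}_e\|^2)\tfrac{\|\bar{a}\|}{1+\|\bar{a}\|^2}$. The weight $\tfrac{1}{1+\|\bar{a}\|^2}$ renders every $\bar{a}$-factor bounded (using $\tfrac{\|\bar{a}\|^2}{1+\|\bar{a}\|^2}\le1$ and $\tfrac{\|\bar{a}\|}{1+\|\bar{a}\|^2}\le\tfrac12$), and a Young step splits each cross-term into a small multiple of $1-e^{-V_{\bar{a}}}$—absorbed into the dissipation by taking the Young parameter large—plus a multiple of $\|\bar{\eta}_e\|^2$. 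Finally $\|\bar{\eta}_e\|^2\le 2\|\bar{\eta}\|^2+2\|\bar{b}^{-1}N\|^2e^2\le c(\|\bar{Z}\|^2+e^2)$, where $\bar{b}^{-1}$ is bounded because $b^{*}\le b\le\bar{b}^{*}$; this gives \eqref{V2-tildea} with $\alpha_{\bar{a}}(V_{\bar{a}})=\tilde{\lambda}(1-e^{-V_{\bar{a}}})\in\mathcal{K}_{o}$.

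For Property \ref{property4} the compact support of $\chi_s$ is the key device. On $\mathds{D}$ one has $\Psi(\delta+1-\|\col(\bm{\eta}^{\star},a)\|^2)=1$, hence $\chi_s(\bm{\eta}^{\star},a)=\chi(\bm{\eta}^{\star},a)$ and $\bar{\chi}_s=\chi_s(\bm{\eta}^{\star}+\bar{\eta}_e,\bar{a}+a)-\chi_s(\bm{\eta}^{\star},a)$. Since $\chi_s$ is smooth with compact support it is globally bounded and globally Lipschitz with some constant $L$, and its support forces $\chi_s(\bm{\eta}^{\star}+\bar{\eta}_e,\bar{a}+a)=0$ whenever $\|\bar{a}\|>R$ with $R=\sqrt{\delta}+\sqrt{\delta+1}$. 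I would then split on $R$: for $\|\bar{a}\|\le R$ the Lipschitz estimate gives $|\bar{\chi}_s|^2\le 2L^2(\|\bar{\eta}_e\|^2+\|\bar{a}\|^2)$ and, since $\tfrac{\|\bar{a}\|^2}{1+\|\bar{a}\|^2}\ge\tfrac{\|\bar{a}\|^2}{1+R^2}$ there, $\|\bar{a}\|^2\le\tfrac{1+R^2}{\tilde{\lambda}}\alpha_{\bar{a}}(V_{\bar{a}})$; for $\|\bar{a}\|>R$ one has $\bar{\chi}_s=-\chi(\bm{\eta}^{\star},a)$, so $|\bar{\chi}_s|\le C_0$, while $\alpha_{\bar{a}}(V_{\bar{a}})\ge\tilde{\lambda}\tfrac{R^2}{1+R^2}>0$, whence $|\bar{\chi}_s|^2\le\tfrac{C_0^2(1+R^2)}{\tilde{\lambda}R^2}\alpha_{\bar{a}}(V_{\bar{a}})$. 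Combining both cases with $|\bar{b}|^2\le(\bar{b}^{*})^2$ and $\|\bar{\eta}_e\|^2\le c(\|\bar{Z}\|^2+e^2)$ produces the constants $\phi_0,\phi_1,\phi_2$.

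I expect the main obstacle to be Property \ref{property3}: the cubic cross-term $\|\bar{\eta}_e\|\,\|\bar{a}\|^2$ in $\bar{a}^{\top}\bar{O}$ precludes any $\mathcal{K}_{\infty}$ (ISS-type) dissipation, and it is precisely this term that forces the saturating Lyapunov function and the resulting $\mathcal{K}_{o}$ rate—this is the genuinely iISS feature of the estimator. A secondary subtlety, resolved by the compact support of $\chi_s$, is reconciling the unbounded $\|\bar{a}\|^2$ appearing in Property \ref{property4} with the bounded $\alpha_{\bar{a}}(V_{\bar{a}})$ that must dominate it.
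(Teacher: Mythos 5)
Your proposal is correct, and it splits naturally into two parts relative to the paper. For Property \ref{property3} you follow essentially the same route as the paper: the identical saturating Lyapunov function $V_{\bar{a}}=\ln(1+\|\bar{a}\|^2)$, the same uniform bound $\Theta(\bm{\eta}^{\star})^{\top}\Theta(\bm{\eta}^{\star})\geq \Theta_m I$ obtained from Lemma \ref{theta-invertable-explicit} plus boundedness of $\bm{\eta}^{\star}$, the same discarding of the nonnegative term $\bar{a}^{\top}\Theta(\bar{\eta}_e)^{\top}\Theta(\bar{\eta}_e)\bar{a}$, and the same Young-type absorption producing $\dot{V}_{\bar{a}}\leq-\alpha_{\bar{a}}(V_{\bar{a}})+c\|\bar{\eta}_e\|^2$ with $\alpha_{\bar{a}}(\varsigma)\propto(e^{\varsigma}-1)/e^{\varsigma}\in\mathcal{K}_o$; your write-up is, if anything, slightly more careful about where the uniform lower bound $\Theta_m$ comes from (compactness of $\mathds{D}$ and continuity of the smallest singular value, which the paper asserts implicitly). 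For Property \ref{property4}, however, you take a genuinely different and more self-contained path. The paper invokes Lemma A.1 and Remark A.1 of the cited reference to produce gains $\gamma_1,\gamma_2\in\mathcal{K}_o\cap\mathcal{O}(\emph{Id})$ with $|\bar{b}|^2|\bar{\chi}_s|^2\leq\gamma_1(\|\bar{\eta}_e\|^2)+\gamma_2(\|\bar{a}\|^2)$, and then verifies a $\limsup$ compatibility condition between $\gamma_2$ and $\alpha_{\bar{a}}$ through the substitution $\|\bar{a}\|^2=e^{\varsigma}-1$. You instead exploit the compact support of $\chi_s$ directly: global Lipschitz continuity plus boundedness, followed by the two-case split $\|\bar{a}\|\leq R$ (Lipschitz estimate, with $\|\bar{a}\|^2$ dominated by $\alpha_{\bar{a}}(V_{\bar{a}})$ on that region) and $\|\bar{a}\|>R$ (where $\chi_s$ vanishes identically, leaving the bounded term $-\chi(\bm{\eta}^{\star},a)$ dominated by the strictly positive value of $\alpha_{\bar{a}}$). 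Your version avoids the external comparison-function lemmas entirely and makes transparent the structural reason the bound holds — the saturation of $\bar{\chi}_s$ induced by the cutoff $\Psi$ exactly mirrors the saturation of the $\mathcal{K}_o$ rate $\alpha_{\bar{a}}$ — at the cost of a somewhat longer explicit case analysis; the paper's version is shorter but delegates the key estimate to cited machinery. Both arguments are sound and yield the same constants-level conclusion.
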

\begin{proof} 
From Lemma \ref{theta-invertable-explicit}, we have $\Theta (\bm{\eta}^{\star} )$ is non-singular and $\bm{\eta}^{\star}(t)$ is {\myr bounded} signal for all $t\geq 0$. Therefore, there exists positive constants $\Theta_m$ and $\Theta_M$ such that $$\Theta_m I_n \leq \Theta(\bm{\eta}^{\star}(t) )\!^\top\!\Theta (\bm{\eta}^{\star}(t) )\leq \Theta_M I_n,\;\;\forall t\geq 0.$$
It is noted that $\|\bm{\eta}^{\star}(t)\|\leq \eta_M$ over Assumption \ref{ass1-explicit} for some positive constant $\eta_M$ and for all $t\geq 0$. Together with $\|\Theta(\bar{\eta}_e )\|\leq n^{1/2}\|\bar{\eta}_e\|$ and $\bar{a}^\top\Theta(\bar{\eta}_e ) \! ^\top \! \Theta (\bar{\eta}_e )\bar{a}\geq 0$, we have
\begin{align*}
-\bar{a}^\top\bar{O}(\bar{\eta}_e,\bar{a}) \leq& -\!2\bar{a}^\top\Theta (\bm{\eta}^{\star} )\!^\top\!\Theta (\bar{\eta}_e )\bar{a} -\bar{a}^\top\Theta(\bar{\eta}_e ) \! ^\top \! \Theta (\bar{\eta}_e )a\\
& -\!\bar{a}^\top\Theta (\bm{\eta}^{\star} )\!^\top\!\left[\Theta (\bar{\eta}_e ){a}+\textnormal{\col}(\bar{\eta}_{e,n +1},\dots,\bar{\eta}_{e,2n})\right]\\
& -\!\bar{a}^\top\Theta(\bar{\eta}_e ) \! ^\top \! \textnormal{\col}(\bar{\eta}_{e,n +1},\dots,\bar{\eta}_{e,2n})\\
\leq & \, 2n\|\bar{a}\|\|\bar{\eta}_e\|\left(\eta_M\|\bar{a}\|+\|a\|\eta_M+\|a\|\|\bar{\eta}_e\|\right)\\
\leq &  \,\|a\|^2(\eta_M^2+1)\frac{4n^2}{\Theta_m}\|\bar{\eta}_e\|^2\\
&+\eta_M^2\|\bar{a}\|^2\frac{4n^2}{\Theta_m}\|\bar{\eta}_e\|^2+\tfrac{3\Theta_m}{4}\|\bar{a}\|^2\\
\leq & \, c_{\bar{a}}\|\bar{\eta}_e\|^2\|\bar{a}\|^2+2c_{\bar{a}}\|\bar{\eta}_e\|^2+{\tfrac{3}{4}\Theta_m}\|\bar{a}\|^2
\end{align*}
where $c_{\bar{a}}=\max\{4n^2\eta_M^2,4n^2\|a\|^2\eta_M^2, 4n^2\|a\|^2\}/\Theta_m$. 
Pose the Lyapunov function candidate $V_{\bar{a}}(\bar{a})=\ln(1+\|\bar{a}\|^2)$, which satisfies  $\|\bar{a}\|^2=e^{V_{\bar{a}}}-1$. Then, we have 
$\underline{\alpha}_{\bar{a}}(\varsigma)= \ln(\underline{c}_1 \varsigma^2+1)$ and $\bar{\alpha}_{\bar{a}}(\varsigma)=\bar{c}_1\varsigma^2$ for all $\varsigma>0$ and some positive constants $\underline{c}_1$ and $\bar{c}_1$. {\myr It is noted that $\bar{b}(\bar{z},e,\mu)=b(z,y,\mu)$ is {\myr upper and }lower bounded by some positive constants $\bar{b}^*$ and $b^*$} for any $\mu\in \mathds{V}\times \mathds{W} \times \mathds{S}$ and $\col(z,y)\in\mathds{R}^{n_z+1}$.  
The time
derivative of $V_{\bar{a}}$ along \eqref{bara-deriv} can be evaluated as
\begin{align}\label{V2alpha}
\dot{V}_{\bar{a}}&=\frac{-2 k_1 \bar{a}^\top\Theta (\bm{\eta}^{\star} )\!^\top\!\Theta (\bm{\eta}^{\star} )\bar{a} -2 k_1 \bar{a}^\top\bar{O}(\bar{\eta}_e, \bar{a})}{1+\|\bar{a}\|^2}\nonumber\\
&\leq \frac{-(1/2) k_1 \Theta_m\|\bar{a}\|^2+ k_1 c_{\bar{a}}\|\bar{\eta}_e\|^2\|\bar{a}\|^2+2c_{\bar{a}}\|\bar{\eta}_e\|^2}{1+\|\bar{a}\|^2}\nonumber\\
&\leq \frac{-(1/2) k_1 \Theta_m\|\bar{a}\|^2}{1+\|\bar{a}\|^2}+3 k_1 c_{\bar{a}}\|\bar{\eta}_e\|^2\nonumber \\
&\leq - \alpha_{\bar{a}}(V_{\bar{a}})+c_{ae}(\|\bar{Z}\|^2+ e^2) 
\end{align}
where $\alpha_{\bar{a}}(\varsigma)=\frac{\Theta_m k_1 (e^{\varsigma}-1)}{2e^{\varsigma}}$ and $c_{ae}=\max\{6 k_1 c_{\bar{a}},2/b^*\}$.

We now verify Property \ref{property4}.  
It can be verified that the function $\bar{\chi}_s(\bar{\eta}_e,\bar{a},\mu)=\bar{\chi}_s(\bar{\eta}+\bar{b}^{-1}Ne,\bar{a},\mu)$ is continuous and vanishes at $\col(\bar{z}, e, \bar{\eta})=\col(0,0,0)$.  
From \eqref{explicit-mas}, the function $\chi_s(\eta, \hat{a},\mu)$ is bounded for all $\col(\eta, \hat{a})\in \mathds{R}^{3n}$; by  \cite[Lemma A.1]{xu2016output} and \cite[Remark A.1]{xu2016output}, there exist  $\gamma_1, \gamma_2\in {\myr \mathcal{K}_o}\cap \mathcal{O}(\emph{Id})$ such that
\begin{align*}
|\bar{b}(\bar{z},e,\mu)|^2|\bar{\chi}_s(\bar{\eta}_e,\bar{a},\mu)|^2\leq \gamma_1(\|\bar{\eta}_e\|^2)+\gamma_2(\|\bar{a}\|^2),
\end{align*}
for $\forall\zeta\in \mathds{R}^{3n}$ and $\mu\in \mathds{V}\times \mathds{W}\times \mathds{S}$.
From \eqref{V2alpha}, it can also be verified that $\limsup_{\varsigma\rightarrow 0^{+}}\frac{\gamma_1(\varsigma)}{\varsigma}< +\infty$ and
{\myr \begin{align*}
\limsup_{\varsigma\rightarrow 0^{+}}\frac{\gamma_2\circ [(e^\varsigma-1)\underline{c}_1^{-1}]}{\alpha_{\bar{a}}(\varsigma)}=&\limsup_{\varsigma\rightarrow 0^{+}}\frac{\gamma_2\circ [(e^\varsigma-1)\underline{c}_1^{-1}]}{(e^\varsigma-1)\underline{c}_1^{-1}}\\
&\times\limsup_{\varsigma\rightarrow 0^{+}}\frac{(e^\varsigma-1)\underline{c}_1^{-1}}{\alpha_{\bar{a}}(\varsigma)}< +\infty.\end{align*}}
Hence, there exist positive constants $\phi_1$ and $\phi_2$ such that
\begin{align*}
|\bar{b}(\bar{z},e,\mu)|^2|\bar{\chi}_s(\bar{\eta}_e,\bar{a},\mu)|^2\leq \phi_0e^2+\phi_1\|\bar{Z}\|^2+\phi_{2}\alpha_{\bar{a}}(V_{\bar{a}}),
\end{align*}
for $\mu\in \mathds{V}\times \mathds{W}\times \mathds{S}$. Therefore, we have Property \ref{property4} for system \eqref{bara-deriv}.
\end{proof}

\begin{thm}\label{Thm-Lemma-explicit}%
For the composite system \eqref{Main-sys1} and \eqref{Exosys1} under Assumptions \ref{H1}, \ref{H2}, \ref{ass1-explicit}, \ref{ass4-explicit} and \ref{ass5-explicit}, there exists {\myr a large enough positive constant  $k$} and positive smooth function $\rho (\cdot)$, such that the controller \eqref{explicit-mas}
solves Problem \ref{ldlesp} and achieves Objective \ref{obj2} for any positive constant {\myr $k_1>0$.}
\end{thm}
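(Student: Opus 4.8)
The plan is to certify the augmented closed loop \eqref{bara-deriv} with a single composite Lyapunov function that fuses the ISS function of the regulated part, the high‑gain behaviour of the error channel, and the iISS function of the learning error, and then to read off both global boundedness and regulation from one negative‑definite bound. Concretely, I would mirror the construction behind Theorem~\ref{Theorem-1} but retain the two new features of \eqref{bara-deriv}: the estimation‑error feedthrough $\bar b\,\bar\chi_s$ entering $\dot e$, and the parameter dynamics $\dot{\bar a}$. I take $W \equiv \ell\,V_0(\bar Z) + \tfrac12 e^2 + \lambda\,V_{\bar a}(\bar a)$, with $V_0$ from Property~\ref{property1}, $V_{\bar a}$ the iISS function from Property~\ref{property3}, and free weights $\ell,\lambda>0$ to be fixed last.

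First I would differentiate $W$ along \eqref{bara-deriv}. The $V_0$ part contributes $-\ell\|\bar Z\|^2+\ell\bar\gamma(e)$ by \eqref{V0}. For $e\dot e = e\bar g + e\bar b\bar\chi_s - k\bar b\rho(e)e^2$ I would apply Young's inequality to the two cross terms: Property~\ref{property2} turns $e\bar g$ into $\tfrac12 e^2+\tfrac12\gamma_{g0}(\bar Z)\|\bar Z\|^2+\tfrac12 e^2\gamma_{g1}(e)$, while Property~\ref{property4} turns $e\bar b\bar\chi_s$ into $\tfrac12 e^2+\tfrac12\phi_0 e^2+\tfrac12\phi_1\|\bar Z\|^2+\tfrac12\phi_2\,\alpha_{\bar a}(V_{\bar a})$, and the high‑gain term is bounded above by $-kb^*\rho(e)e^2$ since $\bar b\ge b^*$. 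The weighted learning term contributes $-\lambda\,\alpha_{\bar a}(V_{\bar a})+\lambda c_{ae}\|\bar Z\|^2+\lambda c_{ae}e^2$ by \eqref{V2-tildea}. Collecting yields a bound whose three coefficient groups are: the $\alpha_{\bar a}(V_{\bar a})$ group $(\tfrac12\phi_2-\lambda)$, a $\|\bar Z\|^2$ group $(-\ell+\tfrac12\gamma_{g0}(\bar Z)+\tfrac12\phi_1+\lambda c_{ae})$, and an $e^2$ group consisting of smooth functions of $e$ multiplying $e^2$ (plus $\ell\bar\gamma(e)$) against $-kb^*\rho(e)e^2$.

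I would then fix the constants in sequence. Choosing $\lambda>\tfrac12\phi_2$ makes the $\alpha_{\bar a}(V_{\bar a})$ coefficient strictly negative; this is the \emph{only} dissipation the learning subsystem provides. The $\|\bar Z\|^2$ group cannot be handled by a constant $\ell$ because of the state‑dependent growth $\gamma_{g0}(\bar Z)$, so I would replace $\ell V_0$ by a function $\hat V_0=\varrho(V_0)$ via the changing‑supply‑rate technique, obtaining decay $-\hat\alpha(\|\bar Z\|^2)$ strong enough to dominate $\tfrac12\gamma_{g0}(\bar Z)\|\bar Z\|^2+(\tfrac12\phi_1+\lambda c_{ae})\|\bar Z\|^2$ at the price of a larger but purely $e$‑dependent gain. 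Finally, enlarging $\rho(\cdot)$ and then taking $k\ge k^*$ large lets $-kb^*\rho(e)e^2$ absorb all remaining $e$‑terms, giving $\dot W\le -c_1\|\bar Z\|^2-c_2 e^2-c_3\,\alpha_{\bar a}(V_{\bar a})\le0$ for positive constants and every $k_1>0$. Since $\underline\alpha_{\bar a}\in\mathcal K_\infty$, $W$ is radially unbounded in $(\bar Z,e,\bar a)$, so all trajectories are bounded; Barbalat's lemma applied to $\int_0^\infty e^2\,dt<\infty$ with $\dot e$ bounded then forces $e(t)\to0$, solving Problem~\ref{ldlesp}.

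The hard part is the interaction with the iISS learning subsystem. Unlike an ISS component, Property~\ref{property3} supplies only the \emph{bounded} dissipation rate $-\alpha_{\bar a}(V_{\bar a})$ with $\alpha_{\bar a}\in\mathcal K_o$, so the feedthrough $\bar b\,\bar\chi_s$ cannot be dominated by a generic gain and must be matched against this finite budget—this is exactly what Property~\ref{property4} is engineered to deliver, and it is why the ordering $\lambda>\tfrac12\phi_2$ is forced and is independent of the high gain $k$. A second, more conceptual point secures Objective~\ref{obj2}: the compact‑support multiplier $\Psi(\cdot)$ in $\chi_s$ and the gradient‑type update $\dot{\hat a}=-k_1\Theta(\eta)^\top[\Theta(\eta)\hat a+\col(\eta_{n+1},\dots,\eta_{2n})]$ never invert $\Theta(\eta)$, so the control law is globally well‑defined for all $t\ge0$ even where $\Theta(\eta)$ is singular; on the resulting attractive invariant set $\bar Z\to0$ drives $\eta\to\bm\eta^\star$, where $\Theta(\bm\eta^\star)$ is nonsingular by Lemma~\ref{theta-invertable-explicit}, so the learned parameter becomes consistent with the exact algebraic solution $\check a$ and the regulator reproduces the required steady‑state input.
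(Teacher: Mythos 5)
Your proposal is correct and follows essentially the same route as the paper's proof: a composite Lyapunov function $V_2(\bar Z)+\epsilon_{\bar a}V_{\bar a}(\bar a)+e^2$ built from the changing-supply-rate version of Property~\ref{property1}, the iISS bound of Property~\ref{property3}, and Properties~\ref{property2} and \ref{property4}, with the weights fixed in the same order (iISS weight $\epsilon_{\bar a}\geq\phi_2+1$ first, then the supply rate $\Theta_z(\bar Z)$, then $\rho(\cdot)$ and $k$), yielding $\dot U\leq-\|\bar Z\|^2-e^2-\alpha_{\bar a}(V_{\bar a})$. The only cosmetic differences are your factor-of-$\tfrac12$ bookkeeping from Young's inequality and your closing via Barbalat's lemma, where the paper concludes uniform asymptotic stability directly from the strict Lyapunov inequality; both are valid.
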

\begin{proof}
The error dynamics \eqref{augmen-1} with control \eqref{explicit-mas} are given by
\begin{equation}\label{generic-error-1}
\begin{aligned}
 \dot{\bar{z}} &= \bar{f}\left(\bar{z},e,\mu\right), \\
    \dot{\bar{\eta}}&= M\bar{\eta}+\bar{p}(\bar{z},e,\mu),\\
   \dot{\bar{a}} &=- k_1 \Theta (\bm{\eta}^{\star} )\!^\top\!\Theta (\bm{\eta}^{\star} )\bar{a} - k_1 \bar{O}({\myr \bar{\eta}_e},\bar{a}),\\
\dot{e}&=\bar{g}\left(\bar{z},\bar{\eta},e,\mu\right) +\bar{b}(\bar{z},e,\mu)\left(\bar{\chi}_s(\bar{\eta}_e,\bar{a},\mu)-k {\rho}(e)e\right).
\end{aligned}
\end{equation}
From Property \ref{property1}, with the same development and by using the changing supply rate technique \cite{sontag1995changing} again, given any smooth function $\Theta_z (\bar{Z} )>0$, there exists a $C^{1}$ function $V_{2}(\bar{Z})$ satisfying
$$\underline{\alpha}_{2}\big(\big\|\bar{Z} \big\|^2\big)\leq V_{2}\big( \bar{Z}  \big)\leq\overline{\alpha}_{2}\big(\big\|\bar{Z} \big\|^2\big)$$
 for some class $\mathcal{K}_{\infty}$ functions $\underline{\alpha}_{2}\left(\cdot\right)$ and $\overline{\alpha}_{2}\left(\cdot\right)$, such that, for all $\mu\in \mathds{V}\times \mathds{W}\times \mathds{S}$, along the trajectory of the {\myr $\bar{Z} $-subsystem}, $$\dot{V}_{2} \leq-\Theta_z (\bar{Z} )\big\|\bar{Z} \big\|^2+ \hat{\gamma} (e)e^2, $$
where $\hat{\gamma} \left(\cdot\right)\geq 1$ is some known smooth positive definite function. We now define a Lyapunov function $ U{\myr \equiv} U (\bar{Z}, \bar{a}, e)$ by
 \begin{align*}
 U(\bar{Z}, \bar{a}, e)=V_{2}\big( \bar{Z}  \big)+\epsilon_{\bar{a}} V_{\bar{a}}(\bar{a})+e^2
\end{align*}
 where the positive constant $\epsilon_{\bar{a}}$ is to be specified. From Property \ref{property1} and Lemma \ref{lemmabodev}, the time derivative of $U(t)$ along \eqref{generic-error-1} can be evaluated as
\begin{align}
    \dot{U}=&\,\dot{V}_{2}+\epsilon_{\bar{a}} \dot{V}_{\bar{a}}+2e\dot{e}\notag\\
    \leq&-\Theta_z (\bar{Z} )\big\|\bar{Z} \big\|^2 +3\epsilon_{\bar{a}}c_{\bar{a}}\|\bar{Z}\|^2-\epsilon_{\bar{a}}\alpha_{\bar{a}}(V_{\bar{a}})\notag\\
   & +b(\bar{z},e, \mu)^2|\bar{\chi}_s(\bar{\eta}_e,\bar{a},\mu)|^2+|\bar{g}(\bar{z},\bar{\eta},e,\mu)|^2\notag\\
    & -2k\bar{b}(\bar{z},e,\mu) {\rho}(e)e^2+e^2+ \hat{\gamma} (e)e^2\notag\\
    \leq&-(2k\bar{b}(\bar{z},e,\mu) {\rho}(e)-\gamma_{g1}(e)-1-c_{ae}-\phi_0-\hat{\gamma}(e))e^2\notag\\
    &-(\Theta_z (\bar{Z} )-\epsilon_{\bar{a}}c_{ae}-\gamma_{g0}(\bar{Z})-\phi_1)\big\|\bar{Z} \big\|^2 \notag\\
    &-(\epsilon_{\bar{a}}-\phi_2)\alpha_{\bar{a}}(V_{\bar{a}}).\notag
   \end{align}
It is noted that $\bar{b}(\bar{z},e,\mu)=b(z,y,\mu)$ is lower bounded by some positive constant $b^*$ for any $\mu\in \mathds{V}\times \mathds{W} \times \mathds{S}$ and $\col(z,y)\in\mathds{R}^{n_z+1}$.  
Hence, let the smooth functions $\Theta_z(\cdot )$, ${\rho}(\cdot)$, and the positive numbers $\epsilon_{\bar{a}}$ and ${k}$ be such that $\Theta_z (\bar{Z})\geq \epsilon_{\bar{a}}c_{ae}+\gamma_{g0}(\bar{Z})+\phi_1+1$, $\rho(e)\geq \max\{\hat{\gamma} \left(e \right), \gamma_{g1}(e), 1\}$, $\epsilon_{\bar{a}}\geq \phi_2+1$, and {\myr $k> \frac{5}{2b^*}\max\{2,\phi_0,c_{ae}\}\equiv k^*$}. Thus, it follows that 
\begin{align}\label{strlyafun}{\myr \dot{U}}\leq -\big\|\bar{Z}\big\|^2-e^2-\alpha_{\bar{a}}(V_{\bar{a}}).\end{align}
Since the function $U(\tilde{Z}, \bar{a}, {e})$ is positive definite and radially unbounded and satisfies inequality \eqref{strlyafun}, it is a strict Lyapunov function candidate, and,
therefore, it can be concluded that system \eqref{generic-error-1} is uniformly asymptotically stable for all $\col(v,w,\sigma)\in \mathds{V}\times \mathds{W}\times \mathds{S}$.
 This completes the proof.
\end{proof}
Based on the Corollary \ref{Theorem-2} and Theorem \ref{Thm-Lemma-explicit}, we also have the following result.
\begin{cor}\label{cor-Lemma-explicit}%
For the composite system \eqref{Main-sys1} and \eqref{Exosys1} under Assumptions \ref{H1} and \ref{H2}, there is a positive smooth function ${\rho}(\cdot)$, and for any positive constant $k_1$ such that the controller
\begin{subequations}\label{cor-ESC-2}%
\begin{align}
\dot{\eta} &=M\eta+Nu\label{cor-explicit-mas1}\\
\dot{\hat{a}}  &=- k_1 \Theta(\eta )\!^\top\!\left[\Theta (\eta )\hat{a} +\textnormal{\col}(\eta_{n +1},\dots,\eta_{2n})\right] \\
\dot{\hat{k}}&={\rho}(e)e^2\\
u &=-\hat{k} \rho(e)e + \chi_s(\eta,\hat{a}) \label{cor-explicit-mas3}
\end{align}\end{subequations}
solves Problem \ref{ldlesp}. 
\end{cor}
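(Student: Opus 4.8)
The plan is to combine the strict Lyapunov analysis of Theorem~\ref{Thm-Lemma-explicit} with the dynamic-gain argument sketched in the Remark following Corollary~\ref{Theorem-2}, so that the fixed high gain $k$ is replaced by the on-line estimate $\hat{k}$. I would proceed under the same hypotheses as Theorem~\ref{Thm-Lemma-explicit} (the controller uses $\chi_s$, $\Theta$ and $\hat{a}$, so Assumptions~\ref{ass1-explicit}, \ref{ass4-explicit} and \ref{ass5-explicit} are in force), applying the coordinate/input transformations $\bar\eta$, $\bar u_s$, $\bar a$ introduced before \eqref{bara-deriv} and adding the extra state $\tilde k=\hat k-k^*$, where $k^*$ is the threshold gain produced by Theorem~\ref{Thm-Lemma-explicit}. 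Initializing $\hat k(0)\ge 0$ and noting $\dot{\hat k}=\rho(e)e^2\ge 0$, the estimate $\hat k(t)$ is monotone nondecreasing and stays nonnegative for all $t\ge 0$.

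First I would recall the strict Lyapunov function $U=V_2(\bar Z)+\epsilon_{\bar a}V_{\bar a}(\bar a)+e^2$ built in the proof of Theorem~\ref{Thm-Lemma-explicit}, together with the intermediate estimate obtained there from Properties~\ref{property1}--\ref{property4} and Lemma~\ref{lemmabodev} before the gain is fixed,
\begin{align*}
\dot U\le -\big(2\hat k\,\bar b(\bar z,e,\mu)\rho(e)-G(e)\big)e^2-\|\bar Z\|^2-\alpha_{\bar a}(V_{\bar a}),
\end{align*}
where $G(e)=\gamma_{g1}(e)+1+c_{ae}+\phi_0+\hat\gamma(e)$ collects the destabilizing terms and the $\|\bar Z\|^2$ and $\alpha_{\bar a}$ coefficients have already been normalized to $1$ by the $\mu$-independent choices of $\Theta_z$ and $\epsilon_{\bar a}$. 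I would then augment $U$ with a \emph{scaled} quadratic penalty on the gain error, $U_1=U+\tfrac{1}{2\lambda}\tilde k^{2}$, with $\lambda>0$ to be fixed, whose derivative along \eqref{cor-ESC-2} is $\dot U_1=\dot U+\tfrac1\lambda\tilde k\,\rho(e)e^2$.

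Next I would collect the $\hat k$-dependent terms: the control contributes $-2\hat k\bar b\rho(e)e^2$ to $\dot U$ and the adaptation contributes $\tfrac1\lambda(\hat k-k^*)\rho(e)e^2$, so the quantity multiplying $-e^2$ in $\dot U_1$ equals $\hat k\rho(e)\big(2\bar b-\tfrac1\lambda\big)+\tfrac{k^*}{\lambda}\rho(e)-G(e)$. Using the lower bound $\bar b\ge b^*$, I would fix $\lambda\ge 1/b^*$ so that $2\bar b-\tfrac1\lambda\ge b^*>0$; since $\hat k\ge 0$ this renders the $\hat k$-linear part nonnegative for every $t\ge 0$. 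I would then take $\rho(\cdot)\ge 1$ with $\rho(e)\ge\max\{\hat\gamma(e),\gamma_{g1}(e)\}$ and $k^*$ large enough that $\tfrac{k^*}{\lambda}\rho(e)-G(e)\ge 1$, which is possible because $G(e)\le(3+c_{ae}+\phi_0)\rho(e)$. With these choices the $-e^2$ coefficient is at least $1$ uniformly in $\hat k\ge0$, yielding
\begin{align*}
\dot U_1\le -\|\bar Z\|^2-e^2-\alpha_{\bar a}(V_{\bar a})\le 0 .
\end{align*}

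Finally I would extract the conclusion. Since $U_1$ is positive definite and radially unbounded in $(\bar Z,\bar a,e,\tilde k)$ and $\dot U_1\le0$, all these variables remain bounded; in particular $\tfrac1{2\lambda}\tilde k^2\le U_1(0)$ bounds $\hat k$, so the monotone gain converges to a finite limit. Boundedness of the original closed-loop state then follows from the transformations, and integrating $\dot U_1\le-\|\bar Z\|^2-e^2-\alpha_{\bar a}$ gives $\bar Z,e\in\mathcal{L}_2$; as their derivatives are bounded, Barbalat's lemma yields $\bar Z\to0$ and $e\to0$, which solves Problem~\ref{ldlesp}. The main obstacle is precisely the gain term: an unscaled penalty $\tfrac12\tilde k^2$ leaves a contribution $\hat k\rho(e)(2\bar b-1)e^2$ of indefinite sign when $b^*<\tfrac12$, which would permit $\hat k\to\infty$; the scaling $\lambda\ge1/b^*$ is exactly what forces this term to be stabilizing and thereby guarantees boundedness of the adaptive gain.
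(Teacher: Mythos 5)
Your proposal is correct and follows essentially the paper's intended route: the paper gives no separate proof, saying only that the result follows from Corollary \ref{Theorem-2} and Theorem \ref{Thm-Lemma-explicit}, i.e., by augmenting the strict Lyapunov function $U(\bar{Z},\bar{a},e)$ of Theorem \ref{Thm-Lemma-explicit} with a quadratic penalty on the gain error $\hat{k}-k^*$ and then running the same completion-of-squares and Barbalat argument you give. Your one deviation --- the scaled penalty $\tfrac{1}{2\lambda}\tilde{k}^2$ with $\lambda \geq 1/b^*$ in place of the unscaled $\tfrac{1}{2}(\hat{k}-k^*)^2$ suggested in the remark after Corollary \ref{Theorem-2} --- is a genuine refinement rather than a different approach, since, as you note, the unscaled penalty leaves the sign-indefinite term $\hat{k}\rho(e)(2\bar{b}-1)e^2$ when $b^*<\tfrac{1}{2}$, so your choice of $\lambda$ closes a case that the paper's sketch glosses over while keeping the controller itself free of any knowledge of $b^*$.
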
%


\section{A Nonparametric Learning Framework for Parameter Estimation and Feedforward Control}\label{feedforwadsection}
\subsection{A single-input single-output (SISO) linear system}
We now apply our nonparametric learning framework in feedforward control design to solve the linear output regulation for the SISO system (see \cite{marino2007output}):
\begin{subequations}\label{nonparametric-feedforward}
\begin{align}
\dot{x}&=Ax+Bu+Pv,\quad x\in \mathds{R}^{n_x},\\
e&=Cx+Du+Fv,\quad e\in \mathds{R},\; u\in \mathds{R},
\end{align}    
\end{subequations}
where $A$, $B$, $P$, $C$, $D$, and $F$ are matrices of compatible dimensions, the objective is to design a feedforward control system that drives the tracking error $e$ to zero and rejects the disturbance. The exosystem state variable, $v\in\mathds{R}^{n}$, is generated by \eqref{Exosys1} with 
\[\myr s(v,\sigma)=\Phi (a(\sigma))v,\;y_0=\Gamma v\]
satisfying Assumption \ref{ass1-explicit} with output $y_0$. 
Under Assumption \ref{ass1-explicit}, $y_0$ is a multi-tone sinusoidal signal with $m$ distinct frequencies described as
\[\myr y_0= d_0 +\sum\nolimits_{k=1}^{m}d_{k}(\sin({\omega}_{k}(\sigma)t+\phi_{k}))
\]
where {\myr $d_{0}\in \mathds{R}$,} $d_{k}\neq 0$, $\psi_{k}\in \mathds{R}$, and {\myr ${\omega}_{k}(\sigma)\in \mathds{R}$} are unknown constant biases, amplitudes, initial phases, and frequencies. Moreover, system \eqref{Exosys1} has the characteristic polynomial
\begin{align*}
P(\varsigma)=\varsigma^{n}+a_{n}(\sigma)\varsigma^{n-1}+\dots+a_2(\sigma)\varsigma+a_1(\sigma)
\end{align*}
with {\myr unknown parameter vector $a(\sigma)=\col(a_1(\sigma),\dots, a_{n}(\sigma))$ acting as a reparametrization of the $m$ unknown frequency vector ${\omega}(\sigma)=\col({\omega}_1(\sigma),\dots,{\omega}_{m}(\sigma))$ with $n=2m$ ($d_0=0$) or $n=2m+1$ ($d_0\neq0$) (see \cite{marino2002global}).}


\subsection{Parameter estimation problem of the 
multi-tone sinusoidal signals with unknown frequencies}
\begin{figure}[htp]
    \centering
\includegraphics[width=0.48\textwidth,trim=40 0 0 0]{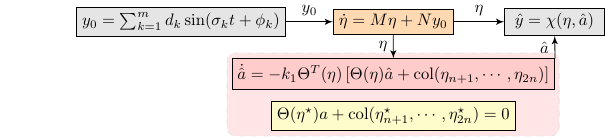}  
\caption{Nonparametric learning framework to parameter/frequency estimation.}\label{parameterframework}
\end{figure}
Following the above results, we can design the observer
\begin{subequations}\label{parameter}\begin{align}
\dot{\eta}&=M\eta+Ny_0\label{parameter1}\\
\dot{\hat{a}}&=-k_1\Theta(\eta)\!^\top\!\left[\Theta(\eta)\hat{a}+\textnormal{\col}(\eta_{n +1},\dots,\eta_{2n})\right]\label{parameter2}
\end{align}\end{subequations}
to reconstruct $y_0$ and estimate the frequencies $\omega_{k}(\sigma)$, 
where $\eta\in \mathds{R}^{2n}$, $M$, and $N$ are chosen based on \eqref{MNINter}. 
\begin{thm}\label{observerMfrequency}For any positive constant $k_1$,
system \eqref{parameter} can globally and exponentially reconstruct and estimate the signals $y_0$ and $\hat{a}$, provided that $y_0$ {\myr has} $m$ distinct frequencies with 
\begin{align*}
    \hat{y}&=\Gamma \Xi (\hat{a} )\textnormal{\col}(\eta_{1},\dots,\eta_{n})\,{\myr \equiv}\,\chi(\eta,\hat{a} ).
\end{align*}

\end{thm}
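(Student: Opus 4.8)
The plan is to exploit the cascade structure of \eqref{parameter}: the internal model state $\eta$ converges exponentially to the steady state $\bm{\eta}^\star$, which in turn drives the estimator $\hat{a}$ toward the true coefficient vector $a$. First I would observe that, with $s(v,\sigma)=\Phi(a(\sigma))v$ and $y_0=\Gamma v$, the signal $y_0$ plays the role of the steady-state output $\bm{u}$ in Lemma~\ref{Lemmamappingf}, so that the steady state $\bm{\eta}^\star=Qv$ satisfies $\dot{\bm{\eta}}^\star=M\bm{\eta}^\star+Ny_0$. Introducing $\tilde{\eta}=\eta-\bm{\eta}^\star$ and subtracting from \eqref{parameter1} gives $\dot{\tilde{\eta}}=M\tilde{\eta}$, and since $M$ in \eqref{MNINter} is Hurwitz, $\tilde{\eta}(t)\to0$ globally and exponentially, with $\|\tilde{\eta}(t)\|\le c_0e^{-\lambda_0 t}\|\tilde{\eta}(0)\|$ for some $c_0,\lambda_0>0$.

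Next I would analyze the estimation error $\bar{a}=\hat{a}-a$. Recalling from Lemma~\ref{Lemmamappingf} that $\theta=\bm{\eta}^\star$ and that $\Theta(\bm{\eta}^\star)a+\col(\bm{\eta}^\star_{n+1},\dots,\bm{\eta}^\star_{2n})=0$ by \eqref{a-explicit0}, and using the linearity of $\Theta(\cdot)$ and of $\col(\cdot_{n+1},\dots,\cdot_{2n})$ in their arguments, I would substitute $\eta=\bm{\eta}^\star+\tilde{\eta}$ and $\hat{a}=a+\bar{a}$ into \eqref{parameter2} to obtain
\[
\dot{\bar{a}}=-k_1\Theta(\bm{\eta}^\star)^\top\Theta(\bm{\eta}^\star)\bar{a}-k_1\bar{O}(\tilde{\eta},\bar{a}),
\]
where $\bar{O}$ collects all terms containing $\tilde{\eta}$; this is exactly the structure of Lemma~\ref{lemmabodev} with $\bar{\eta}_e$ replaced by the decoupled $\tilde{\eta}$ and with no $e$-coupling. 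By Lemma~\ref{theta-invertable-explicit} and the boundedness of $\bm{\eta}^\star$ (as established in the proof of Lemma~\ref{lemmabodev}), there are constants $0<\Theta_m\le\Theta_M$ with $\Theta_mI_n\le\Theta(\bm{\eta}^\star)^\top\Theta(\bm{\eta}^\star)\le\Theta_MI_n$ for all $t\ge0$, so the nominal part of the $\bar{a}$ dynamics is exponentially stable.

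Taking $V=\tfrac12\|\bar{a}\|^2$ and using $\|\Theta(\tilde{\eta})\|\le n^{1/2}\|\tilde{\eta}\|$ together with boundedness of $a$ and $\bm{\eta}^\star$, I would derive
\[
\dot V\le -k_1\Theta_m\|\bar{a}\|^2+k_1\big(C_2\|\tilde{\eta}\|+C_4\|\tilde{\eta}\|^2\big)\|\bar{a}\|^2+k_1\big(C_1\|\tilde{\eta}\|+C_3\|\tilde{\eta}\|^2\big)\|\bar{a}\|
\]
for suitable constants $C_1,\dots,C_4$. The argument then splits into two phases. On any finite interval $[0,T]$ the driving signal $\tilde{\eta}$ is bounded, so the right-hand side is at most affine in $V$ and no finite escape occurs. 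Since $\tilde{\eta}(t)\to0$ exponentially, there is a finite $T$ beyond which $k_1(C_2\|\tilde{\eta}\|+C_4\|\tilde{\eta}\|^2)\le\tfrac12k_1\Theta_m$; completing the square on the remaining linear-in-$\|\bar{a}\|$ term then yields, for $t\ge T$,
\[
\dot V\le-\tfrac12k_1\Theta_m V+k_1C_5\big(\|\tilde{\eta}\|^2+\|\tilde{\eta}\|^4\big),
\]
whose forcing decays exponentially, so by the comparison lemma $V(t)$, and hence $\bar{a}(t)$, converges to zero globally and exponentially, i.e.\ $\hat{a}\to a$ exponentially.

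Finally, since $\chi(\eta,\hat{a})=\Gamma\Xi(\hat{a})\col(\eta_1,\dots,\eta_n)$ is a polynomial (hence locally Lipschitz) map and $y_0=\chi(\bm{\eta}^\star,a)$ by \eqref{solution-chai-explicit}, once $(\eta,\hat{a})$ has entered a compact neighborhood of $(\bm{\eta}^\star,a)$ a Lipschitz bound gives $|\hat{y}-y_0|\le L(\|\tilde{\eta}\|+\|\bar{a}\|)$, so the reconstruction error decays exponentially; the $m$ frequencies, being the continuous root map of the characteristic polynomial with coefficients $\hat{a}$, converge accordingly. The main obstacle I anticipate is the cross term $k_1C_2\|\tilde{\eta}\|\|\bar{a}\|^2$, which is quadratic in $\bar{a}$ and therefore blocks a direct global-ISS cascade argument because the interconnection is not globally Lipschitz; the \emph{no finite escape, then exponential decay} two-phase argument, enabled by the uniform lower bound $\Theta_m>0$ coming from the persistence-of-excitation Lemma~\ref{theta-invertable-explicit}, is precisely what circumvents this difficulty.
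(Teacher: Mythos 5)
Your proof is correct, and its skeleton matches the paper's: the coordinate change $\tilde{\eta}=\eta-Qv$ combined with the Sylvester equation \eqref{MNGAMMAPhi} gives $\dot{\tilde{\eta}}=M\tilde{\eta}$ and exponential decay; the PE-based bounds $\Theta_m I_n\le\Theta(\bm{\eta}^{\star})^\top\Theta(\bm{\eta}^{\star})\le\Theta_M I_n$ from Lemma \ref{theta-invertable-explicit} stabilize the nominal $\bar{a}$-dynamics; and the final reconstruction estimate on $|\hat{y}-y_0|$ is the same Lipschitz-type bound. Where you genuinely diverge is in the treatment of the destabilizing coupling terms that are quadratic in $\bar{a}$. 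The paper first proves, via a crude Lyapunov computation on $V_{\hat{a}}=\hat{a}^\top\hat{a}$ that simply drops the negative semidefinite term, that $\|\hat{a}(t)\|$ grows at most polynomially; consequently the entire perturbation $v_m(1+\|\hat{a}\|)\|\tilde{\eta}\|$ decays exponentially (polynomial times exponential), and the $\bar{a}$-equation is then handled as an exponentially stable linear system driven by an exponentially decaying input. You instead keep the term $k_1C_2\|\tilde{\eta}\|\,\|\bar{a}\|^2$ explicitly and run a two-phase argument: no finite escape on any $[0,T]$ because the right-hand side is affine in $V$, and then, once $\|\tilde{\eta}\|$ is small enough that this term is dominated by $-\tfrac12 k_1\Theta_m\|\bar{a}\|^2$, completion of squares yields a linear differential inequality with exponentially decaying forcing. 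Both arguments hinge on the same PE lower bound $\Theta_m>0$; the paper's route isolates the growth estimate of $\hat{a}$ as a separate preliminary step and keeps the final comparison argument linear, while yours dispenses with that auxiliary computation at the cost of splitting the time axis — which makes more explicit why the non-globally-Lipschitz interconnection causes no global obstruction.
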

\begin{proof} Define the coordinate transformation
$$ \tilde{\eta}=\eta-Qv,$$
where $Q$ is given in \eqref{Qdefini}. 
Then, from \eqref{parameter}, the time derivative of $\tilde{\eta}$ along \eqref{parameter1} is given by
 \begin{align*}
\dot{\tilde{\eta}}&=M\eta+Ny_0-Q\Phi(a) v\\
 &=M\tilde{\eta}+MQv+N \Gamma v-Q\Phi(a) v\\
 &=M\tilde{\eta}+(MQ+N \Gamma-Q\Phi(a))v
 \end{align*}
 Then, from generalized Sylvester equation \eqref{MNGAMMAPhi}, we have
 \begin{align*} \dot{\tilde{\eta}}&=M\tilde{\eta},
  \end{align*}
  which further implies that $\tilde{\eta}$ converge to zero exponentially and $\|\eta\|$ is bounded by some positive constant $\eta_M$. We first show that $\hat{a}$ is bounded by some polynomial. For convenience, we let $V_{\hat{a}}=\hat{a}^\top\hat{a}$ and $ k_1 =1$. The time derivative along the trajectory \eqref{parameter2} can be evaluated as
\begin{align*}
\dot{V}_{\hat{a}}&=2\hat{a}^\top\dot{\hat{a}}\\
&=-2\hat{a}^\top\Theta(\eta)\!^\top\!\left[\Theta(\eta)\hat{a}+\textnormal{\col}(\eta_{n +1},\dots,\eta_{2n})\right]\\
&\leq 2\hat{a}^\top\Theta(\eta)\!^\top\!\textnormal{\col}(\eta_{n +1},\dots,\eta_{2n})\\
&\leq 2n^{1/2}\eta_M\sqrt{V_{\hat{a}}},
   \end{align*}
   which further implies that $\hat{a}$ is bounded by some polynomial $q_{\hat{a}}(t)$.
  Now, let $\bar{a}=\hat{a}-a$ and ${\eta}^{\star}=Qv$; from \eqref{parameter}, we have 
 \begin{align*}
  \dot{\bar{a}}=&-\Theta({\eta}^{\star})\!^\top\!\Theta({\eta}^{\star})\bar{a}\\
  &-\Theta({\eta}^{\star})\!^\top\!\left[\Theta(\tilde{\eta})\hat{a}+\textnormal{\col}(\tilde{\eta}_{n +1},\dots,\tilde{\eta}_{2n})\right]\\
  &-\Theta(\tilde{\eta})\!^\top\!\left[\Theta(\eta)\hat{a}+\textnormal{\col}(\eta_{n +1},\dots,\eta_{2n})\right],
    \end{align*}
 Let $V_{\bar{a}}=\bar{a}^\top\bar{a}$. The time derivative of $V_{\bar{a}}$ can be evaluated as 
  \begin{align}\label{dotVbara}
\dot{V}_{\bar{a}}=& - 2\bar{a}^\top\Theta({\eta}^{\star})\!^\top\!\Theta({\eta}^{\star})\bar{a}\notag\\
&- 2\bar{a}^\top\Theta({\eta}^{\star})\!^\top\!\left[\Theta(\tilde{\eta})\hat{a}+\textnormal{\col}(\tilde{\eta}_{n +1},\dots,\tilde{\eta}_{2n})\right]\notag\\
  &- 2\bar{a}^\top\Theta(\tilde{\eta})\!^\top\!\left[\Theta(\eta)\hat{a}+\textnormal{\col}(\eta_{n +1},\dots,\eta_{2n})\right].
  \end{align}
Since the signal $y_0$ is sufficiently
rich of order $2m$, the vector $v$ is persistently exciting. From Lemma \ref{theta-invertable-explicit}, we have $\Theta ({\eta}^{\star}(t) )$ is non-singular and ${\eta}^{\star}(t)$ is {\myr bounded by some positive constant $\eta_M$ for all $t\geq 0$}. Therefore, there exists positive constants $\Theta_m$ and $\Theta_M$ such that $$\Theta_m I_n \leq \Theta({\eta}^{\star}(t) )\! ^\top\! \Theta ({\eta}^{\star}(t))\leq \Theta_M I_n,\;\;\;\;\forall t\geq 0.$$
Then there exists some constant $v_m$ such that \eqref{dotVbara} satisfies
  \begin{align*}
\dot{V}_{\bar{a}}\leq& - 2V_{\bar{a}}+v_m(1+\|\hat{a}\|)\|\tilde{\eta}\|.
  \end{align*}
  We have proved that $\tilde{\eta}(t)$ converges to zero exponentially as $t\rightarrow \infty$ and $\|\hat{a}(t)\|$ is bounded by some polynomial $q_{\hat{a}}(t)$ for all $t\geq 0$. Hence, $v_m(1+\|\hat{a}(t)\|)\|\tilde{\eta}(t)\|$ converges to zero exponentially as $t\rightarrow \infty$, which further implies that $V_{\bar{a}}(t)$ converges to zero exponentially as $t\rightarrow \infty$.
It is noted from \eqref{xi-solution-ex} that 
\begin{align*}
    |\hat{y}-y_0|&=|\chi(\eta,\hat{a} )-\Gamma v|\\
    &=|\chi(\eta,\hat{a} )-\Gamma \Xi (a)\col(\eta_{1}^*,\dots,\eta_{n}^*)|\\
    &\leq\|\Gamma\|\left\|\left(\Xi (\bar{{a}} +a)-\Xi (a)\right)\textnormal{\col}(\eta_{1},\dots,\eta_{n})\right.\\
    &\quad -\Xi (a)\col(\tilde{\eta}_{1},\dots,\tilde{\eta}_{n})\|\\
    &\leq\|\Gamma\|\left(\|\eta\|\|\Xi (\bar{{a}} +a )-\Xi (a)\|+\|\Xi (a)\|\|\tilde{\eta}\|\right).
\end{align*}
which implies that $\lim\limits_{t\rightarrow \infty}\left(\hat{y}(t)-y_0(t)\right)=0$ exponentially. 
\end{proof}
Then, we have the following result.
\begin{cor}\label{corra-parameter}\myr $\Theta(\eta(t))$ is an invertible matrix-valued smooth function for $t\geq t_0$ and 
$$\lim\limits_{t\rightarrow \infty}[a-\Theta(\eta(t))^{-1}\textnormal{\col}(\eta_{n +1}(t),\dots,\eta_{2n}(t))]=0,\;\; \forall t\geq t_0$$
exponentially.
\end{cor}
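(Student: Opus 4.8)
The plan is to leverage two facts already established in the proof of Theorem~\ref{observerMfrequency}: that the Hankel map $\Theta(\cdot)$ is linear in its argument, so that $\Theta(\eta)=\Theta(\eta^{\star})+\Theta(\tilde{\eta})$ with $\eta^{\star}=Qv$ and $\tilde{\eta}=\eta-\eta^{\star}$, and that $\tilde{\eta}(t)\to 0$ exponentially while $\Theta_m I_n\leq\Theta(\eta^{\star}(t))^{\top}\Theta(\eta^{\star}(t))$ holds uniformly in $t$. The convergence half of the corollary then amounts to showing that the explicit mapping $\check{a}(\eta(t))=-\Theta(\eta(t))^{-1}\col(\eta_{n+1}(t),\dots,\eta_{2n}(t))$, which returns exactly $a$ when evaluated along $\eta^{\star}$ by \eqref{a-theta}, converges to $a$ exponentially. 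The first step is to pin down the uniform invertibility claim. The uniform lower bound gives $\sigma_{\min}(\Theta(\eta^{\star}(t)))\geq\sqrt{\Theta_m}$ for all $t\geq 0$, and since $\|\Theta(\tilde{\eta})\|\leq n^{1/2}\|\tilde{\eta}\|\to 0$ exponentially, I would choose $t_0$ large enough that $n^{1/2}\|\tilde{\eta}(t)\|\leq\tfrac12\sqrt{\Theta_m}$ for all $t\geq t_0$. A singular-value perturbation bound then yields $\sigma_{\min}(\Theta(\eta(t)))\geq\sqrt{\Theta_m}-n^{1/2}\|\tilde{\eta}(t)\|\geq\tfrac12\sqrt{\Theta_m}>0$, so $\Theta(\eta(t))$ is nonsingular for $t\geq t_0$ with the uniform bound $\|\Theta(\eta(t))^{-1}\|\leq 2/\sqrt{\Theta_m}$. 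Smoothness of $\Theta(\eta(t))^{-1}$ on $[t_0,\infty)$ is then immediate, since $\eta(t)$ is smooth in $t$ (the solution of \eqref{parameter1} driven by the smooth $y_0$), the Hankel entries depend linearly on $\eta$, and matrix inversion is smooth on the open set of nonsingular matrices.

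For the convergence claim, writing $A=\Theta(\eta^{\star})$, $B=\Theta(\tilde{\eta})$, $c=\col(\eta^{\star}_{n+1},\dots,\eta^{\star}_{2n})$ and $d=\col(\tilde{\eta}_{n+1},\dots,\tilde{\eta}_{2n})$, linearity gives $\Theta(\eta)=A+B$ and $\col(\eta_{n+1},\dots,\eta_{2n})=c+d$, while \eqref{a-theta} gives $a=-A^{-1}c$. The resolvent identity $(A+B)^{-1}-A^{-1}=-(A+B)^{-1}BA^{-1}$ then collapses the difference into the single compact form
\begin{align*}
\check{a}(\eta)-a=(A+B)^{-1}\!\left[BA^{-1}c-d\right].
\end{align*}
I would bound each factor using the uniform estimates valid for $t\geq t_0$: $\|(A+B)^{-1}\|\leq 2/\sqrt{\Theta_m}$, $\|A^{-1}\|\leq 1/\sqrt{\Theta_m}$, $\|c\|\leq\|\eta^{\star}\|\leq\eta_M$, $\|B\|\leq n^{1/2}\|\tilde{\eta}\|$ and $\|d\|\leq\|\tilde{\eta}\|$. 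This produces a Lipschitz-type estimate $\|\check{a}(\eta(t))-a\|\leq C\|\tilde{\eta}(t)\|$ with $C$ depending only on $\Theta_m$, $\eta_M$ and $n$, and since $\tilde{\eta}(t)\to 0$ exponentially the same rate is inherited by $\check{a}(\eta(t))-a$, which is exactly the exponential convergence asserted in the corollary.

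The delicate point, and the one I would watch most closely, is the uniformity needed to upgrade an asymptotic statement to an exponential one. If one only knew that $\Theta(\eta(t))$ became invertible eventually, the constant $C$ could in principle blow up as $\Theta(\eta(t))$ approached singularity and degrade the rate. It is precisely the uniform lower bound $\Theta_m I_n\leq\Theta(\eta^{\star})^{\top}\Theta(\eta^{\star})$ --- inherited from the persistency of excitation of $v$ through Lemma~\ref{theta-invertable-explicit} --- together with the exponential decay of $\tilde{\eta}$ that keeps $\|\Theta(\eta(t))^{-1}\|$ below a fixed constant for all $t\geq t_0$ and thereby renders $C$ independent of $t$. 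Everything else reduces to routine linear-algebra bookkeeping.
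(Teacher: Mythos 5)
Your proof is correct, and it reaches the same conclusion via the same two pillars as the paper --- the persistency-of-excitation lower bound $\Theta_m I_n \leq \Theta(\eta^{\star})^{\top}\Theta(\eta^{\star})$ from Lemma \ref{theta-invertable-explicit} and the exponential decay of $\tilde{\eta}$ from Theorem \ref{observerMfrequency} --- but the technical execution of both steps differs from the paper's. For invertibility, the paper never invokes linearity of the Hankel map; it instead shows that $\Theta(\eta)^{\top}\Theta(\eta)-\Theta(\eta^{\star})^{\top}\Theta(\eta^{\star})\to 0$ exponentially and sandwiches this difference between $\pm\tfrac{\epsilon_{\Theta}}{2}I_n$ after some $t_0$, whereas you apply a Weyl-type singular-value perturbation bound $\sigma_{\min}(\Theta(\eta))\geq\sigma_{\min}(\Theta(\eta^{\star}))-\|\Theta(\tilde{\eta})\|$ directly, which also hands you the uniform bound $\|\Theta(\eta(t))^{-1}\|\leq 2/\sqrt{\Theta_m}$ needed later. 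For the convergence rate, the paper decomposes $a-a_{\eta}$ as $\left[\Theta(\eta^{\star})^{-1}-\Theta(\eta)^{-1}\right]\col(\eta^{\star}_{n+1},\dots,\eta^{\star}_{2n})-\Theta(\eta)^{-1}\col(\tilde{\eta}_{n+1},\dots,\tilde{\eta}_{2n})$ and then appeals to Lemma 11.1 of \cite{chen2015stabilization} to bound $\|\Theta(\eta^{\star})^{-1}-\Theta(\tilde{\eta}+\eta^{\star})^{-1}\|$ by $\gamma_{\Theta}(\tilde{\eta},\eta^{\star})\|\tilde{\eta}\|$ for some smooth $\gamma_{\Theta}$; you instead exploit $\Theta(\eta)=\Theta(\eta^{\star})+\Theta(\tilde{\eta})$ and the resolvent identity to get the closed-form error $(A+B)^{-1}\left[BA^{-1}c-d\right]$ with every factor bounded explicitly. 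Your route is more elementary and self-contained: it produces an explicit Lipschitz constant $C$ depending only on $\Theta_m$, $\eta_M$, $n$, and it makes transparent exactly where uniformity in $t$ is needed to upgrade convergence to an exponential rate --- a point the paper handles implicitly, since its bound $c_{\Theta}(\gamma_{\Theta}(\tilde{\eta}(t),\eta^{\star}(t))+1)\|\tilde{\eta}(t)\|$ delivers the exponential rate only after one additionally observes that $\gamma_{\Theta}$ stays bounded along the (bounded) trajectories. The paper's route, in exchange, avoids linear-algebra bookkeeping by reusing a standard lemma from the nonlinear output-regulation toolbox.
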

\begin{proof} From Lemma \ref{theta-invertable-explicit}, we have $\Theta({\eta}^{\star}(t))$ is non-singular for any $t\geq 0$ with ${\eta}^{\star}(t)=Qv(t)$. 
Besides ${\eta}^{\star}(t)$ is {\myr bounded} for all $t\geq 0$, there exists a constant $\epsilon_{\Theta}$ such that $$ \Theta({\eta}^{\star}(t))\!^\top\!\Theta({\eta}^{\star}(t)) \geq \epsilon_{\Theta}I_n,\;\;\;\;\forall t\geq 0.$$
From Theorem \ref{observerMfrequency}, $\eta(t)$ and ${\eta}^{\star}(t)$ are uniformly bounded for all $t\geq0$, and $\lim\limits_{t\rightarrow \infty}\left(\eta(t)-{\eta}^{\star}(t)\right)=0$ exponentially, which implies that
\begin{align*}
\lim\limits_{t\rightarrow \infty}\left[\Theta({\eta}^{\star}(t))\!^\top\!\Theta({\eta}^{\star}(t))-\Theta(\eta(t))\!^\top\!\Theta(\eta(t))\right]=0
\end{align*}
exponentially. Hence there exists a time $t_0$ such that, for $t\geq t_0$,
\begin{align*}
-\frac{\epsilon_{\Theta}}{2}I_n \leq \left[\Theta(\eta(t))\!^\top\!\Theta(\eta(t))-\Theta({\eta}^{\star}(t))\!^\top\!\Theta({\eta}^{\star}(t))\right]\leq \frac{\epsilon_{\Theta}}{2}I_n.
\end{align*}
Then, for $t\geq t_0$, we have that
\begin{align*}
\frac{\epsilon_{\Theta}}{2}I_n \leq \Theta(\eta(t))\!^\top\!\Theta(\eta(t))\leq \frac{3\epsilon_{\Theta}}{2}I_n,
\end{align*}
which implies that $\Theta(\eta(t))$ is an invertible matrix-valued smooth function for $t\geq t_0$.  Let $a_\eta(t)=-\Theta(\eta(t))^{-1}\textnormal{\col}(\eta_{n +1}(t),\dots,\eta_{2n}(t))$ for $t\geq t_0$. From Theorem \ref{observerMfrequency}, $\eta(t)$ and ${\eta}^{\star}(t)$ are uniformly bounded for $t\geq 0$. 
Then, for some constant $c_{\Theta}$, from \eqref{a-explicit0}, we have that
\begin{align*}
\|a-a_\eta\|&=\left\|\Theta({\eta}^{\star})^{-1}\textnormal{\col}(\eta_{n +1}^*,\dots,\eta_{2n}^*)\right.\\
&\quad\quad -\Theta(\eta)^{-1}\textnormal{\col}(\eta_{n +1},\dots,\eta_{2n})\|\\
&=\left\|\left[\Theta({\eta}^{\star})^{-1}-\Theta(\eta)^{-1}\right]\textnormal{\col}(\eta_{n +1}^*,\dots,\eta_{2n}^*)\right.\\
&\quad\quad-\Theta(\eta)^{-1}\textnormal{\col}(\tilde{\eta}_{n +1},\dots,\tilde{\eta}_{2n})\|\\
&\leq  c_{\Theta}(\left\|\Theta({\eta}^{\star})^{-1}-\Theta(\tilde{\eta}+{\eta}^{\star})^{-1}\right\|+\|\tilde{\eta}\|).
\end{align*}
{\myr It can be verified that the function $\left\|\Theta({\eta}^{\star}(t))^{-1}-\Theta(\tilde{\eta}(t)+{\eta}^{\star}(t))^{-1}\right\|$ is { continuous} for $t\geq t_0$ and vanishes at $\tilde{\eta}=0$.
 By \cite[Lemma 11.1]{chen2015stabilization}, there exists a positive smooth function $\gamma_{\Theta}(\cdot)$ such that
$$\left\|\Theta({\eta}^{\star})^{-1}-\Theta(\tilde{\eta}+{\eta}^{\star})^{-1}\right\|\leq \gamma_{\Theta}(\tilde{\eta}, {\eta}^{\star})\|\tilde{\eta}\|.$$
 As a result, we have that 
 \begin{align*}
 \|a-a_\eta(t)\|&\leq  c_{\Theta}\left(\gamma_{\Theta}(\tilde{\eta}(t), {\eta}^{\star}(t))+1\right)\|\tilde{\eta}(t)\|,\;\;\;\;\forall t\geq t_0.
 \end{align*}
This inequality, together with $\lim\limits_{t\rightarrow \infty}\tilde{\eta}(t)=0$ exponentially, implies that $\lim\limits_{t\rightarrow \infty}(a-a_\eta(t))=0$ exponentially for $t\geq t_0$. }
\end{proof}

{\begin{rem}
Many observer design techniques exist for online parameter/frequency estimation using on traditional adaptive control methods as in \cite{marino2002global,xia2002global,hou2011parameter,pin2019identification}.
Unlike the adaptive iterative learning of the learning to minimize error signals, such as in the MIT rule, the non-adaptive control approach naturally avoids the burst phenomenon \cite{anderson1985adaptive}.
In addition, the non-adaptive method in \cite{marconi2007output,isidori2012robust,xu2019generic} avoids the need for the construction of a Lyapunov function or an input-to-state stability (ISS) Lyapunov function for the closed-loop system with positive semidefinite derivatives.
%
In fact, a counterexample was used in \cite{chen2023lasalle} to show that the boundedness property cannot be guaranteed when the derivative of an ISS Lyapunov function is negative semidefinite, even with a small input term. 
Moreover, Corollary \ref{corra-parameter} also demonstrates that after a certain time, the observer \eqref{parameter} can directly provide the estimated parameters.
\end{rem}}
\subsection{Feedforward control design for output regulation}
To solve the output regulation problem by using feedforward control design, we assume that $(A, B)$ is stabilizable and the linear matrix equations:
\begin{align}\label{sim-regulator-equation}
    X \Phi (a)=AX+BU+P\;\textnormal{and}\; 0=CX+DU+F    
\end{align}
admits a solution pair $(X,U)$. As shown in \cite{huang2004nonlinear}, the matrix equations \eqref{sim-regulator-equation} can be put into the form 
$$\mathcal{A}\zeta_{a}=\mathcal{B}$$
where $\zeta_{a}=\textnormal{vec}(\col(X,U))$, $\mathcal{B}=\textnormal{vec}(\col(P,F))$, and 
$$\mathcal{A}=\Phi^\top(a)\otimes \left[\begin{matrix}I_{n_x}&0\\0& 0\end{matrix}\right]-I_n\otimes \left[\begin{matrix}A&B\\C& D\end{matrix}\right].$$
Since $\Phi(a)$ contains the unknown parameters, $a$, motivated by \cite{wang2021cooperative}, we introduce the equation
\begin{equation} \label{time-varying-estimator}
\dot{\hat{\zeta}}_a = -k_2 \mathcal{A}^\top\!(t) \bigl(\mathcal{A}(t) \hat{\zeta}_{a} - \mathcal{B} \bigr)
\end{equation}
to solve the algebraic equation adaptively, where
$$\mathcal{A}(t)=\Phi^\top\!(\hat{a}(t))\otimes \left[\begin{matrix}I_{n_x}&0\\0& 0\end{matrix}\right]-I_n\otimes \left[\begin{matrix}A&B\\C& D\end{matrix}\right],$$
with $\hat{a}$ is generated by \eqref{parameter2}.
From Lemma 4.1 in \cite{wang2021cooperative}, for any initial condition $\hat{\zeta}_{a}(t_0)$ and any positive constant $k_2$, \eqref{time-varying-estimator}
 adaptively solves the matrix equations \eqref{sim-regulator-equation} in the sense that 
\begin{align}\label{XUCOL}
 \lim\limits_{t\rightarrow\infty} \left(
\textnormal{\col}(\hat{X}(t),\hat{U}(t))
- \textnormal{\col}(X,U)\right)  = 0,
 \end{align}
exponentially, where $\col(\hat{X}(t),\hat{U}(t)) = M_{n_x +1} ^{n}(\hat{\zeta}_{a}(t))$.

We now introduce the feedforward control law for system \eqref{nonparametric-feedforward},
\begin{align}\label{feedforward-contol}
u=K_xx+ \big(\hat{U}(t)-K_x\hat{X}(t)\big) \Xi (\hat{a} )\textnormal{\col}(\eta_{1},\dots,\eta_{n}),
  \end{align}
where $K_x$ is such that $A+BK_x$ is Hurwitz.
\begin{thm}
    For systems \eqref{Exosys1} and \eqref{nonparametric-feedforward}, under Assumption \ref{ass1-explicit},
   for any positive $k_1$ and $k_2$, the output regulation can be solved by the feedforward control law composed of \eqref{parameter}, \eqref{time-varying-estimator}, and \eqref{feedforward-contol}.
\end{thm}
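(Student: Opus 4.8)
The plan is to exploit the cascade structure of the closed loop. The frequency observer \eqref{parameter} together with the regulator-equation solver \eqref{time-varying-estimator} form an autonomous subsystem that is driven only by the exosystem output $y_0=\Gamma v$ and does not receive feedback from the plant state $x$; its convergence has already been established, and it feeds the plant \eqref{nonparametric-feedforward} through the feedforward law \eqref{feedforward-contol}. I would therefore first collect the relevant exponential-convergence facts about the estimators, then close the argument on the plant error by a standard Hurwitz-plus-vanishing-input estimate.

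First I would record three facts. By Assumption \ref{ass1-explicit} the signal $y_0$ is a bias plus $m$ distinct-frequency sinusoids, so Theorem \ref{observerMfrequency} applies: $\tilde{\eta}:=\eta-Qv\to0$ exponentially with $\eta$ uniformly bounded, and $\hat{a}\to a$ exponentially (hence $\hat{a}$ is bounded on $[0,\infty)$). Specializing the identity \eqref{xi-solution-ex} to the present situation, where the steady-state generator coincides with the exosystem so that $\bm{\xi}=v$ and $\bm{\eta}^{\star}=Qv$, yields $\Xi(a)\col(\eta^{\star}_{1},\dots,\eta^{\star}_{n})=v$. Writing $\hat{v}(t):=\Xi(\hat{a})\col(\eta_{1},\dots,\eta_{n})$, I would then bound $\|\hat{v}-v\|\leq\|\eta\|\,\|\Xi(\hat{a})-\Xi(a)\|+\|\Xi(a)\|\,\|\tilde{\eta}\|$, exactly as in the estimate for $|\hat{y}-y_0|$ inside the proof of Theorem \ref{observerMfrequency}; since $\Xi(\cdot)$ is polynomial (hence locally Lipschitz) in its argument and $\eta,\hat{a}$ are bounded, this gives $\hat{v}\to v$ exponentially. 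Finally, because $\hat{a}(t)\to a$ exponentially the matrix $\mathcal{A}(t)$ converges exponentially to the true $\mathcal{A}$ of \eqref{sim-regulator-equation}, so Lemma 4.1 of \cite{wang2021cooperative}, i.e.\ \eqref{XUCOL}, gives $\col(\hat{X},\hat{U})\to\col(X,U)$ exponentially with $\hat{X},\hat{U}$ bounded.

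The second main step is to analyze the plant deviation $\tilde{x}:=x-Xv$ from the steady-state manifold. Using $X\Phi(a)=AX+BU+P$ from \eqref{sim-regulator-equation} I obtain $\dot{\tilde{x}}=A\tilde{x}+B(u-Uv)$. Substituting \eqref{feedforward-contol} and adding and subtracting $(U-K_xX)v$, the input rewrites as $u-Uv=K_x\tilde{x}+\Delta$ with $\Delta:=(\hat{U}-K_x\hat{X})\hat{v}-(U-K_xX)v$. Since $\hat{X}\to X$, $\hat{U}\to U$, $\hat{v}\to v$ exponentially and $v$ is bounded, $\Delta\to0$ exponentially. Thus $\dot{\tilde{x}}=(A+BK_x)\tilde{x}+B\Delta$ is a Hurwitz linear system driven by an exponentially decaying perturbation, so $\tilde{x}$ stays bounded and converges to zero exponentially. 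The tracking error then satisfies $e=Cx+Du+Fv=(C+DK_x)\tilde{x}+(CX+DU+F)v+D\Delta=(C+DK_x)\tilde{x}+D\Delta$ by the second equation in \eqref{sim-regulator-equation}, whence $\lim_{t\to\infty}e(t)=0$ exponentially. Boundedness of the complete closed loop follows from the cascade, since the estimator states converge (hence are bounded) and $\tilde{x}$, and therefore $x$ and $u$, are bounded.

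The step I expect to require the most care is the uniform convergence $\hat{v}\to v$ with simultaneous boundedness: it must be obtained through the local-Lipschitz dependence of $\Xi$ on $\hat{a}$ combined with the at-most-polynomial-in-$t$ bound on $\hat{a}$ from Theorem \ref{observerMfrequency}, so that the product $\|\Xi(\hat{a})-\Xi(a)\|\,\|\eta\|$ still decays despite the possibly growing transient of $\hat{a}$ — the same mechanism exploited in Corollary \ref{corra-parameter}. Once this is in hand, the remainder is a routine cascade of a Hurwitz linear system forced by exponentially vanishing inputs.
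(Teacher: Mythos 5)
Your proposal is correct and follows essentially the same route as the paper's proof: the same error coordinate $\bar{x}=x-Xv$, the same decomposition of the input mismatch into exponentially vanishing terms (the paper lumps your $\Delta$ into a single disturbance $d_x(t)$), the same invocation of Theorem \ref{observerMfrequency} and of \eqref{XUCOL} from Lemma 4.1 of \cite{wang2021cooperative}, and the same Hurwitz-system-with-vanishing-input conclusion using the regulator equations \eqref{sim-regulator-equation}. Your explicit treatment of $\hat v=\Xi(\hat a)\,\mbox{col}(\eta_1,\dots,\eta_n)$ and of closed-loop boundedness is only a cosmetic reorganization of the same argument.
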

\begin{proof} Let $\bar{x}=x-Xv$, $\bar{u}=u-Uv$, $\bar{X}(t)=\hat{X}(t)-X$, and $\bar{U}(t)=\hat{U}(t)-U$. Then, from \eqref{xi-solution-ex}, we have the equations,
\begin{align}
    \dot{\bar{x}}&= (A+BK_x)\bar{x}+Bd_x(t),\notag\\
   e &=(C+DK_x)\bar{x}+Dd_x(t),\label{baruinput}
\end{align}
where 
\begin{align*}d_x(t)=&-\big(\hat{U}(t)-K_x\hat{X}(t)\big)\Xi (a )\textnormal{\col}(\tilde{\eta}_{1}(t),\dots,\tilde{\eta}_{n}(t)) \\
&+\bar{U}(t)v(t)-K_x\bar{X}v(t)+\big(\hat{U}(t)-K_x\hat{X}(t)\big)\left(-\Xi (a)\right.\\
&\left.+\, \Xi (\bar{a}(t)+a)\right)\textnormal{\col}(\eta_{1}(t),\dots,\eta_{n}(t)).\end{align*}
By Theorem \ref{observerMfrequency},  $\lim\limits_{t\rightarrow\infty}\bar{a}(t)=0$ and $\lim\limits_{t\rightarrow\infty}\tilde{\eta}(t)=0$ exponentially. It is noted from \eqref{XUCOL} that $\lim\limits_{t\rightarrow\infty}
\textnormal{\col}(\bar{X}(t),\bar{U}(t))  = 0$ exponentially. Clearly, $d_x(t)$ converges to the origin exponentially. With $A+BK_x$ taken as a Hurwitz matrix, it can be concluded from  \eqref{baruinput} that $\lim\limits_{t\rightarrow\infty}\bar{x}(t)=0$. Therefore, $\lim\limits_{t\rightarrow\infty}e(t)=0$.
\end{proof}




\section{Numerical and Practical Examples}\label{NumPraEx}
\subsection{Example 1: Controlled Lorenz systems}
\begin{figure}[ht]
 \epsfig{figure=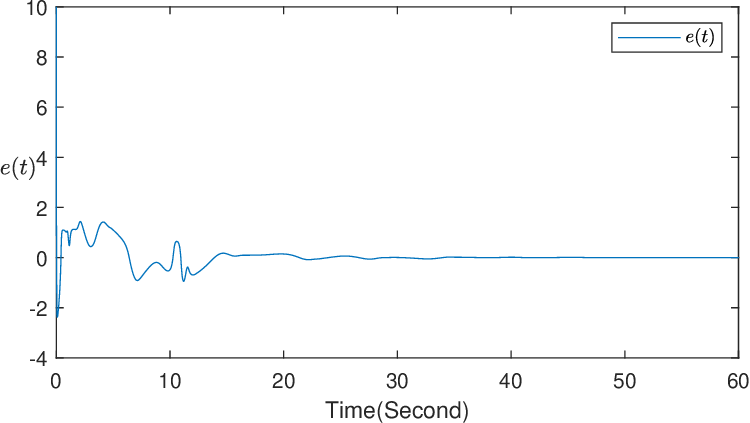,width=0.48\textwidth}
 \caption{Tracking error subject to the controller \eqref{cor-ESC-2}}\label{fige}
\end{figure} 

\begin{figure}[ht]
 \epsfig{figure=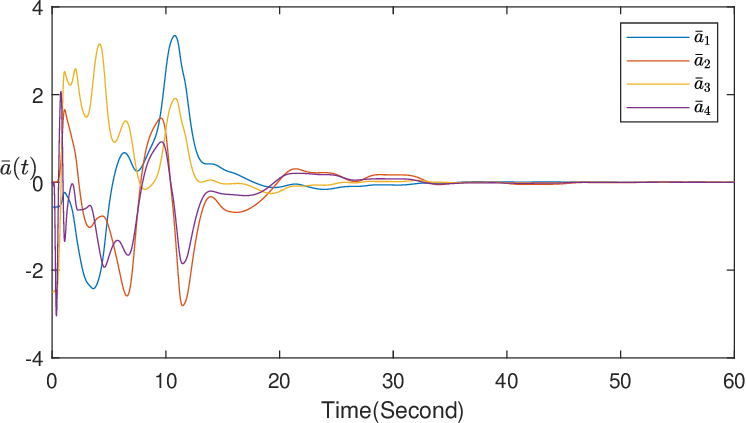,width=0.48\textwidth}
\caption{Parameter estimation error $\bar{a}$ subject to the controller \eqref{cor-ESC-2}}\label{figy}
\end{figure} 
We consider controlled Lorenz systems described by 
\begin{align}\label{numerica1}
\dot{z}=&\left[
                \begin{array}{cc}
                  -L_{1} & 0\\
                  y& -L_{2} \\
                \end{array}
              \right]\!z +\left[
                           \begin{array}{c}
                             L_{1}y \\
                             0 \\
                           \end{array}
                         \right],\nonumber\\
 \dot{y}=&\left[1, 0\right]z \left(L_{3}-\left[0, 1\right]z \right)-y +b(v,w)u,  \nonumber\\
 e =&y-q(v),
\end{align}
where $z=\col(z_{1},z_{2})$ and $y$ are the state, $L =\textnormal{\col}\left(L_{1},L_{2},L_{3}\right)$ is a constant parameter vector that satisfies $L_{1}>0$, $L_{3}<0$. The term $b(v,w)$ is nonzero and lower bounded by some positive constant $b^*$ for any $\mu\in \mathds{V}\times \mathds{W} \times \mathds{S}$. For convenience, let $L =\bar{L} +w_i $, where $\bar{L} =\col(\bar{L}_{1},\bar{L}_{2},\bar{L}_{3})$ is the true value of $L$ and $w =\col(w_{1},w_{2},w_{3})$ is the uncertain parameter of $L$.
We assume that the uncertainty $w_i \in \mathds{W}_i\subseteq\mathds{ R}^3$. The exosystem is described by system \eqref{Exosys1} with 
\begin{align*}
    s(v,\sigma)&=\begin{bmatrix}
    0 & \sigma\\
    -\sigma & 0
   \end{bmatrix} v,\\ q(v)&=\begin{bmatrix}1 & 0\end{bmatrix}v,
\end{align*}
where $\sigma$ is some unknown parameter. Under Assumptions \ref{H1}, \ref{ass1-explicit}, and \ref{ass4-explicit}, it has been shown in \cite{xu2010robust} that there exists an explicit solution of $\textbf{u}(v,\sigma,w)$, polynomial in $v$, satisfying
\begin{align*}
\frac{d^{4}\textbf{u}}{dt^{4}}+a_1 \textbf{u}+a_2 \frac{d\textbf{u}}{dt}+a_3 \frac{d^{2}\textbf{u}}{dt^{2}}+a_4\frac{d^{3}\textbf{u}}{dt^{3}}=0,
\end{align*}
with unknown true value vector $a=\col(9\sigma^4, 0,10 \sigma^2,0)$ in \eqref{stagerator}. For the control law \eqref{cor-ESC-2}, we can choose $\rho(e)=e^2+1$, $\mu=1$ $m_1=1$, $m_2=5.1503$, $m_3=13.301$,
$m_4=22.2016$, $m_5=25.7518$, $m_6=21.6013$, $m_7=12.8005$, and $m_8=5.2001$. The simulation starts with the initial conditions
$x(0)=\col(3,-1)$, $z(0)=0$, $v(0)=\col(10,2)$, and $\hat{k}(0)=1$. 

Fig.\ \ref{fige} shows the trajectory of the tracking error  $e =y -h(v, w)$ over the controller \eqref{cor-ESC-2}.
Fig.\ \ref{figy} shows the trajectories of parameter estimation error $\bar{a} =\hat{a} -a$ for the controller \eqref{cor-ESC-2}. The results demonstrate the effectiveness of the proposed approach.

\subsection{Example 2: Application to quarter-car active automotive suspension systems subject to unknown road profiles}

 \begin{figure}[htbp]\pagestyle{empty}
 \centering
\includegraphics[width=0.48\textwidth]{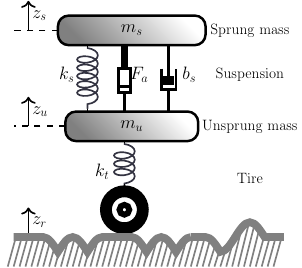}  
\caption{Quarter-car active automotive suspension system}
\label{QCAASS}
\end{figure}

\begin{figure}[ht]
 \epsfig{figure=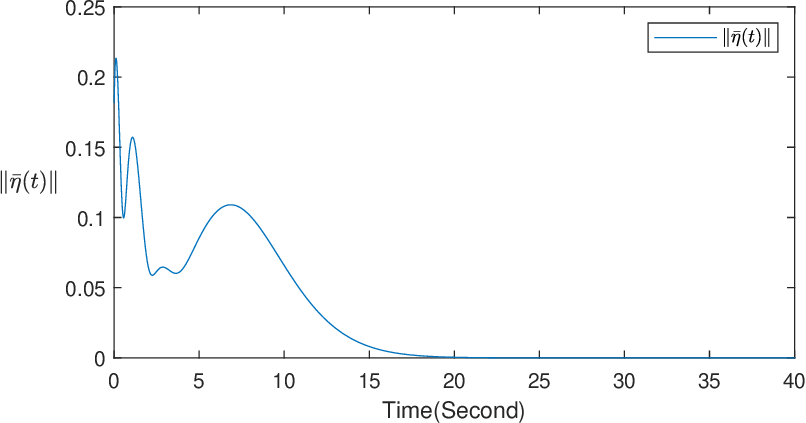,width=0.48\textwidth}
 \caption{Estimation error $\tilde{\eta}$ for Example 2}\label{eta-error}
 \end{figure} 
 \begin{figure}[ht]
 \epsfig{figure=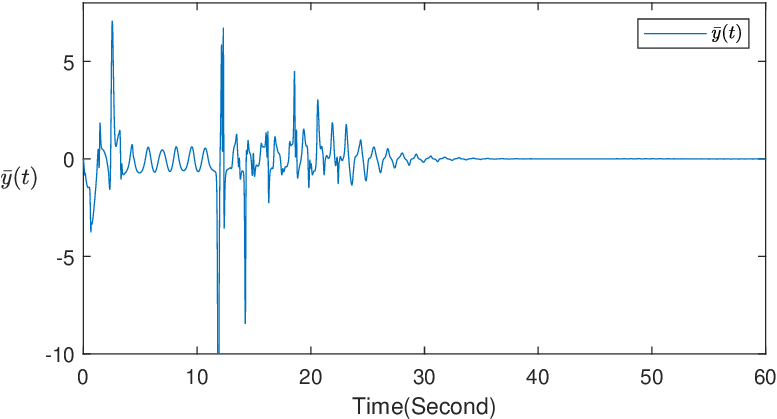,width=0.48\textwidth}
 \caption{Output estimation error for Example 2}\label{figpara-error}
\end{figure} 

\begin{figure}[ht]
 \epsfig{figure=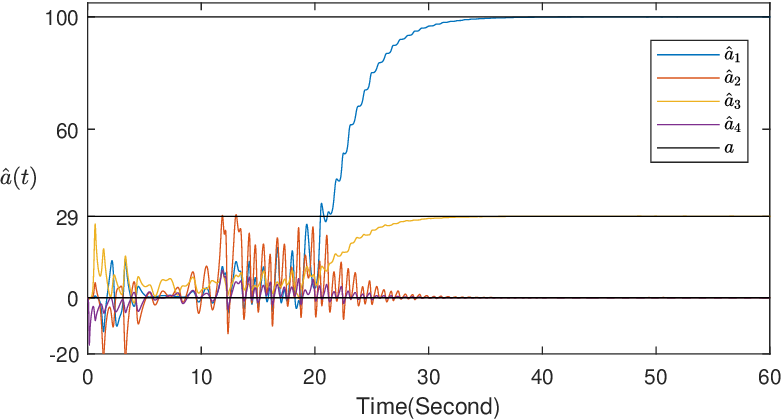,width=0.48\textwidth}
\caption{Trajectories of $\hat{a}(t)$ for Example 2}\label{figpara-true}
 \end{figure} 

\begin{figure}[ht]
 \epsfig{figure=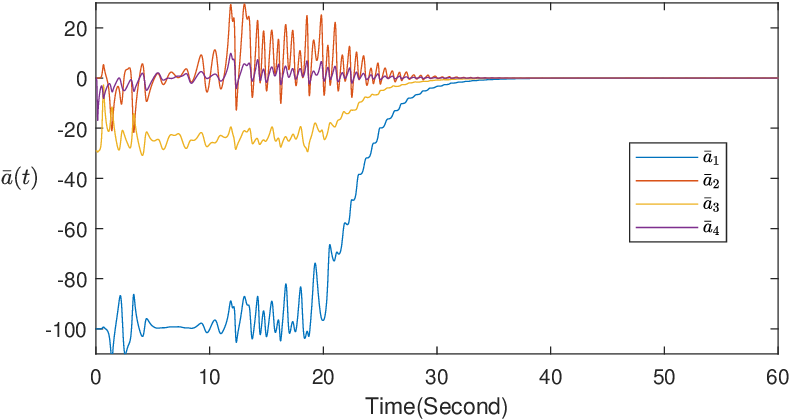,width=0.48\textwidth}
\caption{Parameter estimation error $\bar{a}(t)$ for Example 2}\label{figpara}
 \end{figure} 

\begin{figure}[ht]
\epsfig{figure=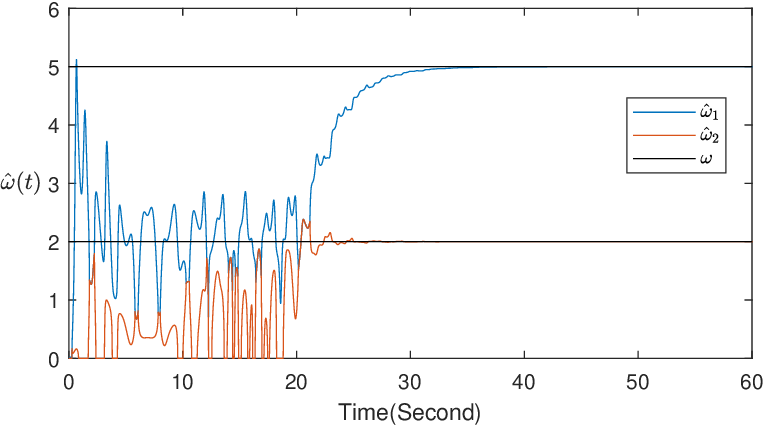,width=0.48\textwidth}
\caption{\myr Frequency estimation performance $\hat{\omega}(t)$ for Example 2}\label{figomegaPara}
\end{figure} 

 \begin{figure}[htbp]
 \centering
 \epsfig{figure=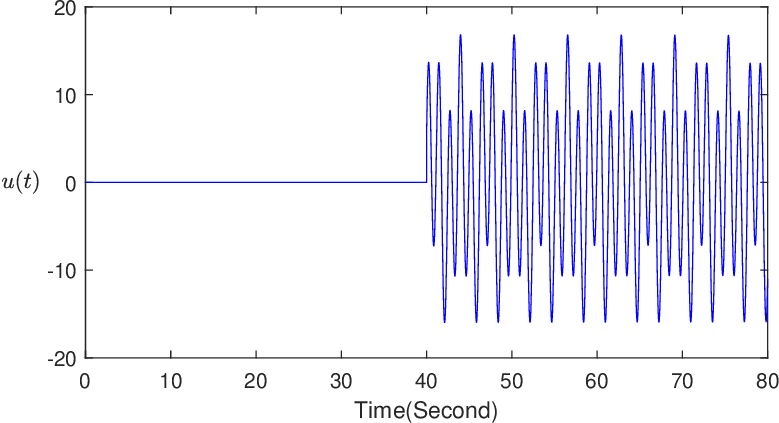,width=0.48\textwidth}  
  \caption{Control input of active automotive suspension system.}
 \end{figure} 

\begin{figure}[ht]
    \epsfig{figure=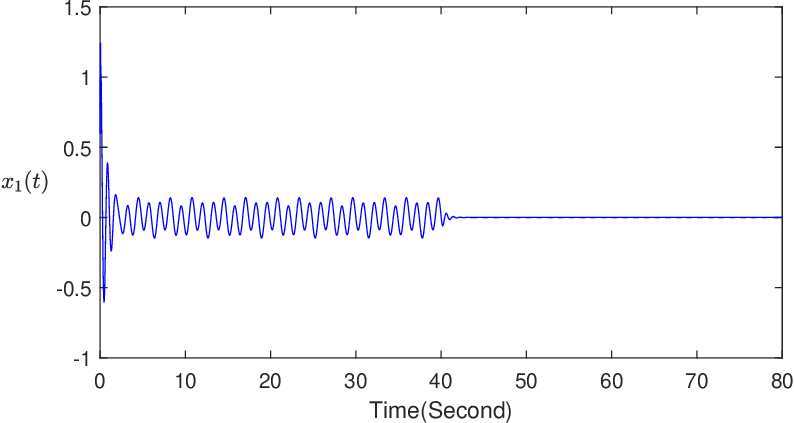,width=0.48\textwidth}  
  \caption{Trajectory of suspension deflection without and with active suspension.}\label{figessec}
\end{figure}

We borrow a modified example from \cite{yue1988alternative} and \cite{sferlazza2021state}  to illustrate our parameter estimation and active feedforward control design.
Consider the quarter-car active automotive suspension system described in Fig.~\ref{QCAASS}, which represents the automotive system at each wheel and is monitored by two separate sensors. The system has a spring $k_s$, a damper $b_s$, and an active force actuator $F_a$ as the input $u$. The sprung mass $m_s$ represents the quarter-car equivalent of the vehicle body mass. The unsprung mass $m_u$ represents the equivalent mass due to the axle and tire. The spring  $k_t$ represents the vertical stiffness of the tire. The variables
$z_s$, $z_u$ and $z_r$ are the vertical displacements from the static
equilibrium of the sprung mass, unsprung mass and the road, respectively.

The dynamics of this system are governed by
\begin{align}\label{qacau}
\dot{x}&=Ax+Bu+B_d\dot{z}_r,\notag\\
y_c&=x_{1},
\end{align}
where $u$ is the active force of the actuator, $\dot{z}_r$ is an input describing how the road profile enters into the system, and the state vector $x=\col(x_{1}, x_{2}, x_{3}, x_{4})$ includes the suspension deflection $x_{1}=z_s-z_u$, the absolute velocity $x_{2}=\dot{z}_s$ of the sprung mass $m_s$, the tire deflection $x_{3}=z_u-z_r$, and the absolute velocity $x_{4}=\dot{z}_u$ of the unsprung mass $m_u$. The matrices $A$, $B$, and $B_d$ are given by 
\begin{align*}A&=\left[
        \begin{array}{cccc}
          0 & 1 & 0 & -1 \\
          \frac{-k_s}{m_s} & \frac{-b_s}{m_s} & 0 & \frac{b_s}{m_s}\\
          0 & 0 & 0 & 1 \\
          \frac{k_s}{m_u} & \frac{b_s}{m_u}& \frac{-k_t}{m_u} & \frac{b_s+b_t}{-m_u} \\
        \end{array}
      \right],\\ 
                        B&=\col\big(0,\frac{1}{m_s}, 0, \frac{-1}{m_s}\big), \\
                        B_d&=\col\big(0, 0, -1, \frac{b_t}{m_u}\big). 
                        \end{align*}
where $m_s=2.40$, $m_u=0.36$, $b_s=9.8$, $k_s=160$, $k_t=1600$ and $b_t=0$. 
We assume that the road profile can be measured by the car and viewed as a disturbance described by \eqref{Exosys1} with \begin{align*}s(v,\sigma)&=\Phi (a)v,\\
\dot{z}_r&=\Gamma v\,{\myr \equiv}\,y_0,\end{align*}
where $v\in \mathds{R}^4$ is the state of the exosystem and has the characteristic polynomial $$\varsigma^4+a_4 \varsigma^2+a_3\varsigma^2+a_2\varsigma+a_1$$ with unknown true value vector $a=\col(100, 0, 29, 0)$ in \eqref{stagerator}. {\myr As in \cite{marino2002global}, the output of the exosystem is given by $$y_0= d_{1}\sin( \omega_1 t+\phi_{1})+d_{2}\sin(\omega_2 t+\phi_2)$$ where $\omega_1=2$ and $\omega_2=5$ are the unknown frequencies, $d_{1}$ and $d_2$ are the unknown amplitudes, and $\phi_1$ and $\phi_2$ are the unknown phases. The Laplace transform of the multiple sinusoidal functions is 
\begin{align*}\frac{d_1 s \sin(\phi_1)+d_1\omega_1 \cos(\phi_1)}{s^2+\omega_1^2}&+\frac{d_2 s \cos(\phi_2)-d_2\omega_2 \sin(\phi_2)}{s^2+\omega_2} \\
=&\frac{B(s)}{s^4+\underbrace{(\omega_1^2+\omega_2^2)}_{a_3}s^2+\underbrace{\omega_1^2\omega_2^2}_{a_1}}\end{align*}
where $B(s)$ is a third-order polynomial. As a result, the pair $(\omega_1,\omega_2)$ satisfies the relationship:
$$ \omega_{1}\;\textnormal{or}\;\omega_{2}=\sqrt{\frac{a_3 \pm \sqrt{a_3^2-4a_1}}{2}}.$$}
We choose $K_x=0$ in \eqref{feedforward-contol}, $\mu=10^7$, $k_2=8000$, $m_1=1$, $m_2=8$, $m_3=28$,
$m_4=56$, $m_5=70$, $m_6=56$, $m_7=28$ and $m_8=8$. 
 The simulation starts with the initial conditions: 
 $\eta(0)=0$ $v(0)=\col(0;7;0;-133)$, $\hat{\zeta}_{a}=0$, $\hat{a}=0$ and $x=\col(0.5962,6.8197,0.4243,0.7145)$. 


Fig.\ \ref{eta-error} shows the trajectory of estimation error  $\tilde{\eta} =\eta -{\eta}^{\star}$.
Fig.\ \ref{figpara-error} contains the output estimation error trajectories, $\bar{y} =\hat{y} -y_0$.
The parameter estimates $\hat{a}$ are shown in Fig.\  \ref{figpara-true}, with the corresponding parameter estimation error, $\bar{a} =\hat{a} -a$, given in Fig.\  \ref{figpara}.
The frequency estimates $\hat{\omega}$ are shown in Fig.\ \ref{figomegaPara}. 
Finally, Fig.\ \ref{figessec} shows the trajectory of suspension deflection without or with active suspension. The control signal, $u(t)$, is active for $t\geq 40$. 


\section{Conclusion}\label{Sec-Conclusion}%
{ The global robust output regulation problem for a class of nonlinear systems with output feedback is solved using the proposed nonparametric learning solution framework for a linear generic internal model design. 
The approach relaxes the linearity assumption from \cite{xu2019generic} to handle a steady-state generator that is polynomial in the exogenous signal. In doing so, we demonstrate that the persistency of excitation remains a necessary condition in the generic internal design by using the results in \cite{liu2009parameter}.
The nonparametric learning framework can solve the linear time-varying equation online, ensuring that a suitable nonlinear continuous mapping always exists.
With the help of the proposed framework, the robust nonlinear output regulation problem is converted into a robust non-adaptive stabilization problem for the augmented system with integral Input-to-State Stable (iISS) inverse dynamics. 
%
%
We also apply the nonparametric learning framework to globally estimate the unknown frequencies, amplitudes, and phases of the $n$ sinusoidal
components with bias independent of using adaptive techniques that differ from traditional adaptive control methods used in available approaches \cite{marino2002global,xia2002global,hou2011parameter,pin2019identification}. 
The results also overcome the global frequency estimation requirements \cite{praly2006new}. This extends the frequency estimation to the sum of a bias and $n$ sinusoidal and removes the assumption that the actual frequency lies within a known compact set.
We have also shown that the explicit nonlinear continuous mapping can directly provide the estimated parameters after a specific time and will exponentially converge to the unknown frequencies.
The nonparametric learning framework naturally avoids the need for the construction of a Lyapunov function or an input-to-state stability (ISS) Lyapunov function for the closed-loop system with positive semidefinite derivatives.
Finally, based on this non-adaptive observer, we synthesize a feedforward control design to solve the output regulation using the proposed nonparametric learning framework using a certainty equivalence principle.

}


\section{Appendix}\label{Appendix}%
\subsection{The Proof of Properties \ref{property1} and \ref{property2}}\label{AppendixA}%
\begin{proof}%
We first show that Property \ref{property1} is satisfied.
Under Assumption \ref{H2}, there exists a continuous function $V_{\bar{z}}(\bar{z})$ satisfying 
$$\underline{\alpha}_{\bar{z}}(\|\bar{z} \|)\leq V_{\bar{z} }(\bar{z}) \leq \overline{\alpha}_{\bar{z}}(\|\bar{z} \|)$$
for some class $\mathcal{K}_{\infty}$ functions $\underline{\alpha}_{\bar{z}}(\cdot)$ and $\overline{\alpha}_{\bar{z}}(\cdot)$ such that, for any $v\in \mathds{V}$, along the trajectories of the $\bar{z} $ subsystem,
$$\dot{V}_{\bar{z} }\leq-\alpha_{\bar{z}}(\|\bar{z} \|)+ \gamma (e),$$
where $\alpha_{\bar{z}}(\cdot)$ is some known class $\mathcal{K}_{\infty}$ function satisfying $\limsup\limits_{\varsigma\rightarrow 0^{+}}(\alpha_{\bar{z}}^{-1}(\varsigma^2)/\varsigma)< + \infty$, and $\gamma (\cdot)$ is some known smooth positive definite function.

Define the Lyapunov function candidate $V_{\bar{\eta}}(\bar{\eta})=\bar{\eta}^\top P\bar{\eta}$, where $P$ is the unique positive definite symmetric matrix that satisfies $PM+M^\top P=-2I$. Let $\lambda_p$ and $\lambda_P$ be the minimum and maximum eigenvalues of $P$.
The time
derivative of $V_{\bar{\eta}}(\bar{\eta})$ along the trajectories of \eqref{augmen-1b} are
\begin{align}
\dot{V}_{\bar{\eta}}\leq&-\|\bar{\eta}\|^2+\|P\bar{p}(\bar{z} ,e ,\mu)\|^2.\nonumber
\end{align}
Since $\bar{p}\!\left(0 ,0 ,v\right)=0$, for all $v\in \mathds{V}$, by Lemma 7.8 in \cite{huang2004nonlinear},
$$\|P\bar{p} \big(\bar{z}, e ,v\big)\|^2\leq\pi_1 (\bar{z})\|\bar{z} \|^2+\phi_1(e )e^2$$
for some known smooth functions $\pi_1(\cdot)\geq 1$ and $\phi_1(\cdot)\geq 1$. Then, we have 
\begin{align}
\dot{V}_{\bar{\eta}}\leq&-\|\bar{\eta}\|^2+\pi_1 (\bar{z} )\|\bar{z} \|^2+\phi_1 (e )e^2.\nonumber
\end{align}

Define the Lyapunov function $U_{\bar{z}}(\bar{z})=\int_{0}^{V_{\bar{z}}(\bar{z})}\kappa(s)ds$, where the positive function $\kappa(\cdot)$ will be specified later. The time
derivative of $U_{\bar{z}}(\bar{z})$ along \eqref{augmen-1b} can be evaluated as
\begin{align}
    \dot{U}_{\bar{z}}\leq -\kappa\circ V_{\bar{z}}(\bar{z}) \left[\alpha_{\bar{z}}(\|\bar{z} \|)+ \gamma (e)\right].\nonumber
\end{align}
By the changing supply rate technique \cite{sontag1995changing}, we will have
\begin{align}\label{UtildeZSN}
    \dot{U}_{\bar{z}} \leq& -\frac{1}{2}\kappa\circ \underline{\alpha}_{\bar{z}}(\|\bar{z} \|) \alpha_{\bar{z}}(\|\bar{z} \|)+\kappa \circ\theta(e) \gamma (e),
\end{align}
where {\myr $\theta(\cdot)$ is a smooth positive definite function} defined as $\theta\,{\myr \equiv}\,\bar{\alpha}_{\bar{z}} \circ \alpha_{\bar{z}}^{-1}\circ (2 \gamma)$. It is noted that $\alpha_{\bar{z}}(\cdot)$ is some known class $\mathcal{K}_{\infty}$ function satisfying $\limsup\limits_{\varsigma\rightarrow 0^{+}}\,(\alpha_{\bar{z}}^{-1}(\varsigma^2)/\varsigma)< + \infty$. Then, there exists a smooth function $\alpha_0(\|\bar{z} \|)$ such that 
$$\alpha_0(\|\bar{z} \|)\alpha_{\bar{z}}(\|\bar{z} \|)\geq \|\bar{z} \|^2.$$
As $\limsup\limits_{\varsigma\rightarrow 0^{+}}(\alpha_{\bar{z}}^{-1}(\varsigma^2)/\varsigma)< + \infty$, there exists constant $l_1\geq1$ such that $\alpha_{\bar{z}}(\|\bar{z} \|) \geq \|\bar{z} \|^2/l_1^2$ for all $ \|\bar{z} \| \leq 1$. Besides, $\alpha_{\bar{z}}(\|\bar{z} \|)$ is of class $\mathcal{K}_{\infty}$, there exists a constant $l_2> 0$ such that $\alpha_{\bar{z}}(\|\bar{z} \|)\geq l_2$ for all $\varsigma\geq 1$. Hence, we will have $$\alpha_0(\|\bar{z} \|)\alpha_{\bar{z}}(\|\bar{z} \|)\geq \|\bar{z} \|^2$$ for any $\alpha_0(\|\bar{z} \|)\geq l_1^2+l_2\|\bar{z} \|^2$. We can choose a positive smooth non-decreasing function $\kappa(\cdot)$ such that $$\frac{1}{2}\kappa\circ \underline{\alpha}_{\bar{z}}(\|\bar{z} \|) \geq \alpha_0(\|\bar{z} \|)
\times(\pi_1 (\bar{z} )+1).$$
Let us consider the $\bar{Z}$-subsystem of \eqref{augmen-1a} and \eqref{augmen-1b}. 
Define the Lyapunov function candidate $V_0(\bar{Z})=U_{\bar{z}}(\bar{z})+V_{\bar{\eta}}(\bar{\eta})$.
The time
derivative of $V_0(\bar{Z})$ along the trajectories of \eqref{augmen-1a} and \eqref{augmen-1b} are given by
\begin{align}\label{zdlya}
    \dot{V}_0\leq&-\frac{1}{2}\kappa\circ {\myr \underline{\alpha}_{\bar{z}}(\|\bar{z} \|) \alpha_{\bar{z}}(\|\bar{z} \|)}+\kappa \circ\theta(e) \gamma (e)\nonumber\\
    &-\|\bar{\eta}\|^2+\pi_1 (\bar{z})\|\bar{z} \|^2+\phi_1 (e )e^2\nonumber\\
    \leq& -\|\bar{Z} \|^2+ \bar{\gamma}(e),
\end{align}
where $\bar{\gamma}(e)=\kappa \circ\theta(e) \gamma(e)+\phi_1 (e )e^2$. Using Lemma 11.3 in \cite{chen2015stabilization}, we can choose class $\mathcal{K}_{\infty}$ functions $\underline{\alpha}_{0}\left(\cdot\right)$ and $\overline{\alpha}_{0}\left(\cdot\right)$ such that
\begin{align*}\underline{\alpha}_{0}\left(\|\bar{Z} \|\right)\leq \int_{0}^{\underline{\alpha}_{\bar{z}}\left(\|\bar{z} \|\right)}\kappa(\tau)d\tau+\lambda_p\|\bar{\eta}\|^2,\\
\overline{\alpha}_{0}\left(\|\bar{Z} \|\right) \geq \int_{0}^{\overline{\alpha}_{\bar{z}}\left(\|\bar{z} \|\right)}\kappa(\tau)d\tau+\lambda_P\|\bar{\eta}\|^2.
\end{align*}
We now verify Property \ref{property2}. 
It can be verified that the function $\bar{g}(\bar{z}, e, \bar{\eta}, \mu)$ is smooth and vanishes at $\col(\bar{z}, e, \bar{\eta})=\col(0,0,0)$. By \cite[Lemma 11.1]{chen2015stabilization}, there exist positive smooth functions $\gamma_{g0}(\cdot)$ and $\gamma_{g1}(\cdot)$ such that
\begin{align*}\|\bar{g}(\bar{Z},e,\mu)\|^2\leq& \gamma_{g0}(\bar{Z})\|\bar{Z}\|^2+e^2\gamma_{g1}(e)
\end{align*}
for $\mu\in \mathds{V}\times \mathds{W}\times \mathds{S}$. This completes the proof.
\end{proof}

\subsection{Proof of Theorem \ref{Theorem-1}}\label{AppendixC}%
\begin{proof}%
The error dynamics \eqref{augmen-1} with control \eqref{ESC-1} are given by
\begin{equation}\label{av1Esc-1}\begin{aligned}
 \dot{\bar{z}} &= \bar{f}\!\left(\bar{z},e,\mu\right), \\
    \dot{\bar{\eta}}&= M\bar{\eta}+\bar{p}(\bar{z},e,\mu),\\
\dot{e}&=\bar{g}\left(\bar{z},\bar{\eta},e,\mu\right)+ \bar{b}(\bar{z},e,\mu)\left(\bar{\chi}(\bar{\eta},e,\mu) -k e{\rho}(e)\right),
\end{aligned}\end{equation}
where $\bar{\chi}(\bar{\eta},e,\mu)= \chi\!\left(\bm{\eta}^{\star}+\bar{\eta}+\bar{b}(\bar{z},e,\mu)^{-1}Ne\right) -\chi(\bm{\eta}^{\star})$. 
From \cite{KE2003} or \cite{marconi2008uniform}, we have that $\chi(\cdot)$ is a {\myr continuously differentiable function.} 
 It can also be verified that the function {\myr $\bar{b}(\bar{z},e,\mu)\bar{\chi}(\bar{\eta},e,\mu)$} is continuous and vanishes at $\col(\bar{z}, e, \bar{\eta})=\col(0,0,0)$.
 By \cite[Lemma 11.1]{chen2015stabilization}, there exist positive smooth functions $\gamma_{1\chi}(\cdot)$ and $\gamma_{2\chi}(\cdot)$ such that
\begin{align*}{\myr |\bar{b}(\bar{z},e,\mu) \bar{\chi}(\bar{\eta},e,\mu)|^2}\leq&  \gamma_{1\chi}(\bar{Z})\|\bar{Z}\|^2+ e^2\gamma_{2\chi}(e)
\end{align*}
for $\mu\in \mathds{V}\times \mathds{W}\times \mathds{S}$.
From Property \ref{property1}, with the same development and by using the changing supply rate technique \cite{sontag1995changing} again, given any smooth function $\Theta_z (\bar{Z} )>0$, there exists a continuous function $V_{2}(\bar{Z})$ satisfying
$$\underline{\alpha}_{2}\big(\big\|\bar{Z} \big\|^2\big)\leq V_{2}\big( \bar{Z}  \big)\leq\overline{\alpha}_{2}\big(\big\|\bar{Z} \big\|^2\big)$$
 for some class $\mathcal{K}_{\infty}$ functions $\underline{\alpha}_{2}(\cdot)$ and $\overline{\alpha}_{2}(\cdot)$, such that, for all $\mu\in \Sigma$, along the trajectories of the $Z $ subsystem, $$\dot{V}_{2} \leq-\Theta_z (\bar{Z} )\big\|\bar{Z} \big\|^2+ \hat{\gamma} \left(e\right)e^2, $$
where $\hat{\gamma} (\cdot)\geq 1$ is some known smooth positive function.
Next, we consider the augmented system \eqref{av1Esc-1}  
and pose the following Lyapunov function candidate:
\begin{align}
V(\bar{Z},e)=V_2(\bar{Z})+e^2.\nonumber
\end{align}
It follows that $V(\bar{Z},e)$ is globally positive definite and radially unbounded. From Property \ref{property2}, we have that the derivative of $V$ along the trajectories of the system \eqref{av1Esc-1} subject  to the controller \eqref{ESC-1} satisfies
\begin{align}
\dot{V}&=\dot{V}_2+e^\top\dot{e}\nonumber\\
&\leq\dot{V}_2+e^2+\gamma_{g0}(\bar{Z})\|\bar{Z}\|^2+e^2\gamma_{g1}(e)-2k\bar{b}(\bar{z},e,\mu)e^2\rho(e)\nonumber\\
&\leq-(\Theta_z(\bar{Z})-\gamma_{g0}(\bar{Z})-\gamma_{1\chi}(\bar{Z}))\big\|\bar{Z}\big\|^2\nonumber\\
&\quad -\left(2b(\mu)k\rho(e)-\hat{\gamma} \left(e\right)-1-\gamma_{g1}(e)-\gamma_{2\chi}(e)\right)e ^2.
 \end{align}
It is noted that $b(\bar{z},e,\mu)$ is lower bounded by some positive constant $b^*$. Let the smooth functions $\Theta_z(\cdot)$, ${\rho}(\cdot)$, and the positive number ${k}$ be such that $\Theta_z (\bar{Z})\geq \gamma_{g0} (\bar{Z} )+\gamma_{1\chi}(\bar{Z})+1$ and $\rho(e)\geq \{\hat{\gamma} \left(e \right), \gamma_{2\chi}(e),\gamma_{g1}(e), 1\}$, and $k \geq \frac{4}{2b^*}\equiv k^*$. As a result, we obtain
$$\dot{V}\leq -\big\|\bar{Z}\big\|^2-e^2.$$
It can then be concluded that system \eqref{av1Esc-1} is globally uniformly asymptotically stable for all $\mu \in \mathds{V}\times \mathds{W}\times \mathds{S}$.
 This completes the proof.

\end{proof}

\bibliographystyle{ieeetr}
\bibliography{myref}

\begin{IEEEbiography}[{\includegraphics[width=1in,height=1.25in, clip,keepaspectratio]{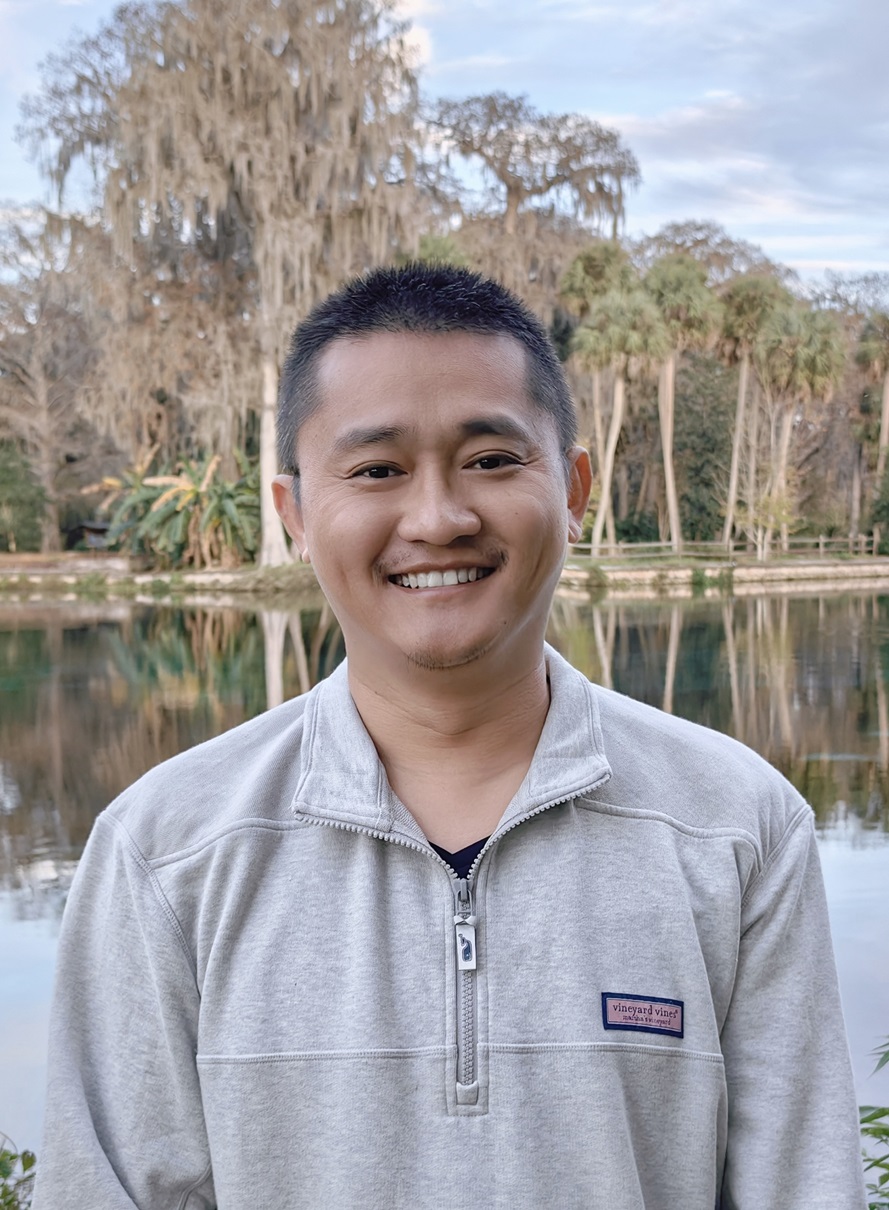}}]{Shimin Wang} 
received the B.Sci. in Mathematics and Applied Mathematics and M.Eng. in Control Science and Control Engineering from Harbin Engineering University in 2011 and 2014, respectively. He then received the Ph.D. in Mechanical and Automation Engineering from The Chinese University of Hong Kong in 2019. 

He was a recipient of the NSERC Postdoctoral Fellowship award in 2022. From 2014 to 2015, he
was an assistant engineer at the Jiangsu Automation
Research Institute, China State Shipbuilding Corporation Limited. From 2019
to 2023, he held postdoctoral positions at the University of Alberta and Queens University. He is
now a postdoctoral associate
at the Massachusetts Institute of Technology.
\end{IEEEbiography}

\begin{IEEEbiography}[{\includegraphics[height=1.25in, clip,keepaspectratio]{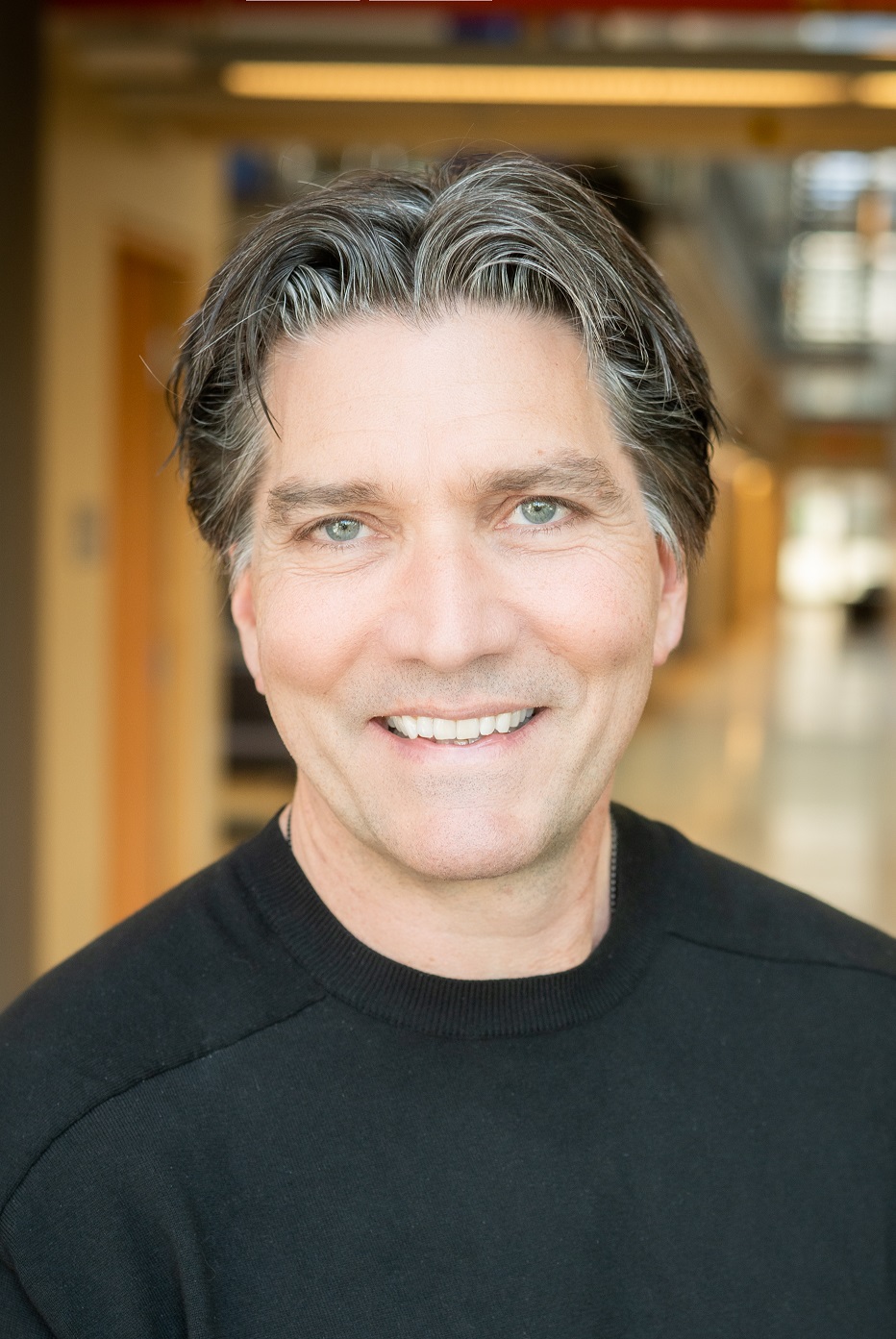}}]{Martin Guay} received a Ph.D. from Queen’s University, Kingston, ON, Canada in 1996. He is currently a Professor in the Department of Chemical Engineering at Queen’s University. His current research interests include nonlinear control systems, especially extremum-seeking control, nonlinear model predictive control, adaptive estimation and control, and geometric control. 

He was a recipient of the Syncrude Innovation Award, the D. G. Fisher from the Canadian Society of Chemical Engineers, and the Premier Research Excellence Award. He is a Senior Editor of IEEE Control Systems Letters. He is the Editor-in-Chief of the Journal of Process Control. He is also an Associate Editor for Automatica, IEEE Transactions on Automatic Control and the Canadian Journal of Chemical Engineering.
  \end{IEEEbiography}

\begin{IEEEbiography}[{\includegraphics[width=1in,height=1.25in, clip,keepaspectratio]{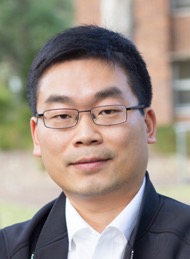}}]{Zhiyong Chen} received the B.E. in Automation from the University of Science and Technology of China, Hefei, China, in 2000, and the M.Phil. and Ph.D. in Mechanical and Automation Engineering from the Chinese University of Hong
Kong, in 2002 and 2005, respectively. He worked as a Research Associate at the University of Virginia, Charlottesville, VA, USA, from 2005 to 2006. In 2006, he joined the University of Newcastle, Callaghan, NSW, Australia, where he is currently a Professor. 

He was also a Changjiang Chair Professor at Central South University, Changsha, China. His research interests include
nonlinear systems and control, biological systems, and reinforcement learning.
He is/was an Associate Editor of Automatica, IEEE Transactions on
Automatic Control, IEEE Transactions on Neural Networks and Learning
Systems, and IEEE Transactions on Cybernetics.
\end{IEEEbiography}

\begin{IEEEbiography}[{\includegraphics[width=1in,height=1.25in, clip,keepaspectratio]{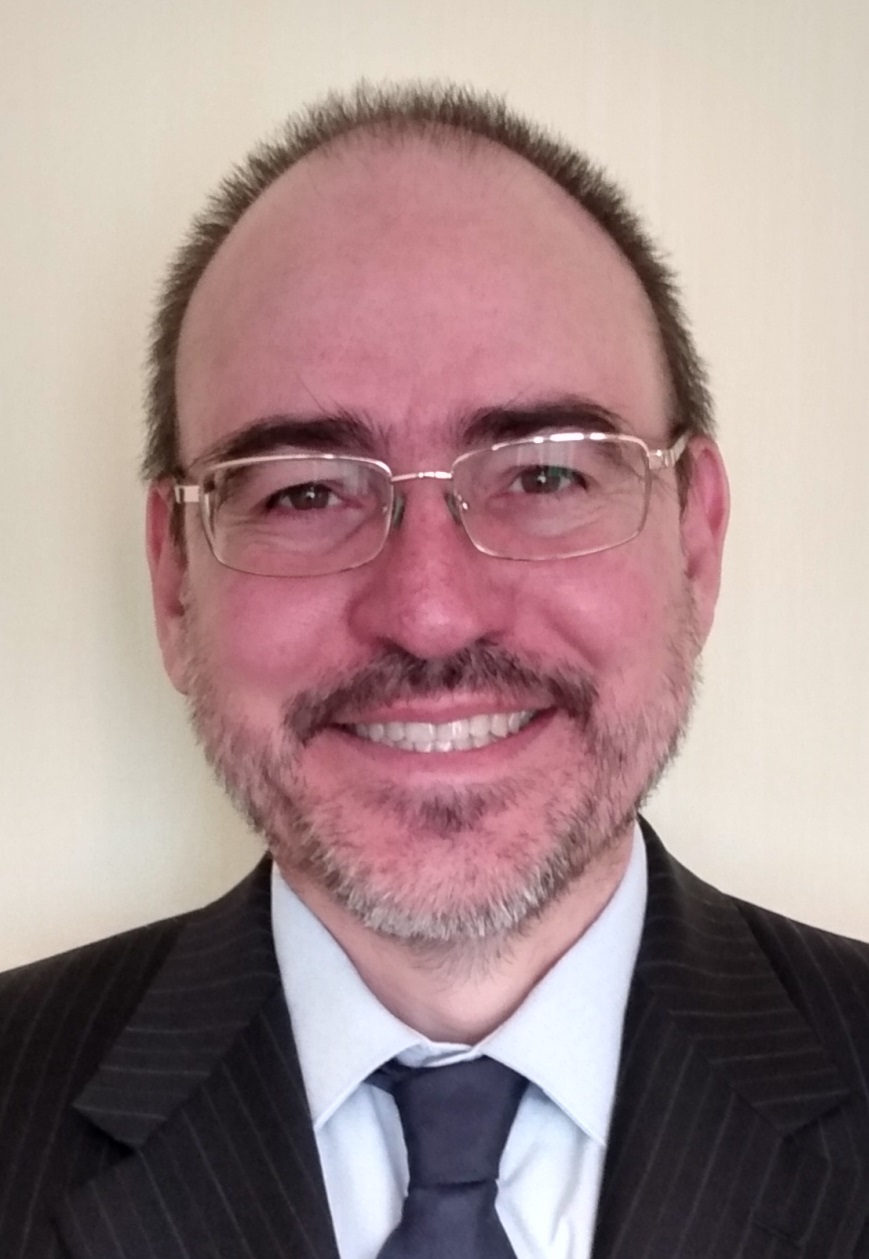}}]{Richard D. Braatz} is the Edwin R. Gilliland Professor at the Massachusetts Institute of Technology (MIT) where he does research in applied mathematics and robust optimal control theory and their application to advanced manufacturing systems. He received an M.S. and Ph.D. from the California Institute of Technology and was on the faculty at the University of Illinois at Urbana–Champaign and was a Visiting Scholar at Harvard University before moving to MIT. He is a past Editor-in-Chief of IEEE Control Systems Magazine and a past President of the American Automatic Control Council, and is a Vice President of IFAC. Honors include the AACC Donald P. Eckman Award, the Curtis W. McGraw Research Award from the Engineering Research Council, the Antonio Ruberti Young Researcher Prize, and best paper awards from IEEE- and IFAC-sponsored control journals. He is a member of the U.S. National Academy of Engineering and a Fellow of IEEE and IFAC.
\end{IEEEbiography}

\end{document}